\documentclass[a4paper,11pt]{article}
\usepackage{amsmath,amsfonts}
  \topmargin -0.4in  \headsep 0.4in  \textheight 9.0in
   \oddsidemargin 0.02in  \evensidemargin 0.15in  
\textwidth 6.3in

\newtheorem{theorem}{Theorem}[section]
\newtheorem{definition}[theorem]{Definition} 
\newtheorem{lemma}[theorem]{Lemma} 
\newtheorem{proposition}[theorem]{Proposition} 
\newtheorem{corollary}[theorem]{Corollary} 
 
\newtheorem{remark}[theorem]{Remark}


\newenvironment{proof}[1][Proof]{{\sc #1.} }{\hfill $\Box$}
\newcommand{\wW}{\widehat W}
\newcommand{\dist}{\mathop{\mathrm{dist}}\nolimits}
\newcommand{\expect}{\mathbb{E}}
\newcommand{\smallsetminus}{\backslash}
\newcommand{\R}{\mathbb{R}}
\newcommand{\YY}{\mathbb{Y}}
\newcommand{\XX}{\mathbb{X}}
\newcommand{\ZZ}{\mathbb{Z}}
\newcommand {\C}  {\ensuremath {\mathcal{C}}}
\newcommand {\A}  {\ensuremath {\mathcal{A}}}
\newcommand {\D}  {\ensuremath {\mathcal{D}}}
\newcommand {\V}  {\ensuremath {\mathcal{V}}}
\newcommand {\G}  {\ensuremath {\mathcal{G}}}
\newcommand {\T}  {\ensuremath {\mathcal{T}}}
\newcommand {\U}  {\ensuremath {\mathcal{U}}}

\newcommand{\E}{\mathop{\mathrm{{\cal E}}}\nolimits}
\newcommand{\F}{\mathop{\mathrm{{\cal F}}}\nolimits}
\newcommand{\J}{\mathop{\mathrm{{\cal J}}}\nolimits}
\newcommand{\K}{\mathop{\mathrm{{\cal K}}}\nolimits}
\newcommand{\W}{\mathop{\mathrm{{\cal W}}}\nolimits}
\newcommand{\WW}{\mathop{\mathrm{\widehat{\cal W}}}\nolimits}
\newcommand{\p}{\mathop{\mathrm{{\cal P}}}\nolimits}

\newcommand{\X}{\mathcal{X}}
\newcommand {\Z} {\mathbb{Z}}
\newcommand {\norm} [2] [] {\ensuremath{ \left\Vert  #2  \right\Vert_{#1} } }

\newcommand{\nuxT}{\nu_T^{\mathbf{x}}}

\newcommand{\WxT}{\mathcal{W}_T^{\mathbf{x}}}

\newcommand{\eps}{\varepsilon}

\newcommand{\diam}{\mathop{\mathrm{diam}}\nolimits}
\newcommand{\Spec}{\mathop{\mathrm{Spec}}\nolimits}
\newcommand {\Ga} {\ensuremath{\Gamma}}
\newcommand {\ga} {\ensuremath{\gamma}}
\newcommand {\rh} {\ensuremath{\varrho}}
\newcommand{\e}{\mathop{\mathrm{{\varepsilon}}}\nolimits}
\newcommand{\cov}{\mathop{\mathrm{cov}}\nolimits}

\makeatletter
\@addtoreset{equation}{section}
\makeatother

\newcounter {constant}
\newenvironment{constant}{\refstepcounter{constant} }{}

\newcommand{\const}[1]{C_{\ref{const:#1}}}
\newcommand{\defconst}[1]{\begin{constant}\label{const:#1}\const{#1}\end{constant}}

\topmargin=0truein
\oddsidemargin=0truein
\evensidemargin=0truein
\textheight=21cm
\textwidth=15cm

\begin{document}

\title
{\Large \bf Gibbs measures on Brownian currents}
\vspace{1.5cm}
\author{
\small Massimiliano Gubinelli\\[0.1cm]
 {\small\it    Laboratoire de Math\'ematiques,  
 Universit\'e de Paris-Sud }    \\[-0.7ex]
  {\small\it  B\^atiment 425,   F-91405 Orsay Cedex, France}      \\[-0.7ex]
 {\small  {\tt massimiliano.gubinelli@math.u-psud.fr}   }\\[0.5cm]
\small J\'ozsef L\H{o}rinczi\\[0.1cm]
{\it \small Zentrum Mathematik, Technische Universit\"at M\"unchen} \\[-0.7ex]
{\it \small Boltzmannstr. 3, 85747 Garching bei M\"unchen, Germany} \\[-0.7ex]
{\small {\tt  lorinczi@ma.tum.de}} \\[-0.7ex]}
\date{}
\maketitle
\vspace{2cm}
\begin{abstract}
\noindent
Motivated by applications to quantum field theory we consider Gibbs
measures for which the reference measure is Wiener measure and the
interaction is given by a double stochastic integral and a pinning 
external potential. In order properly to characterize these measures 
through DLR equations, we are led to lift Wiener measure and other 
objects to a space of configurations where the basic 
observables are not only the position of the particle at all times 
but also the work done by test vector fields. We prove existence and 
basic properties of such Gibbs measures in the small coupling regime 
by means of cluster expansion.
\\ \\
\textbf{Keywords}: Brownian motion, double It\^o integral, stochastic 
currents, rough paths, Gibbs measure, cluster expansion
\end{abstract}
\newpage
\section{Motivations and outline}
Gibbs measures are familiar objects in various areas of applied
probability. Originally they have been devised in the framework of 
lattice spin systems to describe thermodynamic equilibrium states. 
Obtaining these probability measures involves two basic steps. 
First a family of Gibbs measures for a finite number of random
variables is constructed as a modification of a reference measure 
usually describing the independent random field. The density is the
exponential of an additive functional dependent on an interaction 
function. Then one takes a weak limit of these measures by increasing 
the number of variables to infinity. This procedure copes with the
fact that in usual model systems the interaction diverges in this
limit, thus the limit measure cannot be directly defined. 

The context in which the class of Gibbs probability measures proved 
to be useful has substantially widened over the years, reaching the 
realm of Brownian motion. 
One natural way leading to Gibbs measures on path space is the 
application of the Feynman-Kac formula and its extensions. Suppose 
$H$ is a self-adjoint operator acting on a Hilbert space $\cal H$, and 
let $f \in {\cal H}$. Then for a variety of specific choices of $H$ 
an equality of the type 
\begin{equation}
(e^{-tH}f)(x) = \int_{\R^d} dy \int f(X_t) d\mu_{[0,t]}(X|x,y)
\label{fk}
\end{equation}
can be derived, where $\R \ni t\mapsto X_t \in \R^d$ is Brownian 
motion and
\begin{equation}
d\mu_{[0,t]}(X|x,y) = e^{-U_t(X|x,y)} d\W^{x,y}_{[0,t]}(X).
\label{gibb}
\end{equation}
Here $\W^{x,y}_{[0,t]}$ is Brownian bridge over the bounded time 
interval $[0,t]$, starting at $X_0 = x$ and ending at $X_t = y$, and 
$U_t$ is a functional of Brownian paths derived from $H$. After 
normalizing to 1, the measure $\mu_{[0,t]}(X|x,y)$ can be viewn as a 
Gibbs measure on path space for ``finite volume'' $[0,t]$,  
``interaction'' $U$, ``boundary condition'' $x,y$, and ``reference 
measure'' $\W^{x,y}$.  

In this paper we are interested in the question whether an extension 
in a suitable sense over the full time line $\R$ of $\mu_{[0,t]}$ 
exists. The details of the proof strongly depend on the choice of $U$, 
which for our purposes here will be specified below. These choices are 
motivated by particular applications covered by the following classes
of models, of which we talk only in this introduction. 

\medskip
\vspace{0.1cm}
\noindent
I. Densities dependent on the local time of Brownian motion
\begin{enumerate}
\item[(1)]
{\it $P(\phi)_1$-process (It\^o diffusion)} \quad 
This includes the familiar case of Schr\"odinger operators $H = 
(-1/2)\Delta + V$, ${\cal H} = C_0^\infty(\R^d)$, with a potential 
$V(x)$ that can be chosen fairly generally (Kato-class). The 
well-known result is 
$$
U_t(X) = \int_0^t V(X_s) ds.
$$
$\mu$ describes in this case the path measure of the Markov process 
given by 
$$
dX_t = dB_t + (\nabla\log\Psi)(X_t)dt,
$$
where $B_t$ is $\R^d$-valued Brownian motion and $\Psi$ is the
eigenfunction of $H$ lying at the bottom of its spectrum (ground
state). It\^o diffusions can be defined also on infinite dimensional 
spaces related to other SPDE. For further details on these Gibbs 
measures for bounded intervals and their extensions to $\R$ we refer 
to \cite{Sim75,Sim82,BL}.

\item[(2)] 
{\it Nelson's model} \quad 
This is a scalar quantum field model describing the interaction 
of an electrically charged spinless particle with a boson field. In
this case $H$ is written as the sum of the free particle Hamiltonian
$(-1/2)\Delta + V$, the free field Hamiltonian $\int |k|a(k)^*a(k)dk$
with the usual boson creation and annihilation operators $a^*$ and $a$,
and the interaction Hamiltonian $\int(\widehat\rho(k)/{\sqrt{2|k|}})
(e^{ik\cdot x}a(k) + e^{-ik\cdot x}a^*(k))dk$, with charge
distribution function $\rho$. Moreover, ${\cal H} = L^2(\R^d,dx)
\otimes \F$, where $\F$ is Fock space, and a Feynman-Kac-type formula 
as (\ref{fk}) above can be obtained by mapping $\cal H$ into a space 
of continuous functions through a joint use of the so called ground 
state transform and Wiener-It\^o isomorphism yielding
$$
U_t(X) = \int_0^t V(X_s) ds + \int_0^t \int_0^t W^\rho(X_s-X_r,s-r) 
dsdr,
$$
with 
$$
W^\rho(x,s) = -\frac{1}{4}\int_{\R^d} \frac{|\widehat \rho(k)|^2}{|k|} 
e^{-i k\cdot x - |k||s|} dk.
$$
Extending this Gibbs measure from $[0,t]$ to $\R$ 
is in this case of special interest since it allows a direct 
expression of the ground state of $H$ in terms of its Radon-Nikodym 
derivative with respect to an underlying product measure, which makes 
a rigorous derivation and proof of ground state properties possible. 
For the case of translation invariant models (involving $V \equiv 0$) 
there are only few results available. For details see 
\cite{LM,LMS1,LMS2,BHLMS}. 

\item[(3)]
{\it Polaron and bipolaron models}\quad
The polaron is a ``dressed'' electron (i.e., embedded into an energy
cloud) interacting with a phonon field (i.e., quantum particles
carrying the vibrational energy of an ionic crystal). In this case we
have similar operators acting on the same Hilbert space as above 
except that the dispersion relation $|k|$ is replaced by 1 in the
free field Hamiltonian, and $\widehat\rho/\sqrt{2|k|}$ in the 
interaction term by $1/|k|$. This leads to the same $U_t$ as in the 
case of Nelson's model with 
$$
W^{\rm\tiny{pol}}(x,s) = -\frac{1}{4|x|}\,e^{-|s|}.
$$
The bipolaron differs by the fact that it consists of two dressed
electrons coupled to the same phonon field, which are repelling each
other by Coulomb interaction. In this case 
$$
U_t(X) = \alpha^2 \int_0^t\int_0^t {\cal E}^W(X_s,Y_r,s-r) dsdr
- g\int_0^t \frac{ds}{|X_s-Y_s|},
$$
where 
$$
{\cal E}^W(X_s,Y_r,u) = W(X_s-X_r,u) + 2W(X_s-Y_r,u) + W(Y_s-Y_r,u)
$$
with $W = W^{\rm\tiny{pol}}$, $\alpha < 0$ being the polaron-phonon
coupling paramater and $g>0$ the strength of the Coulomb repulsion 
between the two polarons. Moreover, in this case the reference measure
is a product of two independent Wiener measures. For literature see
\cite{DV,S,LM06}.

\item[(4)]
{\it Intersection local time (weakly self-avoiding polymer)}\quad
Formally, the densities are given by
$$ 
U_t(X) = \int_0^t \int_0^t \delta(X_s-X_r) ds dr, 
$$ 
meant to describe a polymer model with short-range ``soft-core'' 
interaction encouraging to avoid self-intersections. For $d=2, 3$ 
see \cite{MR853758,MR667754,MR609228,MR573702,MR1240717}. In 
\cite{Symanzik:1969lr} it was proved that in $d=2$ the model can be 
rigorously defined after an additive renormalization and the so 
obtained measure is absolutely continuous with respect to Wiener measure 
in 2 dimensions. In $d=3$ the singularity of the energy $U_t$ is more 
serious but an additive renormalization still suffices; Westwater proved 
existence of the Gibbs measure which, however, in this case is not 
absolutely continuous with respect to Wiener measure. Other works 
include \cite{MR729794,MR1329112,MR889758,MR942042}.

\end{enumerate}

\vspace{0.1cm}
\noindent
II. Stochastic currents
\begin{enumerate}
\item[(1)]
{\it Nelson's model in point charge limit and Pauli-Fierz model} \quad
The point charge limit of Nelson's model corresponds to the case of 
replacing $\rho$ above with a delta-function, while the Pauli-Fierz 
model is obtained by replacing the scalar boson field with a quantized
Maxwell (vector) field. The function in the density in both cases 
formally becomes 
\begin{equation}
U_t(X) = \int_0^t V(X_s) ds + \int_0^t \int_0^t W(X_s-X_r,s-r) 
dX_s \cdot dX_r,
\label{doubleito}
\end{equation}
with a $W=W^\delta$ and $W=W^{\rm PF}$ we do not write explicitly 
down here. The difference from the case above is that instead of 
double Riemann we have to deal with double It\^o integrals. A
substantial difficulty to solve in this case is the proper definition 
of the double integrals in the first place, which will be done below. 
For the specific model applications see \cite{GL} in which we perform 
an ultraviolet renormalization of Nelson's Hamiltonian by using path
measures whose densities are of the above form, and \cite{H,HL2} for the
Pauli-Fierz model. The similar point charge limit in the latter model 
is an open problem.

\item[(2)]
{\it Turbulent fluids} \quad
In fluid dynamics the understanding prevails that fully developed 
turbulence should be described by a suitable measure over
divergence-free velocity fields $u(x)$. One way of modelling it starts 
from the assumption that the vorticity field $\nabla \wedge u(x)$ is 
concentrated along Brownian curves $X_t \in \mathbb{R}^3$. Under the 
Eulerian incompressible flow, the kinetic energy $(1/2) \int u(x)^2 
dx$ is conserved. The formal expression of the total energy is
\begin{equation}\label{3.11}
U_t(X)= \int_0^t \int_0^t \frac{1}{|X_t - X_s|} \; dX_t \cdot dX_s.
\end{equation}
In order to have $e^{-U_t(X)}$ as a well-defined random variable at
all, \cite{Ffil} imposed the condition that the Coulomb potential is 
mollified so that the fluid has finite static energy. For details we
refer to~\cite{statvort,FGub,Nua,evolution,Ascona}. 
\end{enumerate}

\vspace{0.1cm}
\noindent
{III. Processes with jumps} \quad 
\begin{enumerate}
\item[]
Applications include Hamiltonians with spin. Since spin is a discrete 
variable, this case goes beyond stochastic integration with respect to
Brownian motion alone. The spin paths $\sigma_t$ are described via a 
Poisson process $N_t$, and we obtain
$$
U_t(X) = \int_0^t V(X_s) ds + \int_0^t a(X_t) \circ dX_t - 
\int_0^t S(X_t,\sigma_t) dt +\int_0^{t^+} \Phi(X_t,-\sigma_t) dN_t.
$$
For more details and explicit formulas of the terms above see
\cite{Jona,HL1}.
\end{enumerate}

\medskip
In this paper we address the purely mathematical problem of existence
and characterization of Gibbs measures with densities of the type 
(\ref{doubleito}). An accompanying paper \cite{GL} takes this further 
to an application to quantum field theory. 

The problem can be formulated in more generality by adding a term 
to $U_t$ in (\ref{doubleito}) taking into account the dependence on 
outside paths of paths run within bounded time intervals. Choose
without loss a bounded interval in the form $[-T,T] \subset \R$ and 
set up Wiener measure $\W_T$ on it. Consider the energy functions 
corresponding to $[-T,T]$ as follows: 
\begin{eqnarray*}
&&
V_T(X) := \int_{-T}^T  V(X_s) ds \qquad \mbox{for external 
potential $V$}\\
\bigskip
&&
W_T(X) :=  \int_{-T}^T \int_{-T}^T W(X_t-X_s,t-s) dX_t \cdot dX_s 
\qquad \mbox{internal energy} \\ 
\medskip\bigskip
&&
W_T(X|Y):= W^+_T(X|Y) + W^{-}_T(X|Y) \qquad 
\mbox{interaction energy}
\end{eqnarray*}
where the interaction energies come from a pair interaction 
potential $W$, and $W^+_T(X|Y) := \int_T^\infty dY_t \int_{-T}^T 
dX_s W(X_s-Y_t,t-s)$ is a term accounting for the interaction of 
paths inside $[-T,T]$ with paths in $[T,\infty)$, and $W^-_T(X|Y) 
:= \int_{-\infty}^{-T} dY_t \int_{-T}^T dX_s W(X_s-Y_t,t-s)$ for 
external paths running in $(-\infty,-T]$. As in (\ref{gibb}), 
these energies give rise to the Gibbs measure for $[-T,T]$
\begin{equation}
\label{ea:gibbs1}
d\mu_T(X) := \frac{e^{-V_T(X)-\lambda W_T(X)}}{Z_T}\, d\W_T(X)
\end{equation}
with a parameter $\lambda \in \R$ tuning the strength of the pair 
potential, and $Z_T$ the normalizing factor turning it into a
probability measure. The object of interest are the accumulation 
points $\mu$ of the family of measures $\{\mu_{T}\}_{T>0}$ in the 
topology of local weak convergence. In the DLR 
(Dobrushin-Lanford-Ruelle) approach, these limit points can be 
characterized by a property of consistency with respect to a 
prescribed family of probability kernels (specification) providing
the local conditional probabilities of the limit random fields. These 
kernels are given by 
\begin{equation}
\label{eq:specc}
\mu_T(dX|Y) := \frac{e^{-V_T(X)-\lambda W_T(X)-
\lambda W_T(X|Y)}}{Z_T(Y)} \, d\W^{Y_{-T},Y_{T}}_T(X)
\end{equation}
with external (or boundary) path $Y$. The possible accumulation 
points of the family $\{\mu_T\}_T$ should then satisfy the DLR 
equations
$
\int \mu_T(A|Y) \mu(dY) = \mu(A)
$ 
for all cylinder sets $A$ in the sub-$\sigma$-field generated by 
projections to $[-T,T]$ for all $T>0$.

In contrast to the case of interactions depending on the local time 
of the process $X$ (given by double Riemann integrals listed above) 
our case encounters two difficulties.
\begin{itemize}
\item[(1)]  
The expressions of $W_T(X)$ and $W_T(X|Y)$ are only formal: 
The double stochastic integrals are not well defined since the 
integrands are neither forward nor backward adapted with respect 
to the semimartingale $X$ (Brownian bridge under $\W^{x,y}_T$).
\item[(2)]  
The specification (\ref{eq:specc}) must be defined pathwise for each 
$Y$, however, in general the only information we have on the boundary 
path $Y$ is that it is a continuous path with a Brownian-like
regularity. This is insufficient for defining the line integrals with 
respect to $dY$ appearing in $W_T(X|Y)$.
\end{itemize}

\noindent
Of these two difficulties the first requires a minor amendment, 
while the second is far more serious and will urge us to introduce 
the novel setting of \emph{Brownian currents} in which we can make
sense of a formulation of the DLR equations. Roughly speaking, we 
provide each sample path with sufficient information in order to 
determine the work made by test vector fields. Specifically, we 
consider the family of random variables $\{X_t\}_t$ jointly with 
the random variables 
$
 C_{st}^X( \varphi) := \int_s^t \varphi(u,X_u) dX_u  
$
(\emph{stochastic currents})
and show that in this augmented sample space the specification can 
pathwise be defined. (This problem has some analogy with difficulties 
at defining specifications for unbounded spin systems for which too 
one has to select a subset of admissible boundary conditions so that 
the interaction energy makes sense. Our approach here goes further in 
that we do not only have to control the growth of the boundary paths 
but must also provide a priori the value of the line integrals against 
a sufficiently large set of test vector fields $\varphi$.) To perform 
this ``lifting'' procedure we use techniques of rough paths theory. 

\medskip
Here is an outline of the paper. In Chapter 2 
we present our results on rough paths that will be put at use 
subsequently. In Chapter 3 we introduce the framework of stochastic 
currents, in particular Brownian currents. In Chapter 4 we define
Gibbs measures on these currents for bounded intervals of the line. In 
Chapter 5 we turn to proving that these Gibbs measures can be extended 
over the whole line provided the interaction term is weakly coupled. 
This requirement is needed since, \emph{faute de mieux}, we use cluster 
expansion in order to construct weak limits of Gibbs measures for 
bounded intervals. The version of cluster expansion we develop here is 
different from the conventional ones since the configurations are 
segments of Brownian paths rather than spins of compact or real-valued 
state space, the interactions depend on double stochastic integrals, 
the reference measure is not a product measure, and the measure we want 
to construct is non-Markovian. However, it has the same spirit of usual 
cluster expansions as it splits off into a part of hands-on analysis of 
energy bounds and a part of combinatorics. Beside a proof of 
existence, cluster expansion allows us to study also uniqueness, typical
path behaviour and mixing properties of Gibbs measures, which will be
done in Chapter 6.

\label{sec:gibbs}

\section{Rough paths}
\subsection{Definition of double stochastic integrals by rough paths}
\label{sec:rough-paths}

It\^o's theory of integration gives a meaning to line integrals of the 
form 
$
\int_{-T}^T \varphi(t,X_t) dX_t,
$
with $X$ a semimartingale (with respect to its natural forward
filtration $\mathcal{F}$) and $\varphi(t,\xi)$ some function. The way
of obtaining it is by taking limits of Riemann sums
$
\sum_\alpha \varphi(\tau_\alpha,X_{\tau_\alpha}) 
(X_{\tau_{\alpha+1}}-X_{\tau_\alpha})
$
over a family of partitions $\{\tau_\alpha\}_\alpha$ of $[-T,T]$ with 
mesh decreasing to zero. Such limits are known to exist whenever 
$\varphi(t,X_t)$ is an $\mathcal{F}$-adapted functional of $X$, i.e.,
if it only depends on $\F_t$ for any $t\in[-T,T]$ and has some 
integrability properties. Then 
$
\int_{-T}^T \varphi(t,X_t) dX_t
$
is defined as a random variable on the same probability space on which 
$X$ is defined. In our applications $X$ is the coordinate process on the
Borel space $\X$, and the stochastic integral defines a Borel map from
$\X$ to $\R$ which is defined on a set of full $\W$ measure.
It is important to remember the crucial fact that the full-measure set 
in general depends on the integrand: different integrands may give 
different full-measure sets.

In this section we analyze the regularity of such integrals by using
the theory or rough paths as developed in 
\cite{lyons-98,lyons-qian-02,MR2091358,feyel}. This theory allows to 
properly understand stochastic integrals from a purely analytic 
perspective. To gain this freedom we need to enlarge the sample space 
over which the measures are defined, however, this will turn out to 
make no harm in our applications.

Rough paths theory has been devised by T.~Lyons~\cite{lyons-98} to
give a meaning to line integrals of the form
\begin{equation}
  \label{eq:basic-rp-integral}
\int_0^T \varphi(X_t) dX_t,  
\end{equation}
in case of $X$ being an irregular function of the parameter. A typical
case of interest is when $X$ is chosen to be H\"older continuous with 
small exponent, such as $\gamma\in(0,1)$. 
The natural approach to defining such integrals is that of taking
Riemann approximations over a finite partition of $[0,T]$ and proving 
that the sequence converges as the mesh of the partition goes to zero. 
When general integrals in the form $\int_0^T Y_s dX_s$ are considered
with bounded $Y$, this can only hold if $X$ is a process of bounded 
variation leading to the familiar Stieltjes integral. When $Y$ is 
H\"older continuous with exponent $\rho$, Young's work \cite{young-36} 
makes sure that a sufficient condition for the convergence of the 
Riemann sums is that $\gamma+\rho > 1$. The Young integral is useful 
in studying, for example, fractional Brownian motion of Hurst index 
$H>1/2$, which is a stochastic process with H\"older continuous paths 
of any exponent $\gamma < H$. In this case (\ref{eq:basic-rp-integral}) 
can be defined for any function $\varphi$ which is at least $C^1$.

Integrals of the form (\ref{eq:basic-rp-integral}) when $X$ is a
sample of Brownian motion are not within the reach of Young's theory 
and indeed it is not difficult to see that different definitions of 
the Riemann approximations can lead to different limits (or even not 
converge at all). This difficulty lies at the basis of the existence 
of more than one type of stochastic integral over Brownian motion, 
two well known possibilities being the integrals developed by It\^o 
and Stratonovich. Lyons's approach to the problem was proving that 
under reasonable conditions the Riemann sums for $\int_0^T \varphi(X_t) 
dX_t$ can be modified by adding an extra term in order to make them
converge. This compensation is not unique but in many relevant cases 
it can be performed in such a way that the so obtained integral is an 
extension of the classical line integral and Young integral. That is,
whenever $X$ is regular enough, the modified integral coincides with 
the Young one or Riemann-Lebesgue integral if moreover $X$ is almost 
surely is differentiable.

We consider only the case when $X$ is H\"older continuous with $\gamma 
> 1/3$ since this includes the case of Brownian motion and all our 
present applications. (The theory for more general $\gamma$ would be 
far more cumbersome.) We use the notation $X_{st} = X_t-X_s$, and make 
the following basic

\begin{definition}
For each bounded $I \subset \R$ we call the couple $(X,\mathbb{X})$  
a \emph{2-step rough path}, where
\begin{enumerate}
\item[(1)]
$X \in C^\gamma(I,\R^d)$;

\item[(2)]
there exists a function $\mathbb{X} : I\times I \to \R^{d}\times \R^d$ 
satisfying the \emph{multiplicative property}
  \begin{equation}
    \label{eq:mult}
  \mathbb{X}^{ij}_{st} - \mathbb{X}^{ij}_{su} - \mathbb{X}^{ij}_{ut} = 
  X^i_{su} X^j_{ut}   
  \end{equation}
for all $s \le u \le t \in I$ and any $i,j =1,\dots,d$;

\item[(3)]
there is $C \in [0,\infty)$ such that
\begin{equation}
  \label{eq:rough-hold-area}
\|\mathbb{X}_{st}\|_{2\gamma} := \sup_{t \neq s} 
\frac{|\mathbb{X}_{st}|}{|t-s|^{2\gamma}} \le C.
\end{equation}
\end{enumerate}
\end{definition}
\noindent
We use the similar norm $\|{X_{st}}\|_\gamma$ correspondingly given
by the right hand side of (\ref{eq:rough-hold-area}).

\begin{remark}
\rm{
Note that Assumption 
(2) is non-trivial only if $d>1$. When $d=1$ we can take
$
\mathbb{X}_{st} = (X_{st})^2/2,
$
and it can easily be seen that both (\ref{eq:mult}) and the
H\"older-like bound~(\ref{eq:rough-hold-area}) are then satisfied. 
}
\end{remark}

For our purposes below we need a slight extension since we want 
to be able to integrate functions explicitly dependent on the time
parameter, such as $\int_0^T \varphi(u,X_u) dX_u$. One possibility 
is to consider the couple $(u,X)$ as a new path and construct the 
associated rough path. This construction, however, has the disadvantage 
of requiring rather much regularity of the function $\varphi$. A more 
effective approach is to treat the line-integral as a Young integral 
with respect to the $u$ dependence of the integrand as soon as the 
map $u\mapsto \varphi(u,\xi)$ has H\"older exponent greater than 
$1-\gamma$ (as we shall see below).

For the (time-dependent) test vector fields in $C(\R \times \R^d; \R^d)$
 we use the norm
$$
\|\varphi\|_{\rho,2,s,t} = \sup_x \left[ \sup_{u\in[s,t]}
\left(\max_{k=0,1,2} |\nabla^k \varphi(u,x)|\right) +
 \sup_{u,v \in[s,t]} \max_{k=0,1}
\frac{|\nabla^k \varphi(u,x)-\nabla^k \varphi(v,x)|}{|u-v|^\rho} 
\right]
$$
with $\rho > 0$ and the convention $\nabla^0 \varphi(t,x) = 
\varphi(t,x)$. 
Then our basic result on the step-2 rough path $(X,\mathbb{X})$  is

\begin{theorem}
\label{th:rough-main}
Let $\varphi \in C(\R \times \R^d,\R^d)$ be such that it is $C^2$ 
with respect to its second variable and H\"older continuous with
exponent $\rho$ with respect to its first variable, such that 
$\rho+\gamma > 1$. Then the sums
$$
\sum_\alpha \left(\varphi_i(\tau_\alpha,X_{\tau_\alpha}) 
X^i_{\tau_{\alpha+1} \tau_\alpha}+\nabla_j 
\varphi_i(\tau_\alpha,X_{\tau_\alpha}) 
\mathbb{X}_{ \tau_\alpha \tau_{\alpha+1}}^{ij} \right), 
\quad i,j=1,...,d
$$
converge as the mesh of the partition $\{\tau_\alpha\}_{\alpha}$ of
$[0,T]$ goes to zero, and defines the integral
$
\int_0^T \varphi(u,X_u) dX_u 
$ 
Moreover for any $T\le 1$ we have the bound
$$
\left|\int_0^T \varphi(u,X_u) dX_u \right| \le C T^\gamma  
\|\varphi\|_{\rho,2,0,T} (1+\|X\|_\gamma+\|\mathbb{X}^2\|_{2\gamma})^3.
$$
\end{theorem}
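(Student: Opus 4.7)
The plan is to treat the construction of $\int_0^T \varphi(u,X_u)\,dX_u$ as an instance of the Young--Lyons sewing lemma (see \cite{feyel,MR2091358}): introduce a two-parameter ``candidate'' increment whose associated defect vanishes faster than $|t-s|$, then invoke the sewing lemma to conclude both convergence of the Riemann sums and the stated estimate. Concretely, I would set
$$
J_{st} := \varphi_i(s,X_s)\,X^i_{st} + \nabla_j\varphi_i(s,X_s)\,\mathbb{X}^{ij}_{st},
$$
which is exactly the summand appearing in the theorem. The sewing lemma then furnishes a unique additive functional $I_{st}$ with $|I_{st}-J_{st}| \le C\,K\,|t-s|^\theta$ that coincides with the limit of $\sum_\alpha J_{\tau_\alpha \tau_{\alpha+1}}$, provided one can prove $|\delta J_{sut}| \le K|t-s|^\theta$ for some $\theta>1$, where $\delta J_{sut} := J_{st}-J_{su}-J_{ut}$.

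The analytic core is therefore the bound on $\delta J_{sut}$. Using $X^i_{st} = X^i_{su}+X^i_{ut}$ together with the multiplicative property~(\ref{eq:mult}) to reorganise terms produces
$$
\delta J_{sut} = -\bigl[\varphi_i(u,X_u)-\varphi_i(s,X_s)-\nabla_j\varphi_i(s,X_s)X^j_{su}\bigr]X^i_{ut} - \bigl[\nabla_j\varphi_i(u,X_u)-\nabla_j\varphi_i(s,X_s)\bigr]\mathbb{X}^{ij}_{ut},
$$
the crucial point being that the multiplicative property cancels the bilinear Taylor term $\nabla_j\varphi_i(s,X_s)X^j_{su}X^i_{ut}$ \emph{exactly}. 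Taylor-expanding $\varphi$ and $\nabla\varphi$ around $(s,X_s)$ in the spatial variable and using the $\rho$-H\"older modulus in the time variable, the first bracket is $O\bigl(\|\varphi\|_{\rho,2,s,t}(|X_{su}|^2+|u-s|^\rho)\bigr)$ and the second is $O\bigl(\|\varphi\|_{\rho,2,s,t}(|X_{su}|+|u-s|^\rho)\bigr)$. Inserting the H\"older bounds $|X_{su}|\le\|X\|_\gamma|u-s|^\gamma$ and $|\mathbb{X}^{ij}_{ut}|\le\|\mathbb{X}\|_{2\gamma}|t-u|^{2\gamma}$, every resulting contribution is of order $|t-s|^\theta$ with $\theta := \min(3\gamma,\rho+\gamma) > 1$ by the standing hypotheses $\gamma>1/3$ and $\rho+\gamma>1$. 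This yields $|\delta J_{sut}| \le C\,\|\varphi\|_{\rho,2,s,t}(1+\|X\|_\gamma+\|\mathbb{X}\|_{2\gamma})^2|t-s|^\theta$.

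Feeding this into the sewing lemma produces $I_{0T}$ with $|I_{0T}-J_{0T}|\le C\|\varphi\|_{\rho,2,0,T}(1+\|X\|_\gamma+\|\mathbb{X}\|_{2\gamma})^2 T^\theta$; combining with the trivial bound $|J_{0T}|\le\|\varphi\|_{\rho,2,0,T}(\|X\|_\gamma T^\gamma+\|\mathbb{X}\|_{2\gamma}T^{2\gamma})$ and using $T\le 1$ to absorb $T^{2\gamma}$ and $T^\theta$ into $T^\gamma$ yields the announced estimate. The cubic power $(1+\|X\|_\gamma+\|\mathbb{X}\|_{2\gamma})^3$ arises from pairing the quadratic factor coming out of the sewing bound with a further factor of $\|X\|_\gamma$ produced by the sewing construction. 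The main obstacle I expect is the algebraic bookkeeping that makes the multiplicative property annihilate the bilinear Taylor term: one must carefully match the contraction pattern between the indices in $\mathbb{X}^{ij}$ and in $\nabla_j\varphi_i$. Once this is correctly set up, the rest is a routine combination of Taylor's theorem, H\"older estimates, and the sewing lemma, with the hypotheses $\gamma>1/3$ and $\rho+\gamma>1$ playing precisely complementary roles in controlling the spatial quadratic remainder and the mixed spatial/temporal remainder respectively.
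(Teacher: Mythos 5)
Your proposal is correct and its analytic core coincides with the paper's: your defect $\delta J_{sut}$, after the multiplicative property cancels the bilinear term, splits into exactly the four remainders the paper calls $A^1_\alpha,\dots,A^4_\alpha$ (spatial Taylor remainder times $X_{ut}$, spatial increment of $\nabla\varphi$ times $\mathbb{X}_{ut}$, and the two time-H\"older analogues), with the same exponents $3\gamma$ and $\rho+\gamma$. The only difference is packaging: the paper unrolls the argument as an explicit telescoping of dyadic Riemann sums, noting that $S_{n-1}-S_n$ is a sum of defects over consecutive dyadic triples, and defers general partitions to the cited reference, whereas you invoke the abstract sewing lemma, which delivers mesh-independence, additivity and uniqueness of the limit in one stroke at the cost of assuming that lemma as a black box. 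Both routes require the same care with the index contraction between $\nabla_j\varphi_i$ and $\mathbb{X}^{ij}$ that you correctly flag as the one delicate point.
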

\begin{proof}
We prove convergence over the dyadic partition of $[0,1]$, then 
convergence of a general partition follows then by the arguments 
developed in~\cite{MR2091358}. Let $\tau^{(n)}_{\alpha}=T \alpha/2^n$ 
for $\alpha = 0,\dots,2^n$, and let
$$
S_n = \sum_{\alpha=0}^{2^n-1} 
\left(\varphi_i(\tau^{(n)}_\alpha,X_{\tau^{(n)}_\alpha}) 
X^i_{\tau^{(n)}_{\alpha+1}
\tau^{(n)}_\alpha}+\nabla_j 
\varphi_i(\tau^{(n)}_\alpha,X_{\tau^{(n)}_\alpha}) 
\mathbb{X}_{\tau^{(n)}_{\alpha+1}
  \tau^{(n)}_\alpha}^{ij}\right). 
$$
Note that
\begin{equation}
\label{eq:rough-split}
 S_{n}-S_{n-1} = \sum_{\alpha=0}^{2^n-1} 
(A_\alpha^1 + A_\alpha^2  +
 A_\alpha^3 + A_\alpha^4),
\end{equation}
where
\begin{eqnarray*}
&& 
A^1_\alpha  =
\left[\varphi_i(\tau^{(n)}_{2\alpha+1},X_{\tau^{(n)}_{2\alpha+2}})-
\varphi_i(\tau^{(n)}_{2\alpha+1},X_{\tau^{(n)}_{2\alpha+1}})-\nabla_j
\varphi_i(\tau^{(n)}_{2\alpha+1},X_{\tau^{(n)}_{2\alpha+1}}) 
X^j_{\tau^{(n)}_{2\alpha+2} \tau^{(n)}_{2\alpha+1}} \right] 
X^i_{\tau^{(n)}_{2\alpha+1} \tau^{(n)}_{2\alpha}}, \\
&& 
A^2_\alpha  =
\left[\nabla_j\varphi_i(\tau^{(n)}_{2\alpha+1},X_{\tau^{(n)}_{2\alpha+2}})- 
\nabla_j \varphi_i(\tau^{(n)}_{2\alpha+1},X_{\tau^{(n)}_{2\alpha+1}})\right] 
\mathbb{X}_{\tau^{(n)}_{2\alpha+1} \tau^{(n)}_{2\alpha}}^{ij},\\
&&
A^3_\alpha  = 
\left[\varphi_i(\tau^{(n)}_{2\alpha+2},X_{\tau^{(n)}_{2\alpha+2}})-
\varphi_i(\tau^{(n)}_{2\alpha+1},X_{\tau^{(n)}_{2\alpha+2}}) \right] 
X^i_{\tau^{(n)}_{2\alpha+1}\tau^{(n)}_{2\alpha}},\\
&& 
A^4_\alpha  = 
(\nabla_j\varphi_i(\tau^{(n)}_{2\alpha+2},X_{\tau^{(n)}_{2\alpha+2}})-
\nabla_j \varphi_i(\tau^{(n)}_{2\alpha+1},X_{\tau^{(n)}_{2\alpha+2}})) 
\mathbb{X}_{\tau^{(n)}_{2\alpha+1}\tau^{(n)}_{2\alpha}}^{ij}.
\end{eqnarray*}
By using the fact that $\varphi\in C^2$, we have the bounds 
$
|A^1_\alpha | \le T^{3\gamma}\|\varphi\|_{\rho,2,0,T} \|X\|^3_\gamma
2^{-3\gamma n}
$
and
$
|A^2_\alpha| \le  T^{3\gamma} \|\varphi\|_{\rho,2,0,T} \|X\|_\gamma
\|\mathbb{X}\|_{2\gamma} 2^{-3\gamma n}
$
on the first two terms.
For the last two we have
$
|A^3_\alpha| \le  T^{\rho+\gamma} \|\varphi\|_{\rho,2,0,T}
\|X\|_\gamma 2^{-n(\rho+\gamma)}
$
and 
$
| A^4_\alpha | \le  T^{\rho+2\gamma} \|\varphi\|_{\rho,2,0,T}
\|\mathbb{X}\|_{2\gamma} 2^{-n(\rho+2\gamma)}
$.
Thus
$$
|S_{n}-S_{n-1}| \; \le \; C (T^{3\gamma} +
T^{\rho+\gamma})\|\varphi\|_{\rho,2,0,T} (2^{(1-3\gamma) n}
+ 2^{(1-\gamma-\rho)n}) 
$$
where $C$ is a function dependent only on $X,\mathbb{X}$. Write
$S_n$ as the telescopic sum 
$
S_n = S_0 + \sum_{k=1}^n(S_k-S_{k-1})
$.
Then, since
$$
\sum_{k=1}^n |S_k-S_{k-1}| \le C (T^{3\gamma}+T^{\rho+\gamma}) 
\|\varphi\|_{\rho,2,0,T} \sum_{k=1}^n (2^{(1-3\gamma) n}+2^{(1-\gamma-\rho) n})
$$
converges geometrically whenever $\gamma > 1/3$ and $\gamma+\rho>1$,
we obtain absolute convergence of the sequence $S_n$. The bound on the 
integral thus follows.
\end{proof}

\medskip
As this theorem shows, by using the additional information provided by 
the path $\mathbb{X}$ we are able to make Riemann sums converge. The 
following consequence is immediate.
\begin{corollary}
The integral defined in Theorem \ref{th:rough-main} has
$
\mathbb{X}_{st}^{ij} = \int_{s}^t X^i_{su} dX^j_u
$.  
\end{corollary}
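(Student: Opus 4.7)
The plan is to apply Theorem~\ref{th:rough-main} to a carefully chosen linear test field and then use the multiplicative property~(\ref{eq:mult}) to evaluate the resulting Riemann sums exactly, without having to take a limit at all.

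For fixed indices $i,j\in\{1,\dots,d\}$ and a fixed initial time $s$, I would pick the time-independent vector field $\varphi:\R\times\R^d\to\R^d$ whose only nonzero component is the $j$-th, given by $\varphi_j(u,x)=x^i-X^i_s$. This $\varphi$ is smooth in $x$, constant in $u$, has $\nabla_l\varphi_k=\delta_{kj}\delta_{li}$, and vanishing second derivative. It satisfies the regularity hypotheses of Theorem~\ref{th:rough-main} after a smooth truncation outside a ball containing the range of $X$ on $[s,t]$, which does not affect the Riemann sums because they sample $\varphi$ only along the path.

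Plugging this $\varphi$ into the Riemann sums of Theorem~\ref{th:rough-main} collapses the first summand to $X^i_{s\tau_\alpha}\,X^j_{\tau_\alpha\tau_{\alpha+1}}$ and the second to $\mathbb{X}^{ij}_{\tau_\alpha\tau_{\alpha+1}}$. Applying~(\ref{eq:mult}) to the ordered triple $(s,\tau_\alpha,\tau_{\alpha+1})$ gives
\[
X^i_{s\tau_\alpha}\,X^j_{\tau_\alpha\tau_{\alpha+1}} + \mathbb{X}^{ij}_{\tau_\alpha\tau_{\alpha+1}} = \mathbb{X}^{ij}_{s\tau_{\alpha+1}} - \mathbb{X}^{ij}_{s\tau_\alpha},
\]
so the $\alpha$-sum telescopes to $\mathbb{X}^{ij}_{st}-\mathbb{X}^{ij}_{ss}$, and setting $s=u=t$ in~(\ref{eq:mult}) forces $\mathbb{X}^{ij}_{ss}=0$. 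Hence every Riemann sum equals $\mathbb{X}^{ij}_{st}$ identically, and in particular the limit, which by definition of the integral in Theorem~\ref{th:rough-main} is $\int_s^t X^i_{su}\,dX^j_u$, has the same value.

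There is essentially no obstacle here: the rough path $\mathbb{X}$ was defined precisely so that this identity would hold, and the argument amounts to book-keeping with~(\ref{eq:mult}). The only mild technicality is the truncation that brings the unbounded linear $\varphi$ into the bounded-norm setting of Theorem~\ref{th:rough-main}; this is routine, and could alternatively be bypassed by re-reading the proof of that theorem with $\sup_x$ replaced by $\sup_{x\in X([s,t])}$ throughout.
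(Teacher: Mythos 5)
Your proof is correct, and it reaches the paper's conclusion by a somewhat different mechanism. The paper first uses linearity of the rough integral to write $\int_s^t X^i_{su}\,dX^j_u = \int_s^t X^i_u\,dX^j_u - X^i_s X^j_{st}$, then applies the dyadic difference decomposition~(\ref{eq:rough-split}) from the proof of Theorem~\ref{th:rough-main} to the linear field $\varphi_k(\xi)=\xi^i\delta_{kj}$: all four correction terms $A^1_\alpha,\dots,A^4_\alpha$ vanish, so $S_n=S_{n-1}$, the limit equals $S_0 = X^i_s X^j_{st}+\mathbb{X}^{ij}_{st}$, and subtracting the constant term gives the claim. You instead absorb the constant into the test field $\varphi_j(u,x)=x^i-X^i_s$ and invoke the multiplicative property~(\ref{eq:mult}) on each triple $(s,\tau_\alpha,\tau_{\alpha+1})$, so that every compensated Riemann sum --- on an arbitrary partition, not just the dyadic one --- telescopes exactly to $\mathbb{X}^{ij}_{st}-\mathbb{X}^{ij}_{ss}=\mathbb{X}^{ij}_{st}$. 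This is arguably cleaner: it bypasses the dyadic-to-general-partition step entirely and makes transparent that the identity is forced by~(\ref{eq:mult}) alone, with no need to re-examine the cancellation structure inside the proof of Theorem~\ref{th:rough-main}. Your remark about truncating the unbounded linear field so that it satisfies the normed hypotheses of Theorem~\ref{th:rough-main} is a legitimate point of care that the paper silently glosses over.
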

\begin{proof}
We have
$$
\int_{s}^t X^{i}_{us} dX^j_u = \int_{s}^t X^i_{u} dX^j_u - X^i_s 
(X^j_u-X^j_s),
$$ 
as it is easily seen by using the definition. That is, the rough 
integral has the same linearity property as Riemann integral and
behaves the same way with respect to the integration of constants. 
Moreover, the integral in the right hand side corresponds to the 
function $\varphi_k(\xi) = \xi^i \delta_{kj}$ and by using 
(\ref{eq:rough-split}) we have $S_{n}=S_{n-1}$ on the dyadic partition 
above. Hence $S_n = S_0$ and 
$$
\int_{s}^t X^i_{u} dX^j_u = S_0 = \sum_k \varphi_k(X_s) X^k_{st} +
\sum_{k,m} \nabla_m \varphi_k(X_s) \mathbb{X}_{st}^{km} = X_s^i
X^j_{st} + \mathbb{X}_{st}^{ij},
$$
which completes the proof.
\end{proof}

\begin{remark}
\rm{
The function $\mathbb{X}$ can be identified as giving the value of a 
\emph{twice iterated integral} over $X$,
\begin{equation}
  \label{eq:rough-ident}
\mathbb{X}_{st}^{ij} = \int_s^t \left(\int_s^u dX^i_v\right) dX^j_u. 
\end{equation}
Thus Theorem \ref{th:rough-main} can be alternatively interpreted as 
saying that the knowledge of the twice iterated integral (in addition 
with some H\"older continuity) is sufficient for determining the value 
of the integral $\int \varphi(X_t)dX_t$ for arbitrary $C^2$ function.  
}
\end{remark}

Provided $\gamma > 1/2$ and (\ref{eq:rough-ident}) holds, whenever the
right hand side is defined by using the Young integral, the integral 
defined in Theorem \ref{th:rough-main} coincides with the Young
integral. This is easy to see due to the estimate 
$$
|\mathbb{X}_{st}^{ij}| \le C (\|X\|_{\infty}+\|X\|_\gamma)^2 
|t-s|^{2\gamma}
$$
for the Young integral. Since $2\gamma > 1 $, the sums
$
\sum_\alpha \nabla_j \varphi_i(X_{\tau_\alpha}) 
\mathbb{X}_{\tau_{\alpha+1} \tau_\alpha}^{ij} 
$
vanish as the mesh goes to zero, so the modified and Riemann sums 
converge to the same limit.
Moreover, the rough integral is continuous in the natural topology
associated to the step-2 rough path $(X,\mathbb{X})$, i.e., we have
\begin{corollary}
Let $(X^n,\mathbb{X}^n)$ be a sequence of step-2 rough paths such 
that
$
\|X^n - X\|_\gamma + \|\mathbb{X}^n-\mathbb{X}\|_{2\gamma} \to 0.
$  
Then
$
\int \varphi(X^n) dX^n \to \int \varphi(X) dX,
$
for all $\varphi \in C^2$.
\end{corollary}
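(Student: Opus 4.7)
The plan is to leverage the explicit construction of the rough integral as the limit of modified Riemann sums over dyadic partitions, carried out in the proof of Theorem \ref{th:rough-main}. Denote by $S_k(X,\mathbb{X})$ the modified sum on the dyadic partition at level $k$, i.e.\ the object called $S_n$ in that proof with $n=k$.

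First, I would extract from the proof of Theorem \ref{th:rough-main} a tail estimate that is polynomial in the rough path norms. Summing the geometric bound $|S_k-S_{k-1}| \le C \|\varphi\|_{\rho,2,0,T}(2^{(1-3\gamma)k}+2^{(1-\gamma-\rho)k})$ established there, in which $C$ is a polynomial of degree three in $\|X\|_\gamma$ and $\|\mathbb{X}\|_{2\gamma}$, yields
$$
\Bigl|\int_0^T \varphi(u,X_u)\, dX_u - S_K(X,\mathbb{X})\Bigr| \le C_\varphi \bigl(1+\|X\|_\gamma+\|\mathbb{X}\|_{2\gamma}\bigr)^3 \bigl(2^{(1-3\gamma)K}+2^{(1-\gamma-\rho)K}\bigr),
$$
with $C_\varphi$ depending only on $\varphi$, $T$, $\gamma$, $\rho$; an identical estimate holds for each $(X^n,\mathbb{X}^n)$.

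Next, since $(X^n,\mathbb{X}^n)\to (X,\mathbb{X})$ in the step-2 rough path topology, the norms $\|X^n\|_\gamma+\|\mathbb{X}^n\|_{2\gamma}$ are uniformly bounded in $n$ by some $M<\infty$. The tail estimate therefore holds uniformly: given $\varepsilon>0$ one may choose a single $K$ so large that both $|\int \varphi(u,X_u) dX_u - S_K(X,\mathbb{X})|<\varepsilon/3$ and $|\int \varphi(u,X^n_u)dX^n_u-S_K(X^n,\mathbb{X}^n)|<\varepsilon/3$ hold for every $n$.

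Finally, for this fixed $K$, the object $S_K$ is a finite linear combination of terms of the form $\varphi_i(\tau,X_\tau) X^i_{st}+\nabla_j\varphi_i(\tau,X_\tau) \mathbb{X}^{ij}_{ts}$ evaluated at the $2^K$ dyadic nodes. The H\"older convergence $\|X^n-X\|_\gamma\to 0$ (together with convergence of the initial value, which is standard in the rough path topology) gives uniform convergence of $X^n$ to $X$ on $[0,T]$, while $\|\mathbb{X}^n-\mathbb{X}\|_{2\gamma}\to 0$ forces $\mathbb{X}^n_{st}\to\mathbb{X}_{st}$ for every fixed $s,t$; since $\varphi\in C^2$ makes $\varphi$ and $\nabla\varphi$ continuous, $S_K(X^n,\mathbb{X}^n)\to S_K(X,\mathbb{X})$ as $n\to\infty$. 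Choosing $n$ large enough that this last difference is below $\varepsilon/3$ and applying the triangle inequality closes the argument.

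The main obstacle is the first step: one must verify that the constant $C$ implicit in the estimate of $|S_k-S_{k-1}|$ from the proof of Theorem \ref{th:rough-main} depends on $(X,\mathbb{X})$ only through a fixed polynomial in $\|X\|_\gamma$ and $\|\mathbb{X}\|_{2\gamma}$, so that uniform boundedness of these norms along the sequence produces a rate of dyadic approximation that does not depend on $n$. Once this uniformity is secured, the remainder of the argument is merely the continuity of a finite sum.
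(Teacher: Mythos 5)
The paper states this corollary without proof, so there is no authorial argument to compare against; your proposal is correct and follows the natural route, namely re-running the dyadic construction of Theorem \ref{th:rough-main} with attention to how the constants depend on the rough path. The key uniformity you flag as the "main obstacle" does hold: each of the four error terms $A^1_\alpha,\dots,A^4_\alpha$ in that proof is bounded by $\|\varphi\|_{\rho,2,0,T}$ times a product of at most three factors drawn from $\{\|X\|_\gamma,\|\mathbb{X}\|_{2\gamma}\}$, so the constant $C$ there is dominated by a fixed multiple of $(1+\|X\|_\gamma+\|\mathbb{X}\|_{2\gamma})^3$, and uniform boundedness of these norms along the convergent sequence yields a dyadic approximation rate independent of $n$; the rest is continuity of a finite sum, as you say. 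One remark worth making explicit: your parenthetical about the initial value is not a formality. Since $\|\cdot\|_\gamma$ is only a seminorm on increments, the corollary as printed is literally false for constant shifts $X^n=X+c_n$ with $c_n\not\to 0$ (one may take $\mathbb{X}^n=\mathbb{X}$, so the hypothesis is satisfied with both seminorm differences equal to zero, yet $\int\varphi(X+c_n)\,dX\neq\int\varphi(X)\,dX$ in general); the assumption $X^n_0\to X_0$, or pinning all paths at a common basepoint, must be added, and your proof correctly uses it to get pointwise convergence of the evaluations $\varphi(X^n_{\tau_\alpha})$.
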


Suppose $X^n$ is a sequence of regular (say, piecewise linear)
approximations of the H\"older continuous path $X$. Then by putting 
$\mathbb{X}^n = \int \int dX^n\otimes  dX^n $, where the integrals 
are Riemann integrals, a sufficient condition for the convergence of 
the approximate integrals $\int \varphi(X^n) dX^n$ to $\int \varphi(X) 
dX$ is that the sequence $\mathbb{X}^n$ converges to $\mathbb{X}$ in 
the topology induced by $\|\cdot\|_{2\gamma}$.

\subsection{Brownian motion as rough path}

Let now $X$ be a sample path of Brownian motion. By Kolmogorov's Lemma 
the process $t \mapsto X_t$ has a version that is H\"older continuous 
with exponent $\gamma \in(1/3,1/2)$ (actually $\gamma$ can be taken 
arbitrarily close to $1/2$). In the following we will use such a 
version without each time mentioning explicitly, i.e., consider a 
subspace $\X_\gamma \subset C(\R,\R^d)$ such that every $X \in 
\X_\gamma$ is H\"older continuous with exponent $\gamma$.
 
To apply the above results to $X$ we need a choice for $\mathbb{X}$. 
This candidate is not unique, different choices will lead to different 
integrals over $X$. For instance, in order to construct a possible 
$\mathbb{X}$ we can start by setting
$$
(\mathbb{X}_{\text{It\^o}})^{ij}_{st}=\int_s^t \int_s^u dX^i_v dX^j_u,
$$  
where the double integral is understood in It\^o sense. In this way we
obtain a family of random variables
$\{(\mathbb{X}_{\text{It\^o}})^{ij}_{st}: i,j=1,\dots,d; \; t,s
\in[0,T]\}$ satisfying the multiplicative property
\begin{equation*}
(\mathbb{X}_{\text{It\^o}})^{ij}_{st} - 
(\mathbb{X}_{\text{It\^o}})^{ij}_{ut} -
(\mathbb{X}_{\text{It\^o}})^{ij}_{su} 
= X^i_{ut} X^j_{su}   
\end{equation*}
almost surely for any fixed $t,s,u\in[0,T]$. The next step is to show 
that this family has a version for which
\begin{equation}
  \label{eq:rough-holder-area}
\|\mathbb{X}^{ij}_{\text{It\^o}}\|_{2\gamma} < \infty  \quad \mbox{a.s.}
\end{equation}
In order to prove (\ref{eq:rough-holder-area}), we use the following
lemma obtained as an extension of a result of Garsia-Rodemich-Rumsey 
in \cite{MR2091358}.
\begin{lemma}
\label{lemma:besov}
For any $\theta > 0$ and $p \ge 1$ there exists a constant $C$ such
that for any $R \in C([0,T]^2,{\cal B})$, where $({\cal B},|\cdot|)$ 
is a Banach space, we have 
\begin{equation}
\label{eq:generalboundxx}
\|R\|_{\theta} \le C \left(U_{\theta+2/p,p}(R)+V_\theta(R)\right),
\end{equation}
with
\begin{equation*}
 U_{\theta,p}(R) = \left[ \int_{[0,T]^2}
 \left(\frac{|R_{t s}|}{|t-s|^\theta}\right)^p dt ds \right]^{1/p}
\end{equation*}
and
\begin{equation*}
 V_\theta(R) = \inf_{\theta_1 \in(0,\theta)} \sup_{t \neq u \neq s} 
\frac{|R_{st}-R_{ut}-R_{su}|}{|t-u|^{\theta_1}|u-s|^{\theta-\theta_1}}. 
\end{equation*}
\end{lemma}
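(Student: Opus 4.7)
The plan is to adapt the classical Garsia-Rodemich-Rumsey lemma to the two-variable setting, interpreting $V_\theta(R)$ as the defect of $R$ from being a coboundary $f(t)-f(s)$. The $L^p$-bound encoded in $U_{\theta+2/p,p}(R)$ will provide local control on the size of $R$ at every scale, while the defect bound encoded in $V_\theta(R)$ will let me piece these local estimates together into a pointwise H\"older estimate.

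First, I would fix $(s_0,t_0)\in[0,T]^2$ with $s_0<t_0$, set $\ell=t_0-s_0$, and construct inductively two monotone sequences $(s_n)_{n\ge 0}$ nondecreasing and $(t_n)_{n\ge 0}$ nonincreasing with $s_n\in[s_{n-1},(s_{n-1}+t_{n-1})/2]$ and $t_n\in[(s_{n-1}+t_{n-1})/2,t_{n-1}]$, so that $t_n-s_n\le(t_{n-1}-s_{n-1})/2$ together with the key local estimates
\begin{equation*}
|R_{s_{n-1}s_n}|+|R_{t_nt_{n-1}}|\le C\, U_{\theta+2/p,p}(R)\,(t_{n-1}-s_{n-1})^{\theta}.
\end{equation*}
The existence of such splitting points at every scale comes from a Chebyshev-type argument applied to the double integral $\int\!\!\int|R_{uv}|^p/(v-u)^{p\theta+2}\,du\,dv\le U_{\theta+2/p,p}(R)^p$: one distributes the total mass across scales and shows that on each sub-rectangle the set of "bad" pairs has measure strictly less than the full rectangle, so a positive-measure set of admissible $(s_n,t_n)$ remains.

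Second, I would iterate the defect identity twice,
\begin{equation*}
R_{s_{n-1}t_{n-1}}=R_{s_{n-1}s_n}+R_{s_nt_n}+R_{t_nt_{n-1}}+\delta_n,
\end{equation*}
where $\delta_n$ collects two instances of the quasi-additivity defect. Using the $\theta_1\in(0,\theta)$ that realizes $V_\theta(R)$, each of the two defects involved in $\delta_n$ admits a bound of the form $V_\theta(R)(t_{n-1}-s_{n-1})^{\theta_1}\bigl((t_{n-1}-s_{n-1})/2\bigr)^{\theta-\theta_1}$, so $|\delta_n|\le C\,V_\theta(R)(t_{n-1}-s_{n-1})^{\theta}$ with geometric decay in $n$. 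Since $\|R\|_\theta<\infty$ forces $R$ to vanish on the diagonal and $R\in C([0,T]^2,\mathcal B)$, the common limit $s_\infty=t_\infty$ of the two sequences yields $R_{s_nt_n}\to 0$. Telescoping gives
\begin{equation*}
R_{s_0t_0}=\sum_{n\ge 1}\bigl(R_{s_{n-1}s_n}+R_{t_nt_{n-1}}\bigr)+\sum_{n\ge 1}\delta_n.
\end{equation*}

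Third, both series are geometric with ratio $2^{-\theta}$, so summing them produces $|R_{s_0t_0}|\le C(U_{\theta+2/p,p}(R)+V_\theta(R))\,\ell^{\theta}$, which is the required bound on $\|R\|_\theta$.

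The main obstacle is Step 1: the Chebyshev argument must be threaded carefully so that the thresholds at successive scales are compatible, i.e.\ the total $U^p$-budget can be split across countably many scales in a way leaving, at every stage, a subset of splitting pairs of strictly positive measure. This is the technical heart of the GRR method; the rest of the proof is essentially bookkeeping of two geometric series.
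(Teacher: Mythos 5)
The paper does not actually prove this lemma: it is stated as an extension of the Garsia--Rodemich--Rumsey inequality and attributed to \cite{MR2091358}, where the argument is precisely the bisection/telescoping scheme you outline. So your overall strategy is the intended one: telescope $R_{s_0t_0}$ along a nested sequence of subintervals, bound the increment terms via a Chebyshev selection using $U$, bound the cocycle defects via $V$, and sum two geometric series. Steps 2 and 3 are sound as written (modulo the cosmetic point that the infimum in $V_\theta$ need not be attained, so one works with a near-minimizing $\theta_1$).

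Two points need repair. First, your justification that $R_{s_nt_n}\to 0$ is circular: you invoke $\|R\|_\theta<\infty$, which is the conclusion. What actually forces $R$ to vanish on the diagonal is $U_{\theta+2/p,p}(R)<\infty$ (which one may assume, else there is nothing to prove): if $R_{\tau\tau}\neq 0$, continuity bounds $|R_{ts}|$ below near $(\tau,\tau)$ while $\int\!\!\int|t-s|^{-\theta p-2}\,dt\,ds$ diverges there. Second, and more substantively, the Chebyshev selection in Step 1 does not, as described, control the quantities you need. A Chebyshev argument on the double integral restricted to the sub-rectangle $[s_{n-1},m_{n-1}]\times[m_{n-1},t_{n-1}]$ produces a positive-measure set of pairs $(s_n,t_n)$ for which $|R_{s_nt_n}|$ is controlled --- but your telescoping requires bounds on $|R_{s_{n-1}s_n}|$ and $|R_{t_nt_{n-1}}|$, which involve the endpoints fixed at the \emph{previous} stage and are invisible to the double integral over the new rectangle. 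The standard GRR fix is to propagate along the induction the one-variable mass $I(u)=\int_0^T\bigl(|R_{uv}|/|u-v|^{\theta+2/p}\bigr)^p\,dv$: at each stage choose $s_n$ (and symmetrically $t_n$) so that simultaneously $I(s_n)$ is at most a constant times the average of $I$ over the admissible interval, and $|R_{s_{n-1}s_n}|^p/|s_{n-1}-s_n|^{\theta p+2}$ is at most a constant times $I(s_{n-1})$ divided by the interval length; each condition fails on a set of measure less than half the interval, so an admissible point exists, and the two bounds combine to give exactly your claimed local estimate. You correctly identify this as the technical heart, but the mechanism you sketch (a single Chebyshev condition on the double integral per scale) would not close the induction; with the two-condition selection it does.
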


\begin{corollary}
We have $\|\mathbb{X}_{{\text{It\^o}}}\|_{2\gamma} < \infty$ $\W$-almost 
surely.  
\end{corollary}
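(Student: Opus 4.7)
My plan is to apply Lemma~\ref{lemma:besov} to $R_{st} = \mathbb{X}_{{\text{It\^o}},st}^{ij}$ with exponent $\theta = 2\gamma$, and to control the two terms on the right hand side of~(\ref{eq:generalboundxx}) separately and $\W$-almost surely.

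For the $V_{2\gamma}$ term the multiplicative property recorded just before the lemma essentially does the job for free: on the full-measure event on which $X$ is $\gamma$-H\"older,
$$
|\mathbb{X}^{ij}_{st} - \mathbb{X}^{ij}_{ut} - \mathbb{X}^{ij}_{su}| \;=\; |X^i_{ut}\, X^j_{su}| \;\le\; \|X\|_\gamma^2 \, |t-u|^\gamma \, |u-s|^\gamma.
$$
Choosing $\theta_1 = \gamma$ in the infimum defining $V_\theta$ then yields $V_{2\gamma}(\mathbb{X}_{\text{It\^o}}) \le \|X\|_\gamma^2 < \infty$ $\W$-almost surely.

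For the $U$ term I would exploit the fact that $\mathbb{X}_{\text{It\^o},st}^{ij}$ lives in the second Wiener chaos of Brownian motion. The Burkholder-Davis-Gundy inequality applied to the defining It\^o integral (equivalently, equivalence of $L^p$-norms on a fixed chaos combined with the It\^o isometry) yields the scaling bound $\mathbb{E}|\mathbb{X}^{ij}_{\text{It\^o},st}|^p \le C_p |t-s|^p$ for every $p \ge 1$. Since $\gamma < 1/2$, I can pick $p$ large enough that $p(1-2\gamma) > 1$, and Fubini then gives
$$
\mathbb{E}\bigl[U_{2\gamma+2/p,p}(\mathbb{X}_{\text{It\^o}})^p\bigr] \;=\; \int_{[-T,T]^2} \frac{\mathbb{E}|\mathbb{X}_{\text{It\^o},st}|^p}{|t-s|^{2\gamma p + 2}}\, dt\, ds \;\le\; C \int_{[-T,T]^2} |t-s|^{p(1-2\gamma) - 2}\, dt\, ds \;<\; \infty,
$$
so that $U_{2\gamma+2/p,p}(\mathbb{X}_{\text{It\^o}}) < \infty$ $\W$-almost surely. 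Inserting both estimates in~(\ref{eq:generalboundxx}) then closes the argument.

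The step I expect to need the most care is the chaos/BDG moment bound, since it has to be phrased for the specific two-parameter object $\mathbb{X}^{ij}_{\text{It\^o},st}$; beyond that, the corollary is essentially a black-box application of Lemma~\ref{lemma:besov}. A minor side-step is that Lemma~\ref{lemma:besov} requires a jointly continuous $R$, so I must first fix a jointly continuous version of $(s,t) \mapsto \mathbb{X}^{ij}_{\text{It\^o},st}$; this version is produced by a two-parameter Kolmogorov-type criterion fed by the same moment bound, so it does not constitute a separate difficulty.
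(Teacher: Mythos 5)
Your proposal is correct and follows essentially the same route as the paper: bound $V_{2\gamma}$ via the multiplicative property and the $\gamma$-H\"older norm of $X$, and bound the $U$-term by a Burkholder--Davis--Gundy moment estimate $\expect|\mathbb{X}_{st}|^p\le C_p|t-s|^p$ followed by Fubini, taking $p>1/(1-2\gamma)$. If anything you are slightly more careful than the paper, which works with $U_{2\gamma,p}$ rather than the $U_{2\gamma+2/p,p}$ that Lemma~\ref{lemma:besov} literally requires, and which leaves the existence of a jointly continuous version implicit.
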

\begin{proof}
Consider $V_{2\gamma}(\mathbb{X}_{\text{It\^o}})$. By using the 
multiplicative property we have
\begin{eqnarray*}
V_{2\gamma}(\mathbb{X}_{\text{It\^o}}) 
&\le&  
\inf_{\theta_1 \in (0,2\gamma)} \sup_{t \neq u \neq s} 
\frac{|X_{tu}||X_{us}|}{|t-u|^{\theta_1}|u-s|^{2\gamma-\theta_1}}\\ 
&\le&  
\inf_{\theta_1 \in(0,2\gamma)} \sup_{t \neq u} \frac{|X_{tu}|}
{|t-u|^{\theta_1}}\sup_{u \neq s}
\frac{|X_{us}|}{|u-s|^{2\gamma-\theta_1}}\\
&\le&  
\left(\sup_{t \neq u} \frac{|X_{tu}|}{|t-u|^{\gamma}}\right)^2\\
& = &
\|X\|_\gamma^2.
\end{eqnarray*}
Moreover,
\begin{eqnarray*}
\expect [U_{2\gamma,p}(\mathbb{X}_{\text{It\^o}})^p] 
&=& 
\expect  \left[\int_{[0,T]^2}\left(\frac{|(\mathbb{X}_{\text{It\^o}})_{t s}|}
{|t-s|^{2\gamma}}\right)^p dt ds \right]\\
&=&    
\int_{[0,T]^2}\frac{ \expect [|(\mathbb{X}_{\text{It\^o}})_{t s}|^p]}
{|t-s|^{2\gamma p}} dt ds, 
\end{eqnarray*}
with expectation with respect to Wiener measure. An application of the 
Burkholder-Davis-Gundy inequality~\cite{Yor} allows to estimate the 
$p$-moment of the double stochastic integral as
\begin{eqnarray*}
\expect [|(\mathbb{X}_{\text{It\^o}})_{t s}|^p] 
& \le&
c_p \expect \left[\int_s^t |X_{us}|^2 ds\right]^{p/2} \\
&\le&
c_p |t-s|^{p/2-1} \expect \left[\int_s^t |X_{us}|^p ds\right] \\ 
& \le &
c_p' |t-s|^{p/2-1}\int_s^t |u-s|^{p/2} ds \\
&\le &
c_p'' |t-s|^{p},   
\end{eqnarray*}
for all $p > 1$ and some $c_p, c_p', c_p'' > 0$. Thus
\begin{eqnarray*}
\expect [U_{2\gamma,p}(\mathbb{X}_{\text{It\^o}})^p] 
&\le& 
c_p'' \int_{[0,T]^2} \frac{ 1}{|t-s|^{(2\gamma-1) p}} dt ds < \infty,
\end{eqnarray*}
for any $\gamma < 1/2$, by choosing $p$ large enough ($p > 1/(1-2\gamma)$).
\end{proof}
 
This last result shows that there exists a version of the stochastic process
$(X,\mathbb{X}_{\text{It\^o}})$ that is a step-2 rough path; from now on we 
denote by $(X,\mathbb{X}_{\text{It\^o}})$ this particular version. Then 
integrals can be defined by applying Theorem \ref{th:rough-main}. We
call such an integral \emph{rough integral} over $(X,\mathbb{X}_{\text{It\^o}})$.

The relationship between the rough integral and the It\^o integral is made
clear by 

\begin{lemma}
The rough integral over the couple $(X,\mathbb{X}_{\text{It\^o}})$
coincides with the It\^o integral for any $\varphi \in C^2$.  
\end{lemma}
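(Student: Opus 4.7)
The plan is to compare the defining Riemann-type sums of the two integrals along a common sequence of partitions whose mesh tends to zero, and to show that their difference vanishes in probability. By Theorem~\ref{th:rough-main} the rough integral along the dyadic partition $\{\tau^{(n)}_\alpha\}$ of $[0,T]$ is the limit of
$$
R_n := \sum_{\alpha} \varphi_i(\tau^{(n)}_\alpha,X_{\tau^{(n)}_\alpha}) X^i_{\tau^{(n)}_\alpha \tau^{(n)}_{\alpha+1}} + \sum_\alpha \nabla_j\varphi_i(\tau^{(n)}_\alpha,X_{\tau^{(n)}_\alpha}) (\mathbb{X}_{\text{It\^o}})^{ij}_{\tau^{(n)}_\alpha \tau^{(n)}_{\alpha+1}},
$$
while, since $\varphi\in C^2$ and $X$ is a continuous semimartingale, the standard theory of It\^o integration guarantees that the first sum above converges in probability to the It\^o integral $\int_0^T \varphi(u,X_u)dX_u$. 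Thus the lemma reduces to showing that the second (correction) sum
$$
D_n := \sum_\alpha \nabla_j\varphi_i(\tau^{(n)}_\alpha,X_{\tau^{(n)}_\alpha}) (\mathbb{X}_{\text{It\^o}})^{ij}_{\tau^{(n)}_\alpha \tau^{(n)}_{\alpha+1}}
$$
tends to zero in probability as $n\to\infty$.

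To handle $D_n$ I would split the sum according to whether $i=j$ or $i\neq j$. For $i\neq j$, the increments $(\mathbb{X}_{\text{It\^o}})^{ij}_{\tau^{(n)}_\alpha \tau^{(n)}_{\alpha+1}} = \int_{\tau^{(n)}_\alpha}^{\tau^{(n)}_{\alpha+1}} X^i_{\tau^{(n)}_\alpha u} dX^j_u$ are martingale increments conditioned on $\F_{\tau^{(n)}_\alpha}$, and independence of the components $X^i,X^j$ combined with It\^o's isometry yields the second moment bound $C |\tau^{(n)}_{\alpha+1}-\tau^{(n)}_\alpha|^2$. Since $\nabla_j\varphi_i$ is bounded and the increments are orthogonal martingale differences, an orthogonality computation gives $\expect[(D_n^{i\neq j})^2] \le C\sum_\alpha |\tau^{(n)}_{\alpha+1}-\tau^{(n)}_\alpha|^2 = O(2^{-n})$, which goes to zero.

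For $i=j$ the iterated It\^o integral evaluates explicitly: applying It\^o's formula to $x\mapsto x^2/2$ gives $(\mathbb{X}_{\text{It\^o}})^{ii}_{st} = \tfrac12 (X^i_{st})^2 - \tfrac12 (t-s)$. Hence the diagonal part of $D_n$ equals
$$
\tfrac12 \sum_\alpha \nabla_i\varphi_i(\tau^{(n)}_\alpha,X_{\tau^{(n)}_\alpha}) (X^i_{\tau^{(n)}_\alpha \tau^{(n)}_{\alpha+1}})^2 - \tfrac12 \sum_\alpha \nabla_i\varphi_i(\tau^{(n)}_\alpha,X_{\tau^{(n)}_\alpha}) (\tau^{(n)}_{\alpha+1}-\tau^{(n)}_\alpha).
$$
The second sum is a Riemann sum converging to $\tfrac12\int_0^T \nabla_i\varphi_i(u,X_u)\,du$, and the first converges to the same limit in $L^2$ by the standard quadratic-variation argument for continuous Brownian functionals (this is the same ingredient used to prove It\^o's formula). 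Their difference therefore tends to zero in probability, completing the argument.

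The only genuinely delicate step is the diagonal case $i=j$, where neither mean-zero martingale orthogonality nor a crude pathwise bound is enough and one must instead invoke the cancellation between the quadratic variation of $X^i$ and the Lebesgue correction built into $\mathbb{X}_{\text{It\^o}}^{ii}$. Once both cases are handled, passing from dyadic to arbitrary partitions is standard (the rough-integral limit does not depend on the partition sequence, by Theorem~\ref{th:rough-main} and the arguments of~\cite{MR2091358}), and we conclude that the rough integral coincides with the It\^o integral almost surely.
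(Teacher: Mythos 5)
Your proposal is correct and its overall architecture is the same as the paper's: reduce the claim to showing that the compensator sum $D_n=\sum_\alpha \nabla_j\varphi_i(X_{\tau_\alpha})\,\mathbb{X}^{ij}_{\tau_\alpha\tau_{\alpha+1}}$ vanishes, since the plain Riemann sums already converge to the It\^o integral. Where you diverge is in the treatment of the diagonal terms. The paper makes no case distinction: for \emph{every} pair $(i,j)$, including $i=j$, the increment $\mathbb{X}^{ij}_{st}=\int_s^t X^i_{su}\,dX^j_u$ is an It\^o integral of an adapted integrand, hence satisfies $\expect[\mathbb{X}^{ij}_{st}\,|\,\F_s]=0$ and, by the It\^o isometry, $\expect[(\mathbb{X}^{ij}_{st})^2]=\int_s^t\expect[(X^i_{su})^2]\,du=(t-s)^2/2$. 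Conditioning on the later partition point kills all cross terms, so $\expect[D_n^2]\le C\sum_\alpha|\tau_{\alpha+1}-\tau_\alpha|^2\to 0$ uniformly in $i,j$. Your assertion that ``neither mean-zero martingale orthogonality nor a crude pathwise bound is enough'' on the diagonal is therefore mistaken: the orthogonality argument does not require $i\neq j$, only adaptedness of $X^i_{s\cdot}$, and the second-moment bound is identical in the diagonal case. Your workaround via the explicit identity $\mathbb{X}^{ii}_{st}=\tfrac12(X^i_{st})^2-\tfrac12(t-s)$ and the quadratic-variation cancellation is perfectly valid (it is essentially the It\^o-formula mechanism), but it costs you an extra limit theorem and buys nothing; the unified $L^2$ computation is both shorter and what the paper actually does. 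The closing remark on passing from dyadic to general partitions matches the paper.
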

\begin{proof}
By It\^o theory the sums
$
\sum_\alpha \varphi_i(X_{\tau_\alpha}) X^i_{\tau_{\alpha+1} \tau_\alpha}
$ 
converge in probability to the It\^o integral $\int_0^T \varphi(X_u) dX_u$.
Hence it suffices to show that the sums
$
\sum_\alpha \nabla_j \varphi_i(X_{\tau_\alpha}) \mathbb{X}_{\tau_{\alpha+1}
  \tau_\alpha}^{ij} 
$
converge to zero in $L^2$ sense as then it follows that the two
integrals almost surely coincide. A simple computation shows that
\begin{equation}
\expect\left[ \sum_\alpha \nabla_j \varphi_i(X_{\tau_\alpha}) 
\mathbb{X}_{\tau_{\alpha+1} \tau_\alpha}^{ij} \right]^2 
=  
\sum_\alpha \expect \left[  \nabla_j \varphi_i(X_{\tau_\alpha}) 
\mathbb{X}_{\tau_{\alpha+1} \tau_\alpha}^{ij} \right]^2,
\nonumber
\end{equation}
since the cross terms are all zero in the mean by independence of the
increments of the Brownian motion. Hence 
\begin{equation*}
  \begin{split}
& \expect\left[ \sum_\alpha \nabla_j \varphi_i(X_{\tau_\alpha}) 
  \mathbb{X}_{\tau_{\alpha+1} \tau_\alpha}^{ij} \right]^2
    \le \|\varphi\|_1 \sum_\alpha \expect \left[ \mathbb{X}_{\tau_{\alpha+1}
  \tau_\alpha}^{ij} \right]^2
    \le C \|\varphi\|_1 \sum_\alpha |\tau_{\alpha+1}
 - \tau_\alpha|^{2}
  \end{split}
\end{equation*}
The last sum vanishes as the mesh of the partition shrinks to zero, thus
the claim follows.
\end{proof}

In this construction choosing the It\^o version for the double integral
$\mathbb{X}$ was arbitrary. Alternatively we could have considered other
definitions, e.g. Stratonovich integral and let
$$
(\mathbb{X}_{\text{Stra}})_{st}^{ij} = \int_s^t \int_s^u \circ dX^i_v
\circ dX^j_u
$$
where $\circ dX$ stands for Stratonovich (or symmetric) integration. By 
the same procedure we find a regular version of $\mathbb{X}_{\text{Stra}}$ 
such that $\|\mathbb{X}_{\text{Stra}}\|_{2\gamma} < \infty$ and can 
construct the rough integral over the couple $(X,\mathbb{X}_{\text{Stra}})$ 
which we denote again $\circ dX$. It is not difficult to prove that for
$\varphi \in C^2$ it coincides with the familiar Stratonovich integral.

The relationship between the rough integrals based on the It\^o and 
Stratonovich constructions follows from the identity
$$
(\mathbb{X}_{\text{Stra}})_{st}^{ij} =
(\mathbb{X}_{\text{It\^o}})_{st}^{ij} + \frac{1}{2} \delta_{ij} (t-s)
$$
between It\^o and Stratonovich stochastic iterated integrals. The
correction is given by the increment of the function $t \mapsto
\delta_{ij} t/2$. Thus the two rough integrals are related by the 
familiar formula
$$
\int_0^T \varphi(X_u) \circ dX_u = \int_0^T \varphi(X_u) dX_u +
\frac{1}{2} \int_0^T \text{div } \varphi(X_u) du.
$$
This is obtained directly from the definitions with the modified
Riemann sums. 

\begin{remark}
\rm{
Due to the multiplicative property the possible choices for 
$\mathbb{X}$ differ only by the increment of a function, i.e., 
if $\mathbb{X}_1$ and $\mathbb{X}_2$ both satisfy the 
multiplicative property with respect to $X$, then
$
(\mathbb{X}_1)_{st}^{ij}- (\mathbb{X}_2)_{st}^{ij} = 
f^{ij}_t-f^{ij}_s
$ 
with a function $f \in C([0,T],\R^d\times \R^d)$.   
}
\end{remark}

Finally, it is not difficult to see the following regularity result.
\begin{corollary}
The map $\varphi \mapsto \int_0^1 \varphi(X_u) dX_u$ is continuous
from $C^2(\R^d,\R^d)$ to $\R$.  
\end{corollary}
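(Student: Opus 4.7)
The plan is to deduce the statement directly from the quantitative bound in Theorem~\ref{th:rough-main}. The integral $\varphi \mapsto \int_0^1 \varphi(X_u)\, dX_u$ is constructed as a limit of Riemann-type sums that are linear in $\varphi$, so the map is linear. Continuity therefore reduces to showing boundedness with respect to the $C^2$ topology.

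First I would observe that a time-independent $\varphi \in C^2(\R^d,\R^d)$ can be regarded as an element of the space appearing in Theorem~\ref{th:rough-main}, for any admissible exponent $\rho$ with $\rho+\gamma>1$ (the time-Hölder seminorm trivially vanishes since $\varphi$ has no $t$-dependence). Consequently
$$
\|\varphi\|_{\rho,2,0,1} \;=\; \sup_{x\in\R^d}\, \max_{k=0,1,2} |\nabla^k \varphi(x)| \;=:\; \|\varphi\|_{C^2},
$$
which is a bona fide norm on $C^2(\R^d,\R^d)$ (or, if one prefers to work locally, on any space $C^2$ of bounded functions with bounded derivatives on a compact neighborhood of $X([0,1])$, which suffices since the Riemann sums only sample $\varphi$ on that compact set).

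Next I would apply the estimate of Theorem~\ref{th:rough-main} with $T=1$, obtaining
$$
\left|\int_0^1 \varphi(X_u)\, dX_u \right| \;\le\; C\, \|\varphi\|_{C^2}\,\bigl(1+\|X\|_\gamma+\|\mathbb{X}\|_{2\gamma}\bigr)^3.
$$
Since $(X,\mathbb{X})$ is a step-2 rough path, the constant multiplying $\|\varphi\|_{C^2}$ on the right-hand side is finite. By linearity, applying this bound to $\varphi_n - \varphi$ for a sequence $\varphi_n \to \varphi$ in $C^2$ yields
$$
\left|\int_0^1 \varphi_n(X_u)\, dX_u - \int_0^1 \varphi(X_u)\, dX_u\right| \;\le\; C\, \|\varphi_n-\varphi\|_{C^2}\,\bigl(1+\|X\|_\gamma+\|\mathbb{X}\|_{2\gamma}\bigr)^3 \;\longrightarrow\; 0,
$$
which gives the claimed continuity.

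There is really no obstacle here beyond the bookkeeping point of matching the hypotheses of Theorem~\ref{th:rough-main} with the time-independent situation; the only mildly delicate issue is making sure the $C^2$ topology one uses is strong enough to dominate the seminorm $\|\cdot\|_{\rho,2,0,1}$, which is automatic once one notices that the time-Hölder part of that seminorm is zero for time-independent integrands.
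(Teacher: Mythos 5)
Your proof is correct and follows exactly the route the paper intends: the corollary is stated without proof as an immediate consequence of the bound in Theorem~\ref{th:rough-main}, whose right-hand side is linear in $\|\varphi\|_{\rho,2,0,T}$, and for time-independent integrands that seminorm reduces to the $C^2$ sup-norm, so linearity of the modified Riemann sums in $\varphi$ gives continuity. Your remark that the time-H\"older part of the seminorm vanishes for time-independent $\varphi$ is the right bookkeeping point.
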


\section{Stochastic currents}
\subsection{Lifting Wiener measure to the space of currents}
Let $\X = C(\R, \R^d)$ be path space, i.e., the space of continuous
functions from $\R$ to $\R^d$. The $\sigma$-field $\F$ is generated by 
the coordinate process $\X \ni X \mapsto X_t \in \R^d$. For $I \subset 
\R$ we denote by $\F_I$ the $\sigma$-field over $\X$ generated by the 
evaluations for points in $I$, and write $\F_T$ when $I =
[-T,T]$. Also, we put for a shorthand $I^c = \R \setminus I$. 
For $s \leq t$, the set $\{\mathcal{F}_{st} = 
\sigma(X_u : s \le u \le t)\}_{s \leq t}$ is the forward filtration 
starting at $s \in \R$.

Denote as before by $\W$ Wiener measure defined on $\X$ (instead of 
$C([0,\infty),\R^d)$ as more usual), and write $H_0 = -(1/2)\Delta$. 
For any finite division $t_1 < t_2 < ... < t_n \in \R$  we have
\begin{equation} \label{wiener}
\W(F) = (f_1,e^{-(t_2-t_1)H_0} f_2 \ldots 
e^{-(t_n-t_{n-1})H_0} f_n)_{L^2(\R^d,dx)}
\end{equation}
where $F = f_1(X_{t_1})\cdots f_n(X_{t_n})$. Here it is understood
that $f_2, \ldots, f_{n-1}$ act as multiplication operators for which 
we use the same symbol as for the corresponding functions. The
operator $e^{-tH_0}$ has the integral kernel 
\begin{equation} \label{(2.2)}
\Pi_t(x,y) = \frac{1}{(2 \pi t)^{d/2}} 
\exp \left( - \frac{1}{2t} (x-y)^2 \right), \quad x,y \in \R^d.
\end{equation}
We denote by $\W_I$ the measure $\W$ restricted to $\X_I = C(I,\R^d)$.
Similarly, with given $T>0$, $\xi, \eta \in \R^d$  write 
\begin{equation} \label{(2.3)}
\W^{\xi,\eta}_I(F) = \left( e^{-(-T-t_1)H_0} f_1 e^{-(t_2-t_1)H_0}f_2 
\ldots e^{-(t_n-t_{n-1})H_0} \left( f_n e^{-(T-t_n)H_0}(\cdot,\eta) 
\right) \right) (\xi)
\end{equation}
with $I=[-T,T]$ and let $\WW^{\xi,\eta}_I(F) = \W^{\xi,\eta}_I(F)/ 
\Pi_{2T}(\xi,\eta)$ be the Brownian bridge starting in $\xi$ at $-T$ 
and ending in $\eta$ at $T$. Under $\WW^{x,y}_{[-T,T]}$ the process 
$X_t$ is a Gaussian semimartingale (Brownian bridge) satisfying the SDE
$$
dX_u = - \frac{y-X_u}{T-u} du + dB_u
$$
where $(B_u)_{u\in[-T,T]}$ is a $\WW^{x,y}_{[-T,T]}$-Browian motion. 

\medskip
Next, let $\D$ be a Banach space of functions from $\R \times \R^d$
to $\R^d$ containing the space of smooth functions 
$C^\infty(\R \times \R^d; \R^d)$. Denote with $\|\cdot\|_\D$ the
Banach norm. Let $\D'$ the topological dual of $\D$, $\Delta_I =
\{(s,t)\in I^2 | s\le t\}$, and write $\Delta = \Delta_\R$. 
\begin{definition}
$C \in C(\Delta ; \D')$ is a \emph{stochastic current} if it 
satisfies the following properties:
\begin{enumerate}
\vspace{0.1cm}
\item[(1)]
$
C_{tt}(\varphi) = 0, \;\; C_{su}(\varphi) + C_{ut}(\varphi) = 
C_{st}(\varphi), \;\;
$
for any $s\le u\le t$ and any $\varphi \in \D$;
\vspace{0.1cm}
\item[(2)]
\emph{locality property:} \quad $C_{st}(\varphi)=0 \;$ 
whenever $\;\varphi(u,x)=0$ for all $u\in[s,t]$, $x \in \R^d$.
\end{enumerate}
We denote $\C \subset C(\Delta ; \D')$ the space of stochastic
currents. 
\end{definition}
Set $\Xi = \X \times \C$ endowed with the product topology and 
with the Borel $\sigma$-field (on whose component generated
by $\C$ we consider the topology of uniform convergence on bounded 
intervals). $\Xi$ plays the role of joint path-current configuration 
space. As a measurable space, it is endowed with a family of 
$\sigma$-algebras 
$\{\A_{st}\}_{t>s}$ such that $\A_{st} = \sigma(X_u, C_{uv}(\varphi) 
: u,v \in [t,s], \varphi \in \D )$. Similarly, we can define the 
forward filtration $\{\A_{t}^+\}_t = \{\A_{+\infty,t}\}_t$ and the 
backward filtration $\{\A^-_t\}_t = \{\A_{t,-\infty}\}_t$. The above 
definitions make sense also in the case that the parameter $t$ is 
restricted to a bounded interval $I\subset \R$; in this case we denote 
with $\Xi_I$ the corresponding space. Whenever the limits make sense 
we define $C^{+}_t(\varphi) = \lim_{s\to\infty} C_{ts}(\varphi)$ and 
$C^-_t(\varphi) = \lim_{s\to -\infty}C_{st}(\varphi)$.

\begin{definition}
\label{def:forward-current}
A \emph{forward current} (on $I\subset \R$) is a measure $\eta$ on
$\Xi$ (on $\Xi_I$) such  that the process $X$ is an 
$(\eta,\{\F_t^+\}_t)$-semimartingale and
\begin{equation}
  \label{eq:curr-constr-0}
C_{st}^X(\varphi) = \int_s^t \varphi(u,X_u) dX_u, 
\qquad \eta-\text{a.s.}  
\end{equation}
for any $(s,t)\in \Delta$ (or $\Delta_I$) and any adapted $\varphi \in 
\D$ where on the right hand side we have the standard It\^o integral on 
the semimartingale $X$. When $X$ is Brownian motion, we call $\eta$ 
\emph{(forward) Brownian current}. 
\end{definition}
In order not to multiply terminology, unless confusion may arise we 
will use the term \emph{current} also for the elements of $\D'$, of 
$\Xi$, and for the laws on $\Xi$ without making explicit distinction. 
For unspecified bounded intervals $I$ we use the notation $C_I^X$ for
the associated current with integrator $X$.

Next we want to construct a (non-trivial) measure on $\Xi$ for the
Brownian current. Thus we start from Wiener measure $\W$ (similarly 
we could have worked with the Brownian bridge $\W_I^{x,y}$) and prove 
that there exists a map $F: \X \ni \omega \mapsto F(\omega)\in\C$ such
 that
\begin{equation}
  \label{eq:curr-constr-0-bis}
F(\cdot)_{st}(\varphi) = \int_s^t \varphi(u,X_u) dX_u, 
\qquad \W-\text{a.s.}  
\end{equation}
for any adapted $\varphi \in \D$, with the standard It\^o integral 
at the right hand side. Then a measure $\W^\sharp$ on $\Xi$ can be 
defined as the law of the couple $(X,F)$ under the measure $\W$ and 
it will be a forward current. The existence of a regular version of 
map $F$ is an interesting problem in itself which can be addressed 
by using the techniques developed in~\cite{FGGT, FGR} for what 
concerns the regular dependence on $\varphi$. Unfortunately, the 
topology $\D$ which is implied by such approaches is unsuitable for 
our applications. Here we prefer to use the theory of rough paths 
which will provide the necessary regularity for the $F$ map in a 
more convenient topology.

In Section ~\ref{sec:rough-paths} we developed the basic tools of
rough-path theory that we need in order to lift Wiener measure to 
currents. We do this next.

For any $\alpha > 1$, let $\D_\alpha$ be the completion of the space of 
smooth test vector fields with respect to the norm
$$
\|\varphi\|_{\D_\alpha} = \sup_{k \in \Z} (1+|k|)^\alpha 
\|\varphi\|_{\rho,2,k,k+1}.
$$ 
In the following we will fix $\alpha>1$ but otherwise arbitrarily small and 
write $\D_\alpha = \D$.
\begin{lemma}
\label{lemma:rough}
For any $\gamma \ge 1/3$, $\rho > 1-\gamma$, $t>s$ and $x,y\in \R^d$, 
there exists a family of random variables $F \in \C$ such that
(\ref{eq:curr-constr-0}) holds with respect to $\W_{[s,t]}^{x,y}$, 
and which satisfy the pathwise bound
\begin{equation}
  \label{eq:F-bound}
 |F_{uv}(\varphi)| \le C_{\gamma,\rho,|t-s|} |u-v|^\gamma (1+N_{[s,t]}(X))^3 
 \|\varphi\|_{\rho,2,s,t}
\end{equation}
for any $\varphi \in \D$, $(u,v) \in \Delta_{[s,t]}$, where
$N_{[s,t]}(X) = (\|X\|_{\gamma,[s,t]} + 
\|\mathbb{X}^2\|_{2\gamma,[s,t]})$ and where
$C_{\gamma,\rho,|t-s|}$ depends only on $\gamma,\rho,|t-s|$.
\end{lemma}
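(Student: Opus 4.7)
The plan is to construct $F$ as the rough integral built from the It\^o iterated integral of the Brownian bridge, and then read off all the stated properties from the theory developed in Section~\ref{sec:rough-paths}.

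First I would verify that the construction of $\mathbb{X}_{\text{It\^o}}$ carried out in the previous subsection for Wiener measure carries over to the Brownian bridge $\W^{x,y}_{[s,t]}$. Under this measure $X$ is still a Gaussian semimartingale (with drift $-(y-X_u)/(t-u)$ and unit diffusion), so the It\^o integral $(\mathbb{X}_{\text{It\^o}})^{ij}_{uv} = \int_u^v X^i_{uw} dX^j_w$ is well defined and the multiplicative property holds. The only change in the Garsia--Rodemich--Rumsey/BDG estimate of the previous subsection is in the value of the constants, which now depend on $|t-s|$ through the drift of the bridge. Hence, for any $\gamma\in(1/3,1/2)$, there exists a version of $(X,\mathbb{X}_{\text{It\^o}})$ that is a step-2 rough path on $[s,t]$ with $\|X\|_{\gamma,[s,t]} + \|\mathbb{X}\|_{2\gamma,[s,t]}<\infty$, $\W^{x,y}_{[s,t]}$-almost surely.

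Next I would define, for any $(u,v)\in\Delta_{[s,t]}$ and $\varphi\in\D$,
\[
F_{uv}(\varphi) := \int_u^v \varphi(w,X_w)\, dX_w,
\]
where the right-hand side is the rough integral of Theorem~\ref{th:rough-main} associated to $(X,\mathbb{X}_{\text{It\^o}})$ (note that our assumptions $\gamma\ge 1/3$ and $\rho>1-\gamma$ are exactly those of the theorem, and the norm $\|\varphi\|_{\rho,2,s,t}$ controls the norms needed locally). Properties (1) and (2) of the definition of a stochastic current follow from the corresponding properties of the approximating compensated Riemann sums $S_n$ in the proof of Theorem~\ref{th:rough-main}: Chasles' relation is immediate from the additivity of sums over refinements of partitions with a common splitting point, while locality is obvious because if $\varphi$ vanishes on $[u,v]\times\R^d$ all the terms in $S_n$ vanish. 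Continuity in $\varphi$ with respect to the $\|\cdot\|_{\rho,2,s,t}$ norm (hence in the $\D=\D_\alpha$ norm) and measurability in $\omega$ follow from the fact that the rough integral is a continuous linear functional in $\varphi$ and an almost sure limit of measurable Riemann sums; thus $F$ is a genuine element of $\C$.

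For the identification~(\ref{eq:curr-constr-0}) with the It\^o integral when $\varphi$ is adapted and sufficiently regular, I would invoke the previous lemma of Section~\ref{sec:rough-paths} showing that on $(X,\mathbb{X}_{\text{It\^o}})$ the rough integral coincides with the It\^o integral; the argument there (the cross terms in the $L^2$ expansion vanish by independence/orthogonality of increments, and the diagonal terms sum to zero) extends verbatim to the time-dependent $\varphi\in\D$ on the bridge, since the quadratic variation structure is the same as for Brownian motion. Finally, the pathwise bound~(\ref{eq:F-bound}) is simply the bound displayed in Theorem~\ref{th:rough-main} applied on the subinterval $[u,v]\subset[s,t]$, with $(1+\|X\|_{\gamma,[s,t]}+\|\mathbb{X}\|_{2\gamma,[s,t]})^3=(1+N_{[s,t]}(X))^3$, and with a constant $C_{\gamma,\rho,|t-s|}$ absorbing the dependence on $|t-s|$ coming from the reduction to unit length intervals (cover $[u,v]$ by at most $\lceil|t-s|\rceil$ subintervals of length $\le 1$ and apply the theorem on each).

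The only point requiring real care is the transfer of the rough-path construction from Wiener measure to the Brownian bridge, i.e., checking that $\|\mathbb{X}_{\text{It\^o}}\|_{2\gamma,[s,t]}$ is finite almost surely under $\W^{x,y}_{[s,t]}$ with a constant growing controllably in $|t-s|$; everything else is an application of the already established rough integral machinery.
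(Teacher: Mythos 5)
Your proposal is correct and follows essentially the same route as the paper: define $F_{uv}(\varphi)$ as the compensated (rough) Riemann-sum limit built on $(X,\mathbb{X}_{\text{It\^o}})$, read off the bound, Chasles relation and locality from Theorem \ref{th:rough-main}, and identify $F$ with the It\^o integral for adapted $\varphi$ using the finiteness of moments of $N_{[s,t]}(X)$ under the bridge measure. The only cosmetic difference is that the paper additionally sets $F_{uv}(\varphi)=0$ on the null set where $N_{[s,t]}(X)=\infty$ so that $F$ is defined everywhere and the pathwise bound holds trivially there.
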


\begin{proof}
Define
\begin{equation*}
F_{uv}(\varphi) = 
\lim_{\delta\tau_\alpha \to 0} 
\sum_\alpha \left(\varphi(\tau_\alpha,X_{\tau_\alpha}) 
+ \nabla \varphi(\tau_\alpha,X_{\tau_\alpha}) 
\mathbb{X}^2_{\tau_\alpha,\tau_{\alpha+1}}\right)
\end{equation*}
where  $\delta\tau_\alpha$ is the mesh of the partition and 
$\mathbb{X}^2_{st}\in C(\R^2;\R^d \times \R^d)$ is the 
twice iterated It\^o integral with respect to $X$. By the results 
on rough path theory in Section \ref{sec:rough-paths} (compare 
Theorem \ref{th:rough-main}), this limit exists whenever 
$N_{[s,t]}(X)<\infty$ and $\|\varphi\|_{2,\rho,s,t}<\infty$. Otherwise 
set $F_{uv}(\varphi) = 0$. Then $F$ is a well defined random variable 
obeying (\ref{eq:F-bound}). Moreover,  $F_{su}(\varphi)+F_{ut}
(\varphi)= F_{st}(\varphi)$ and the locality property for $F$ holds
by definition.

By straightforward estimates we can also prove that 
$\expect_{\W^{x,y}_I} [N_I(X)^3] < \infty$ for any $x,y,I$. Using
this last result, the equivalence between $F$ and the It\^o integral 
for the adapted vector field $\varphi$ can be proved by the same 
approach as the one used in the case of Wiener measure.
\end{proof}

A direct consequence of Lemma~\ref{lemma:rough} is that whenever 
$N_{[s,t]}(X) < \infty$ the map $\varphi \mapsto F(\omega)_{st}
(\varphi)$ can be considered as an element of $\D'$. Moreover if 
we let
$$
\mathcal{N}_{\alpha,p}(X) = \sum_{k \in \Z} (1+|k|)^{-\alpha} 
N_{[k,k+1]}(X)^p,
$$
then whenever $\mathcal{N}_{\alpha,3}(X) < \infty$, the boundary 
currents $C^+_t$ and $C^-_t$ are well defined for any $t$ as elements 
of $\D'$.

\begin{lemma}
\label{lemma:W-lift}
For every bounded $I \subset \R $ there exists a unique Brownian
current $\W^{\sharp,x,y}_I$ on $\Xi_I$. A similar statement holds for
the measures $\W^\sharp$ with first marginal $\W$. Moreover, since
under $\W$ we have  $\mathcal{N}_{\alpha,3}(X) < \infty$ a.s., the
boundary currents are well defined under $\W^\sharp$.
\end{lemma}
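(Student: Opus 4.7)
The plan is to construct $\W^{\sharp,x,y}_I$ (and analogously $\W^\sharp$) as the pushforward of $\W^{x,y}_I$ (resp.\ $\W$) under the Borel map $\X \ni X \mapsto (X,F(X)) \in \Xi_I$, where $F$ is the pathwise rough-path lift provided by Lemma~\ref{lemma:rough}. First I would check that $F$ really delivers a measurable map $\X \to \C$: the defining limit in Lemma~\ref{lemma:rough} is a pointwise limit of Borel functionals of $X$ (evaluated on dyadic partitions), convergent on the Borel set $\{N_{I}(X)<\infty\}$, which has full $\W^{x,y}_I$-measure. On that set the pathwise bound (\ref{eq:F-bound}) gives, for every fixed $(u,v)$, a bounded linear functional $\varphi\mapsto F_{uv}(\varphi)$ on $\D$, while additivity $F_{su}+F_{ut}=F_{st}$ and the locality property are immediate from the construction. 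Thus $F(X)\in\C$, and one easily verifies that $(u,v)\mapsto F_{uv}(\cdot)\in\D'$ is continuous on $\Delta_I$, so $F(X)\in C(\Delta_I;\D')$.

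Next I would verify the two defining properties of a forward Brownian current from Definition~\ref{def:forward-current}: the first marginal is $\W^{x,y}_I$, under which $X$ is a Brownian bridge and hence an $\F^+$-semimartingale; and for every adapted $\varphi\in\D$ the equality $C_{uv}^X(\varphi)=\int_u^v \varphi(w,X_w)\,dX_w$ holds $\W^{\sharp,x,y}_I$-a.s., exactly by the final assertion of Lemma~\ref{lemma:rough} (the rough integral coincides with the It\^o integral for adapted integrands, which was checked under $\W$ in the previous subsection and transfers to $\W^{x,y}_I$ via the SDE $dX_u=-\frac{y-X_u}{T-u}du+dB_u$ since the rough lift depends only on the semimartingale $X$ itself).

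For uniqueness, suppose $\mu$ is a second forward Brownian current on $\Xi_I$ with first marginal $\W^{x,y}_I$. For each adapted $\varphi\in\D$ the identity $C_{uv}(\varphi)=\int_u^v\varphi(w,X_w)dX_w = F(X)_{uv}(\varphi)$ holds on a $\mu$-full set whose null complement depends on $\varphi$. Choosing a countable family of adapted test vector fields dense in $\D$, the identity holds simultaneously on a single $\mu$-full set; the pathwise bound (\ref{eq:F-bound}), applied both to $F$ and (by taking limits of Riemann sums in $\mu$-probability) to the candidate $C$, then propagates the identity to all $\varphi\in\D$. Consequently $C=F(X)$ $\mu$-a.s., so $\mu=(X,F(X))_\ast \W^{x,y}_I=\W^{\sharp,x,y}_I$. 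The same argument works with $\W^{x,y}_I$ replaced by $\W$.

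Finally, for the boundary currents I would use the standard Gaussian/Borel-Cantelli estimate that $\expect_\W[N_{[k,k+1]}(X)^p]$ grows at most polynomially in $|k|$ (using stationarity of Brownian increments), which by Markov's inequality and Borel-Cantelli yields $\mathcal{N}_{\alpha,3}(X)<\infty$ $\W$-a.s.\ for any $\alpha>1$. Combined with (\ref{eq:F-bound}) this gives, for $\varphi\in\D=\D_\alpha$ and $k\in\Z$,
$$
|F_{t,k+1}(\varphi)-F_{t,k}(\varphi)| \le C_{\gamma,\rho}(1+|k|)^{-\alpha}\|\varphi\|_\D(1+N_{[k,k+1]}(X))^3,
$$
so $\sum_k$ of these increments converges absolutely and $C^\pm_t(\varphi)=\lim_{s\to\pm\infty}F_{ts}(\varphi)$ exists as an element of $\D'$ on a set of full $\W^\sharp$-measure.

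The main obstacle is the uniqueness step: the defining identity for a Brownian current only fixes $C_{uv}(\varphi)$ pathwise modulo a $\varphi$-dependent null set, so the argument must use separability of the class of adapted test vector fields together with the uniform pathwise bound from Lemma~\ref{lemma:rough} to collapse all these null sets into one and thereby identify $C$ with $F(X)$ as elements of $\C$.
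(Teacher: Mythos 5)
Your proposal is correct and follows essentially the same route as the paper: existence via the pushforward under the rough-path lift $F$ of Lemma~\ref{lemma:rough}, uniqueness from the defining identity~(\ref{eq:curr-constr-0}), and well-definedness of the boundary currents from the bound $|F_{k,k+1}(\varphi)|\le(1+|k|)^{-\alpha}\|\varphi\|_{\D}(1+N_{[k,k+1]}(X))^3$ together with the a.s.\ finiteness of $\mathcal{N}_{\alpha,3}(X)$. The only (harmless) deviations are that you spell out the null-set-collapsing argument for uniqueness, which the paper leaves implicit, and you reach $\mathcal{N}_{\alpha,3}(X)<\infty$ a.s.\ via Markov plus Borel--Cantelli rather than the paper's direct computation $\expect_{\W}[\mathcal{N}_{\alpha,3}(X)]\le C\sum_{i}(1+|i|)^{-\alpha}<\infty$ using the uniform bound on $\expect_\W[N_{[i,i+1]}(X)^3]$.
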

\begin{proof}
The existence of the lifted measure for Brownian bridge $\W^{x,y}_I$ 
is essentially contained in Lemma~\ref{lemma:rough}. Its uniqueness 
is a direct consequence of the property~(\ref{eq:curr-constr-0}). 
The proof in the case of $\W$ is similar and we are left to prove that
under $\W^\sharp$ the boundary currents are well defined. The integrals 
in every interval $[i,i+1]$ are well defined, moreover
$$
\left|\int_{i}^{i+1}  \varphi(u,X_u) dX_u\right|\le (1+|i|)^{-\alpha} 
\|\varphi\|_{\mathcal D} [1+N_{[i,i+1]}(X)]^3  
$$
so that the series
$
\int_{t}^{+\infty}\varphi(u,X_u) dX_u =   
\sum_{i>t-1}\int_{i\vee t}^{i+1}  \varphi(u,X_u) dX_u
$
is absolutely convergent if $\mathcal{N}_{\alpha,3}(X) < \infty$.
Under $\W$ we have
$
\expect_{\W} [(1+N_{[i,i+1]}(X))^3|X_0=x] \le C
$ 
uniformly in $i \in \Z$ and $x\in \R^d$ thus
$$
\expect_{\W} [\mathcal{N}_{\alpha,3}(X)|X_0=x]  = 
\sum_{i \in \Z}(1+|i|)^{-\alpha} 
\expect_{\W} [N_{[i,i+1]}(X)^3|X_0=x] 
\le 
C  \sum_{i \in \Z} (1+|i|)^{-\alpha} < \infty
$$
as soon as $\alpha > 1$. This implies that $\mathcal{N}_{\alpha,3}(X)$
is $\W$-a.s. finite.
\end{proof}

\begin{remark}
\rm{
Note that the lifting of a measure from $\X$ to $\Xi$ is in general 
not unique. For instance, we could decide to add some other term to 
the definition of the current,
$$
\tilde C_{st}^X(\varphi) = C_{st}^X(\varphi) + \int_s^t \mathrm{div}\,
\varphi(t, X_t) dt
$$
and obtain a different lifted measure (which is no longer a forward 
current). It would be interesting to explore whether different lifts 
may have a different physical meaning in the models. For instance, 
non-relativistic particles with spin can be (partially) described by 
a current $C^\sigma$ defined as
$$
C^X_{st}(\varphi)[\sigma] = C_{st}^X(\varphi) + \int_s^t \sigma_t 
\cdot \mathrm{curl}\, \varphi(t, X_t) dt
$$
where $(\sigma_t)_t$ is a vector-valued Poisson process describing
the spin of the particle~\cite{Jona,Jona2,HL1}.  
}
\end{remark}

\subsection{It\^o current}
\label{sec:ito-current}
Beside defining Brownian currents, it will be useful for our purposes 
below to define currents for Brownian paths subjected to a potential 
(or penalty function) $V$. The reason is that we need a sufficiently
strong confining mechanism of paths in order to investigate the effect
of a pair interaction on them (given by double stochastic integrals). 
The translation invariant ($V \equiv 0$) regime is presently little
understood. 

While we properly introduce potentials only in the next section, we 
require here that the integral kernel
\begin{equation} 
\label{fkf}
\hat\pi_{t-s}(x,y) = \int e^{- \int_s^t V(X_u)du} \, 
d\W^{x,y}_{[s,t]}(X), \quad \forall s<t \in \R, \;
\forall x,y \in \R^d
\end{equation}
exists. With assumptions on $V$ listed in Section 4.1 below this can be
ensured, and then furthermore the map $(t,x,y) \mapsto \hat \pi_t(x,y)$ 
is jointly continuous and bounded on $(0,\infty) \times \R^d \times 
\R^d$ giving rise to a semigroup $S_t$ via the formula
$$
(S_t f) (x) = \int \hat\pi_{t}(x,y) f(y).
$$
Moreover, for every $t>0$ the semigroup $S_t$ is a bounded operator 
from $L^p$ to $L^q$ for every $ 1 \leq p \leq q \leq \infty$, and by 
the Feynman-Kac formula and the Hille-Yoshida and Stone Theorems it
can be written as $S_t=e^{-tH}$, with $H$ coinciding with the
Schr\"odinger operator $H_0+V$ on $C_0^\infty(\R^d)$. In addition, 
$e^{-tH}f$ is a continuous function for every $f \in L^p, p \in 
[1,\infty]$, $\forall t > 0$. 

Let $\Psi$ be a ground state of the Schr\"odinger operator $H$, i.e., 
a normalized eigenfunction in $L^2(\R^d,dx)$ lying at $\inf\Spec H$. 
Under the conditions given in Section 4.1. this ground state is unique 
and has a strictly positive version. Using this we define the
probability measure $\nu$ on $(\X,\F,\W)$ by
\begin{equation} 
\label{pphi1}
\nu(A) = Z_T^{-1} \int  dx \, \Psi(x) \int dy \, \Psi(y) \int 1_{A}(X) 
e^{-\int_{-T}^T(X_t)dt} \, d\W^{x,y}_T(X)
\end{equation}
with normalizing constant $Z_T$, $A \in \F_{T}$. This measure can be
extended to a measure $\nu$ on the full $\F$ by making use of the 
facts $e^{-tH}\Psi = \Psi$ and $\norm[2]{\Psi} = 1$. The Feynman-Kac 
formula and the Markov property of Brownian motion imply that 
$\{\nu_T\}$ given on $\F_{T}, \,  T>0$, define a consistent family of 
probability measures. In particular, $\nu$ satisfies the DLR equations 
and thus is a Gibbs measure relative to Brownian motion for potential 
$V$; for further details see \cite{BL}. 

Moreover, $\nu$ is the law of a reversible diffusion process with 
stationary distribution $d\omega = \Psi^2 \, dx$ and stochastic 
generator $H_{\omega}$ acting in $L^2(\R^d,d\omega)$ as
$$ 
H_{\omega} f = \frac{1}{\Psi} H (\Psi f) = H_0 f - 
\left(\nabla \log \Psi, {\nabla f}\right)_{\R^d}.
$$
This process is called \emph{It\^o diffusion} (or 
\emph{$P(\phi)_{1}$-process} in quantum field theory). Its transition 
probabilities are given by 
\begin{equation} \label{transition prob}
\nu ( f(X_{t+s}) | X_{s} = x) = \int \pi_{t}(x,y) f(y) \, d\omega(y),
\end{equation}
where
\begin{equation} 
\pi_{t}(x,y) = \frac{\hat\pi_{t}(x,y)}{\Psi(x)\Psi(y)}
\label{pito}
\end{equation}
is the transition density of $\nu$ with respect to its stationary 
distribution. The It\^o process is Markovian, reversible, and has a 
version with continuous paths. Moreover, it is a Brownian
semi-martingale with respect to either the forward or the backward 
filtration, in particular it is the stationary solution of the forward 
stochastic differential equation 
\begin{equation}
  \label{eq:diffusion-eq}
dX_t = \nabla \log \Psi(X_t) dt + dB_t, 
\end{equation}
where $B_t$ is Brownian motion with respect to the forward filtration.

For the It\^o bridge, i.e., the regular conditional probability 
$\nu^{x,y}_T$ of $\nu$ given $X_{-T} = x$, $X_{T} = y$, we will use
the following representation. Take (\ref{pphi1}) describing the density 
of the measure $\nu_T$ with respect to Brownian motion. Then
\begin{equation}
  \label{eq:bridge-2}
\nu_T^{x,y}(A) = \frac{ \Pi_{2T}(x,y)}{Z_T\, \Psi(x) \Psi(y) \pi_{2T}(x,y)}
\expect_{\WW^{x,y}_{[-T,T]}} \left[ 1_A(X) e^{- V_{[-T,T]}(X) }\right],  
\end{equation}
with $\Pi_t$ the Brownian transition kernel (\ref{(2.2)}). This
formula can be checked by noting that 
\begin{equation*}
  \begin{split}
&
\hat \expect_{\nu_T} \left[f(x) g(y) \expect_{\nu_T^{x,y}}[Q] \right] \\ & 
\quad 
= Z_T^{-1} \hat \expect_{\mathcal{W}_{[-T,T]}} \left[f(x)g(y)\Psi(x)\Psi(y) 
\expect_{\WW_{[-T,T]}^{x,y}}[Q(X) e^{- \int_{-T}^TV(X_t)dt}] \right] \\ & 
\quad 
= Z_T^{-1} \expect_{\mathcal{W}_{[-T,T]}} \left[\Psi(X_{-T})\Psi(X_T) 
f(X_T) g(X_{-T}) Q(X) e^{-\int_{-T}^T V(X_t)dt}\right] \\ & 
\quad 
= \expect_{\nu_T} \left[f(X_T) g(X_{-T})Q(X) \right]
\end{split}
\end{equation*}
where $\hat \expect$ denotes the expectation in a new probability
space whose coordinate process is denoted $\hat X$, and with $\hat
X_{-T} = x$, $\hat X_T = y$. 

The assumption that $V$ is Kato-class (see Section 4.1) implies that
the It\^o bridge measure $\nu^{x,y}_{I}$ is absolutely continuous with
respect to the Brownian bridge measure $\W^{x,y}_{I}$. This allows us 
to use the lifting result proved in Lemma~\ref{lemma:W-lift} to show 
that also $\nu^{x,y}_{I}$ allows a lift to the space of currents. 
\begin{lemma}
\label{lemma:nu-lift}
For every bounded $I \subset \R$ there exists a forward current 
$\nu^{\sharp,x,y}_I$ on $\Xi_I$ such that is first marginal is
$\nu^{x,y}_I$. A similar statement holds for the stationary measures 
$\nu^\sharp$ with first marginal $\nu$. Moverover, since under $\nu$ 
we have  $\mathcal{N}_{\alpha,3}(X) < \infty$ a.s., the boundary 
currents are well defined under $\nu^\sharp$.
\end{lemma}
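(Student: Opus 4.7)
The plan is to transfer the Wiener-space construction of Lemma~\ref{lemma:W-lift} to the measure $\nu$ by absolute continuity, and then to verify the tail condition $\mathcal{N}_{\alpha,3}(X)<\infty$ under $\nu$ by a stationarity argument combined with Khasminskii-type bounds. I would address the three assertions in turn.

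For the bridge on a bounded $I=[-T,T]$, formula (\ref{eq:bridge-2}) displays $\nu^{x,y}_I$ as absolutely continuous with respect to $\WW^{x,y}_I$, with an $\F_I$-measurable density proportional to $e^{-V_{[-T,T]}(X)}$. The random field $F\in\C$ produced by the rough-path construction of Lemma~\ref{lemma:rough} is well defined $\WW^{x,y}_I$-a.s.\ on $\{N_I(X)<\infty\}$, hence also $\nu^{x,y}_I$-a.s. I would set $\nu^{\sharp,x,y}_I$ to be the law of the pair $(X,F(X))$ under $\nu^{x,y}_I$, so that its first marginal is $\nu^{x,y}_I$ by construction. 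The semimartingale property of $X$ under $\nu^{x,y}_I$ is inherited from the discussion around (\ref{eq:diffusion-eq}), or alternatively from Girsanov applied to the density $e^{-V_{[-T,T]}}$. The identity $C^X_{st}(\varphi)=\int_s^t\varphi(u,X_u)dX_u$ for adapted $\varphi\in\D$ holds $\WW^{x,y}_I$-a.s.\ by Lemma~\ref{lemma:rough}, hence $\nu^{x,y}_I$-a.s.\ by equivalence; since the defining Riemann sums converge in probability under both measures to the same pathwise limit, the two It\^o integrals coincide $\nu^{x,y}_I$-a.s.

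For the stationary measure, I would define $F$ on the whole line by gluing, $F_{st}(\varphi)=\sum_{k\in\Z}F_{s\vee k,(k+1)\wedge t}(\varphi)$, each summand being the rough-path integral on $[k,k+1]$. The bound (\ref{eq:F-bound}) together with the definition of $\|\cdot\|_{\D}$ yields
\begin{equation*}
|F_{st}(\varphi)|\le C\sum_{k\in\Z}(1+|k|)^{-\alpha}(1+N_{[k,k+1]}(X))^3\|\varphi\|_{\D}\le C\,(1+\mathcal{N}_{\alpha,3}(X))\|\varphi\|_{\D},
\end{equation*}
so on the event $\{\mathcal{N}_{\alpha,3}(X)<\infty\}$ each $F_{st}(\varphi)$, and in particular each boundary current $C^\pm_t$, is a well-defined element of $\D'$, and $\nu^\sharp$ is the push-forward of $\nu$ under $X\mapsto(X,F(X))$. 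To verify $\mathcal{N}_{\alpha,3}(X)<\infty$ $\nu$-a.s., stationarity of the It\^o diffusion reduces matters to bounding $\expect_\nu[N_{[0,1]}(X)^3]$ by a finite constant; then summing against $(1+|k|)^{-\alpha}$ with $\alpha>1$ gives $\expect_\nu[\mathcal N_{\alpha,3}(X)]<\infty$. I would obtain this single moment bound by disintegrating as in (\ref{pphi1}) over the endpoints on $[0,1]$, controlling $\Pi_1(x,y)/(\Psi(x)\Psi(y)\pi_1(x,y))$ uniformly through the heat-kernel/ground-state estimates of Section~4.1 and the Feynman-Kac weight by Khasminskii's inequality, and reducing to Brownian-bridge moments of $\|X\|_{\gamma,[0,1]}$ (standard Gaussian-tail bounds) and of $\|\mathbb{X}^2\|_{2\gamma,[0,1]}$ (exactly the estimates produced in the proof of Lemma~\ref{lemma:W-lift} via Lemma~\ref{lemma:besov} and Burkholder-Davis-Gundy). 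Integrating against $\Psi(x)\Psi(y)\in L^1$ then yields the required finiteness.

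The only delicate step is this last moment bound: it is the one place where Kato-class heat-kernel estimates and rough-path tail bounds have to be combined uniformly over bridge endpoints, so some care is needed to extract a bound independent of $(x,y)$ before integration. Everything else is a routine application of the results already established in Sections~2 and~3.
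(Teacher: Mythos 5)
Your proposal is correct and follows essentially the same route as the paper: transfer the rough-path lift of Lemma~\ref{lemma:rough} to $\nu^{x,y}_I$ by absolute continuity, control $\expect_{\nu^{x,y}_I}[N_I(X)^3]$ via Cauchy--Schwarz against the Brownian-bridge moments and the Kato-class exponential weight, and invoke stationarity of $\nu$ to get the uniform bound on $\expect_\nu[N_{[i,i+1]}(X)^3]$ that makes $\mathcal{N}_{\alpha,3}(X)$ finite a.s.\ and the boundary currents well defined. You simply spell out in more detail the gluing over unit intervals and the identification of the rough integral with the It\^o integral under the equivalent measure, which the paper states without elaboration.
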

\begin{proof}
We have
$$
\expect_{\nu^{x,y}_I} [N_I(X)^3] \le 
C \left(\expect_{\widehat \W^{x,y}_I} [N_I(X)^6]\right)^{1/2}  
\left(\expect_{\nu^{x,y}_I} [e^{-2\int_I V(X_t)dt}]\right)^{1/2}  
\le C.
$$
Hence the map $F$ defined in Lemma~\ref{lemma:rough} is well defined
and coincides almost surely with the It\^o integral. This allows to
construct the lifted measures $\nu^{\sharp,x,y}_I$ and $\nu^{\sharp}$. 
Moreover, by using stationarity of $\nu$ we have that
$
\expect_{\nu} [N_{[i,i+1]}(X)^3] \le C
$
uniformly in $i\in \Z$, thus the boundary currents are well defined 
under $\nu^\sharp$.
\end{proof}

\section{Gibbs measures on Brownian currents}
\subsection{Conditions on the potentials}
\label{sec:potentials}
We use the same terminology of the usual DLR theory and introduce 
``potentials" and ``energy functionals" below. 

An \emph{external potential} is a Lebesgue measurable function $V: 
\R^d \to \R$ that we will choose from the Kato class, i.e., an element 
of the space $\mathcal{K}(\R^d)$ defined by the condition
\begin{equation}
\lim_{r \to 0} \sup_{x \in \R^d} \int_{B_r(x)} 
|g(y-x)V(y)| \, dy = 0,
\end{equation}
with $B_r(x)$ the ball of radius $r$ centered at $x$, and 
\begin{equation}
 g(x) = \left\{ %
\begin{array}{ll} 
        |x| & \mbox{if} \;\; d = 1 \\         
        -\ln|x| & \mbox{if} \;\; d = 2 \label{k2}\\
        |x|^{2-d} & \mbox{if}   \;\;  d \geq 3.
    \end{array} \right. 
\end{equation}
This space is large enough to contain many choices of interest,
while allowing the Feynman-Kac formula for the Schr\"odinger
semigroup $e^{-tH}$, $t \geq 0$, to hold. This is generated by  
the Schr\"odinger operator $H = H_0 + V$ defined on 
$L^2(\R^d,dx)$ as a form sum ($V$ regarded as a multiplication 
operator). For Kato-class potentials $H$ is essentially 
self-adjoint on the form core $C^\infty_0(\R^d)$. In addition, 
we will require of $V$ to be such that 
\begin{enumerate}
\item[(1)]
$H$ has a unique strictly positive eigenfunction (ground state) 
$\Psi$ at $E = \inf\Spec H$, with the property that $\Psi \in
L^1 \cup L^\infty$;
\item[(2)] 
$e^{-tH}$ is intrinsically ultracontractive.
\end{enumerate}
Recall the meaning of the latter property. Write $d\omega = \Psi^2 
dx$ on $\R^d$ as before, and define the isometry (\emph{ground 
state transform}) $j: L^2(\R^d,d\omega) \rightarrow L^2(\R^d,dx)$, 
$f \mapsto \Psi f$. Then $D(H_\omega) = j^{-1}D(H)$ and $H_\omega 
f=(j^{-1}Hj)f = (1/\Psi) H(\Psi f) = -(1/2)\Delta f-({\nabla\ln\Psi},
\nabla f)_{\R^d}$, for every $f \in D(H_\omega)$. The associated 
semigroup $e^{-tH_\omega}$ exists for all $f\in L^2(\R^d,d\omega)$ 
and $t \geq 0$. $e^{-tH}$ is \emph{intrinsically ultracontractive} 
when $e^{-tH_\omega}$ is \emph{ultracontractive}, i.e., it maps 
$L^2(\R^d, d\omega)$ into $L^\infty(\R^d,d\omega)$ continuously. 
Equivalently, this means that $\|{e^{-tH_\omega}}\|_{2,\infty} < 
\infty$, $\forall t> 0$, and it is a monotonically decreasing 
function in $t$. Moreover, the integral kernel (\ref{pito}) of 
$e^{-tH_\omega}$ satisfies $0 \leq \pi_t(x,y) \leq 
\|e^{-(t/2)H_\omega}\|_{2,\infty}^2$ almost surely.

These conditions are in particular satisfied for $V$ bounded 
from below, continuous, and sufficiently confining, i.e., for 
which there exist constants $C_1, C_2 > 0$, $C_3, C_4\in \R$, 
and $a,b$ with $2 < a < b < 2a - 2$ such that the positive 
part of the potential, $V^+ = \sup \{0,V\}$ satisfies
\begin{equation} \label{iuccond}
C_1|x|^a + C_3 \leq V^{+}(x) \leq C_2|x|^b + C_4.
\end{equation}

\bigskip

A \emph{pair interaction potential} is a Lebesgue measurable 
function $W: \R^d \times \R \to \R$, even in both of its 
variables, which we require to 
\begin{enumerate}
\item[(1)]
have positive Fourier transform;
\item[(2)]
satisfy the regularity condition that there exists 
$M_{I,\beta} \in \R$ such that
\begin{equation}
\sup_{x\in \R^d ,t\in I} \|W(x,t)\|_{\D_{\beta}} \leq
M_{I,\beta},
\label{a22}
\end{equation}
for $\beta > \max\{\alpha, 3\}$ and every bounded $I \subset 
\R$.
\end{enumerate}
The requirement $\beta>\alpha$ is needed for having a well 
defined interaction energy (actually in our applications for 
any $\beta>1$ there is a suitable $\alpha$ so that this holds), 
while $\beta>3$ is a decay condition sufficient for ensuring 
the convergence of the cluster expansion in Chapter 5 below. 
 
An example satisfying these conditions is $W^\rho$ seen in 
Section 1.

\bigskip
Finally we write down the energies appearing in the definition of 
the densities of Gibbs measures we are going to study. With given 
pair potential $W$, for all $a,b \in \mathcal{D}'$ consider the 
function $\psi_{k,\varpi}(x,t) = e^{ik \cdot x +i\varpi t}\in 
\mathcal{D}$ and define the (possibly unbounded) quadratic form 
$$
\langle a,b \rangle_W := \int \wW(k,\varpi) 
a(\psi_{k,\varpi})b(\overline{\psi_{k,\varpi}}) dk d\varpi. 
$$
By using the quadratic form, 
for all bounded $I \subset \R$ and every $\mathbb{X} = (X, C^X), 
\, \mathbb{Y} = (Y, C^Y) \in \Xi$ define the \emph{internal energy 
functional}
\begin{equation}
H_I(\mathbb{X}) = V_I(X)  + \frac{\lambda }{2}\langle 
C^X_I, C^X_I \rangle_W,
\label{hin}
\end{equation}
and 
\emph{interaction functional}
\begin{equation}
H_I(\mathbb{X}|\mathbb{Y}) =  V_I(X) + 
\frac{\lambda}{2}\langle C^X_I, C^X_I \rangle_W + 
\lambda  \langle C^X_I, C^Y_{I^c} \rangle_W,
\label{hint}
\end{equation}
with parameter $\lambda \in \R$, where we wrote $V_I(X) = 
\int_I V(X_t) dt$.

\subsection{Gibbs specifications}

On $\Xi$ with its associated $\sigma$-Borel field $\A$ we take now 
$\W^\sharp$ as reference measure and define a Gibbs specification.
\begin{definition}[Gibbs specification]
Take the regular version $\W_I^\sharp(d\mathbb{X}|\mathbb{Y})$ of 
the measure $\W^\sharp$ conditional on $\mathbb{Y}$ in $\A_{I^c}$, 
and $H_I(\mathbb{X})$ given by (\ref{hin}). We call the family of
probability kernels $\{\mu^\sharp_I\}_I$ on $\Xi$ indexed by the 
bounded intervals $I \subset \R$, 
\begin{equation}
\label{eq:curr-finite-vol}
\mu^\sharp_I(d\mathbb{X}) = \frac{e^{-H_I(\mathbb{X})}}
{Z_I} \W^\sharp_I(d\mathbb{X})  
\end{equation}
a \emph{Gibbs specification on Brownian currents with free 
boundary condition}. Take $H_I(\mathbb{X}|\mathbb{Y})$ given 
by (\ref{hint}). We call the family $\{\rho^\sharp_I\}_I$ on
$\Xi$,
\begin{equation}
\label{eq:spec}
\rho^\sharp_I(d\mathbb{X}|\mathbb{Y}) = 
\frac{e^{-H_I(\mathbb{X}|\mathbb{Y})}}{Z_I(\mathbb{Y})} 
\W^\sharp_I(d\mathbb{X} | \mathbb{Y})  
\end{equation}
a \emph{Gibbs specification on Brownian currents with boundary
condition $\mathbb{Y}$}. 
\end{definition}

\begin{definition}[Gibbs measure]
A probability measure $\mu$ on $(\Xi,\A,\W^\sharp)$ is a 
\emph{Gibbs measure} for the potentials $V$ and $W$ if it is 
consistent with the specification $\{\rho^\sharp_I\}_I$, i.e.,
there exists a version of its conditional probabilities with
respect to the family $\{\A_{I^c}\}_I$ which agrees with 
$\{\rho^\sharp_I\}_I$ for all bounded $I \subset \R$. 
\end{definition}
In the following chapter our main concern will be to prove the
existence of such Gibbs measures. 

On Gibbs specifications here is a first result. 
\begin{lemma}
The family $\{ \rho_I^\sharp \}_I$ is consistent, i.e., for 
every pair of bounded intervals $I \subset J \subset \R$ 
we have $\int\int F(\mathbb{X})\rho_I^\sharp(d\mathbb{X}|\mathbb{Y})
\rho_J^\sharp(d\mathbb{Y}|\ZZ) = \int  F(\mathbb{X})
\rho_J^\sharp(d\mathbb{X}|\ZZ)$, for any bounded measurable $F:
\Xi \to \R$.
\end{lemma}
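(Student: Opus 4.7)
The plan is to reduce consistency to two ingredients: an additive decomposition of the Hamiltonian and the corresponding consistency of the reference lifted measure $\W^\sharp$. First I would unwrap both sides using~(\ref{eq:spec}). On the left, after the iterated integration, the interior configuration $\mathbb{X}$ on $I$ has conditional density $e^{-H_I(\mathbb{X}|\mathbb{Y})}/Z_I(\mathbb{Y})$ with respect to $\W_I^\sharp(\cdot|\mathbb{Y})$, while $\mathbb{Y}$ restricted to $J\setminus I$ is distributed under $\rho_J^\sharp(\cdot|\ZZ)$ and the restriction of $\mathbb{Y}$ to $J^c$ is frozen equal to $\ZZ$; on the right, $\mathbb{X}$ has density $e^{-H_J(\mathbb{X}|\ZZ)}/Z_J(\ZZ)$ with respect to $\W_J^\sharp(\cdot|\ZZ)$.

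The central algebraic identity uses bilinearity of $\langle\cdot,\cdot\rangle_W$ together with the additivity $C_J^X = C_I^X + C_{J\setminus I}^X$ inherited from property~(1) of the definition of stochastic currents and $V_J(X) = V_I(X) + V_{J\setminus I}(X)$. Collecting in the expansion of $H_J(\mathbb{X}|\ZZ)$ the terms that involve $C_I^X$ or $V_I(X)$ yields the decomposition
\begin{equation*}
H_J(\mathbb{X}|\ZZ) \;=\; H_I(\mathbb{X}|\widetilde{\mathbb{Y}}) \,+\, R(\mathbb{X}_{J\setminus I},\ZZ),
\end{equation*}
where $\widetilde{\mathbb{Y}}$ denotes the $I^c$-configuration obtained by gluing the restriction of $\mathbb{X}$ to $J\setminus I$ with the restriction of $\ZZ$ to $J^c$, and the remainder $R$ does not depend on the restriction of $\mathbb{X}$ to $I$.

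Since the current component of $\W^\sharp$ is a pathwise functional of the Brownian path via the lift of Lemma~\ref{lemma:W-lift}, the Markov property of $\W$ transports to $\W^\sharp$, giving the reference-measure consistency $\int \W_I^\sharp(d\mathbb{X}|\mathbb{Y})\,\W_J^\sharp(d\mathbb{Y}|\ZZ) = \W_J^\sharp(d\mathbb{X}|\ZZ)$. Integrating the decomposition of $H_J$ against $\W_J^\sharp(\cdot|\ZZ)$ and disintegrating over the $J\setminus I$-part via this consistency also produces the partition function identity
\begin{equation*}
Z_J(\ZZ) \;=\; \int Z_I(\widetilde{\mathbb{Y}})\, e^{-R(\mathbb{X}_{J\setminus I},\ZZ)}\,\W_J^\sharp(d\mathbb{X}|\ZZ).
\end{equation*}
Substituting both identities into the iterated integral on the left of the claim, the $Z_I$ factors cancel and what remains is precisely $\int F(\mathbb{X})\,\rho_J^\sharp(d\mathbb{X}|\ZZ)$.

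The main obstacle I expect is the rigorous handling of this reference-measure consistency: one must verify that the regular conditional distribution under $\W^\sharp$ does not treat the current component as an independent additional piece of randomness but rather as the pathwise lift of the conditioned Brownian path, so that nested conditionings compose in the natural way. The pathwise rough-path construction of Section~\ref{sec:rough-paths} is what makes this possible, since it guarantees that the current over any subinterval is a measurable function of the Brownian path on that subinterval and that the boundary currents entering $H_I(\cdot|\widetilde{\mathbb{Y}})$ depend only on the frozen boundary data.
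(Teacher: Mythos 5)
Your proposal is correct and follows essentially the same route as the paper: decompose $C_J^Y = C_I^Y + C_{J\setminus I}^Y$ and expand the energy by bilinearity of $\langle\cdot,\cdot\rangle_W$, integrate out the $I$-part of the intermediate configuration to recognize the inner integral as $Z_I$ (cancelling the normalization of $\rho_I^\sharp$), and invoke the consistency of the lifted reference family $\{\W_I^\sharp\}_I$ to reassemble $\rho_J^\sharp$. The paper takes the reference-measure consistency as immediate from the definition of the lift, exactly as you anticipate in your closing remark.
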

\begin{proof}
The family  $\{ \W_I^\sharp \}_I$ is consistent by its definition. 
Hence
\begin{equation*}
  \begin{split}
\int \int & F(\mathbb{X})
\rho_I^\sharp(d\mathbb{X}|\mathbb{Y})
\rho_J^\sharp(d\mathbb{Y}|\ZZ) 
\\ & =
     \int \int F(\mathbb{X}) \frac{e^{-V_I(X)-(\lambda/2)
     \langle C^X_I, C^X_I \rangle_W - 
     \lambda\langle C^X_I,C^Y_{I^c}\rangle_W}}{Z_I(\mathbb{Y})}
\\ & \quad \quad \times
 \frac{e^{-V_J(Y)-(\lambda/2)\langle C^Y_J, C^Y_J \rangle_W-
 \lambda\langle C^Y_J,C^Z_{J^c} \rangle_W}}{Z_J(\ZZ)}
\W_I^\sharp(d\mathbb{X}|\mathbb{Y})\W_J^\sharp(d\mathbb{Y}|\ZZ) 
\\ &=  
     \int \int F(\mathbb{X}) \frac{e^{-V_I(X)-(\lambda/2)
     \langle C^X_I, C^X_I \rangle_W-
     \lambda\langle C^X_I,C^Y_{I^c}\rangle_W}}{Z_I(\mathbb{Y})}
\\ & \quad \quad \times
{e^{-V_I(Y)-(\lambda/2)\langle C^Y_I, C^Y_I \rangle_W
-\lambda \langle C^Y_I,C^Z_{J^c} \rangle_W -
\lambda\langle C^Y_I,C^Y_{K} \rangle_W }}{}
\\ & \quad \quad \times
 \frac{e^{-V_K(Y)-(\lambda/2)\langle C^Y_K, C^Y_K \rangle_W-
 \lambda\langle C^Y_K,C^Z_{J^c} \rangle_W}}{Z_J(\ZZ)}
\W_I^\sharp(d\mathbb{X}|\mathbb{Y})\W_J^\sharp(d\mathbb{Y}|\ZZ) 
  \end{split}
\end{equation*}
where we split off $C^Y_J = C^Y_I + C^Y_K$ with $K = J\backslash 
I$. This gives for the right hand side
\begin{equation*}
  \begin{split}
\quad  &
     \int \int F(\mathbb{X}) \frac{e^{-V_I(X)-(\lambda/2)
     \langle C^X_I,C^X_I \rangle_W-
     \lambda\langle C^X_I,C^Y_{I^c}\rangle_W}}{Z_I(\mathbb{Y}_2)}
\\ & \quad \quad \times
\left( \int {e^{-V_I(Y_1)-(\lambda/2)\langle C^{Y_1}_I,C^{Y_1}_I
\rangle_W-\lambda\langle C^{Y_1}_I,C^Z_{J^c} \rangle_W -
\lambda\langle C^{Y_1}_I,C^{Y_2}_{K} \rangle_W }}{} 
 \W_I^\sharp(d\mathbb{Y}_1|\mathbb{Y}_2)\right)
\\ & \quad \quad \times
 \frac
{e^{-V_K(Y_2)-(\lambda/2)\langle C^{Y_2}_K, C^{Y_2}_K \rangle_W-
\lambda\langle C^{Y_2}_K,C^Z_{J^c} \rangle_W}}{Z_J(\ZZ)}
\W_I^\sharp(d\mathbb{X}|\mathbb{Y}_2) \W_K^\sharp(d\mathbb{Y}_2|\ZZ), 
  \end{split}
\end{equation*}
where we used the fact that 
$\int \W_I^\sharp(d\mathbb{Y}_1|\mathbb{Y}_2)\W_K^\sharp
(d\mathbb{Y}_2|\ZZ) = \W_J^\sharp(d\mathbb{Y}_1|\ZZ)$. Note that 
the expression between the brackets equals $Z_I(\mathbb{Y}_2)$, 
thus we further obtain
\begin{equation*}
  \begin{split}
\quad & 
     \int \int F(\mathbb{X}) e^{-V_I(X)-(\lambda/2)
     \langle C^X_I, C^X_I \rangle_W-
     \lambda\langle C^X_I,C^Y_{I^c} \rangle_W}
\\ & \quad \quad \times
 \frac
{e^{-V_K(Y_2)-(\lambda/2)\langle C^{Y_2}_K, C^{Y_2}_K \rangle_W -
\lambda\langle C^{Y_2}_K,C^Z_{J^c} \rangle_W}}{Z_J(\ZZ)}
\W_I^\sharp(d\mathbb{X}|\mathbb{Y}_2) \W_K^\sharp(d\mathbb{Y}_2|\ZZ) 
\\ \quad &=  
     \int  F(\mathbb{Y})
\frac{e^{-V_J(Y)-(\lambda/2)\langle C^{Y}_J, C^{Y}_J \rangle_W -
\lambda\langle C^{Y}_J,C^Z_{J^c} \rangle_W}}{Z_J(\ZZ)}
\W_J^\sharp(d\mathbb{Y}|\ZZ) = \int F(\mathbb{Y}) 
\rho_J(d\mathbb{Y}|\ZZ).
  \end{split}
\end{equation*}
\end{proof}

The forward current $\W^\sharp$ has the key property that
$
C^X_{st}(\varphi) = \int_s^t \varphi(u,X_u) dX_u
$,  
$\W^\sharp$-a.s. for all $(s,t)\in \Delta$ and all adapted 
$\varphi\in \D$. This will enable us to show that the finite 
volume measures $\mu_I$ coincide with the marginals of the 
measures $\mu^\sharp_I$ on the first component of the product 
$\Xi$. The specification $\{\rho^\sharp_I\}$ can then be 
considered as a suitable rigorous replacement for the DLR 
description of the infinite-volume limit. A Gibbs measure on 
$\X$ will then be a measure for which there exists a unique 
lift to the space $\Xi$ of currents satisfying the relation~
(\ref{eq:curr-constr-0}) ensuring the identification of the 
current with the stochastic integral and which satisfy the 
DLR conditions with respect to the specification 
$\{\rho^\sharp_I\}$.

To show that the specification is well defined we rewrite the 
various terms using the fact that, under the measure 
$\W^\sharp_I(d\mathbb{X}|\mathbb{Y})$ we have pathwise 
equality between the current $C^X$ and the stochastic integral 
with respect to $X$ for adapted integrands belonging to $\D$. 
Then
\begin{equation*}
  \begin{split}
\langle C_I^X, C_I^X \rangle_W  & = \int \wW(k,\varpi)
C_I^X(\psi_{k,\varpi}) C_I^X(\overline{\psi_{k,\varpi}})
dk d\varpi
   \\ &   = \int \wW(k,\varpi) |C_I^X(\psi_{k,\varpi})|^2
             dk d\varpi 
   \\ &   = \int \wW(k,\varpi)
\left|\int_I \psi_{k,\varpi}(t,X_t) dX_t \right|^2 
dk d\varpi = 2 W_I(X)
  \end{split}
\end{equation*}
and
\begin{equation*}
  \begin{split}
\langle C_I^X, C_{I^c}^Y \rangle_W & = 
\int \wW(k,\varpi) C_I^X(\psi_{k,\varpi}) C_{I^c}^Y
(\overline{\psi_{k,\varpi}}) dk d\varpi 
\\
 & =
C_I^X\left( \int \wW(k,\varpi)\psi_{k,\varpi} 
C_{I^c}^Y(\overline{\psi_{k,\varpi}}) dk d\varpi \right)
\\
 & =
C_I^X\left( w^{C_{I^c}^Y}\right) = \int_{-T}^{T}
C_{I^c}^Y(W(\cdot - X_s,\cdot - s)) dX_s,
  \end{split}
\end{equation*}
with
$$
w^C(x,t) = \int \wW(k,\varpi)\psi_{k,\varpi}(t,x)
C(\overline{\psi_{k,\varpi}}) dk d\varpi  = 
C(W(x-\cdot,t-\cdot)).
$$
By using these equalities it is seen that the 
specification~(\ref{eq:spec}) takes the form
\begin{equation}
\label{eq:spec2}
\rho^\sharp_I(d\mathbb{X}|\mathbb{Y}) =
\frac{e^{-V_I(X)-\lambda W_I(X)-
\lambda \int_I w^{C_{I^c}^Y}(u,X_u)dX_u}}{Z_I(\mathbb{Y})} 
\W^\sharp_I(d\mathbb{X} | \mathbb{Y})  
\end{equation}
and it is well defined as soon as the exponential weight is integrable
and the integral is different from zero. The conditions on $V$ and $W$ 
make sure this is true. Indeed, for Kato-class potentials exponential 
integrability is a consequence of Khasminskii's Lemma \cite{Sim82}. 
Moreover, since the Fourier transform of $W$ is positive by assumption 
and $\lambda >0$, the internal energy term is negative and thus 
exponentially integrable without any further restriction. For the 
interaction with the boundary current we have
$$
|w^{C_{I^c}^Y}(x,t)| = |C_{I^c}^Y(W(x,t))| \le M_I \|C_{I^c}^Y\|_{\D'} 
$$
with $M_I = \sup_{x\in \R^d,t\in I} \|W(x,t)\|_{\D}$, which by
condition (2) on $W$ is finite. Hence the stochastic integral in the 
exponent has a bounded and adapted integrand and thus by standard 
techniques it follows that it is exponentially integrable for any
value of $\lambda$.

\medskip
By making use of the It\^o current defined in
Section~\ref{sec:ito-current}, the specification $\{\rho^\sharp_I\}_I$ 
can be finally written as
\begin{equation}
\label{eq:specc2}
\rho^\sharp_I(d\mathbb{X}|\mathbb{Y}) =
\frac{e^{-\lambda W_I(X)-
\lambda \int_{-T}^T w^{C_{I^c}^Y}(u,X_u)dX_u}}{\mathcal Z_I(\mathbb{Y})} 
\nu^\sharp_I(d\mathbb{X} | \mathbb{Y}) . 
\end{equation}
Note that this is a forward current on $\Xi$ but by the above results it 
can be obtained as the unique lift of its marginal on $\X$ satisfying the
identification~(\ref{eq:curr-constr-0}) between currents and stochastic
integrals. 

\begin{remark}
\rm{
The specification~(\ref{eq:spec2}) seems to depend only on the path
and the currents appearing in the definition of the vector-field 
$w^C$ that describes the interaction with the boundary paths. The point 
of introducing measures and specifications on currents resides in the 
fact that we are not able to describe (\ref{eq:spec2}) in terms of 
paths alone. The framework of stochastic currents is not the only
possibility to solve this difficulty. A different way to proceed is 
considering directly rough paths and defining the measures and 
specifications on the space of (step-2) rough paths, i.e., formally of
couples $(X,\mathbb{X}^2)$, where $\mathbb{X}^2$ is the twice iterated
integral associated with the paths $X$. This would solve the problem
of stochastic integrals, which can then be defined as rough integrals, 
and with suitable growth conditions on the rough paths we would be 
also allowed to define the interaction terms with boundary paths (over
unbounded time intervals) and specifications similarly to that on the 
currents. Our approach is motivated essentially by the consideration 
that currents are more basic objects than rough paths. We prefer to
see rough path theory as a tool for obtaining stochastic currents in
useful topologies. Indeed, in principle the construction of good 
versions of stochastic integrals can be carried out without recourse 
to rough paths~\cite{FGGT,FGR}. 
}
\end{remark}

\section{Existence of Gibbs measures for Brownian currents}
\label{sect:cluster}

\subsection{Cluster representation}
\label{sec:cluster-rep}

In the following we will construct a Gibbs measure that is consistent
with the specification $\{\rho^\sharp_I\}_I$. This will be achieved 
by breaking up paths according to a sequence of bounded subintervals 
of the real line, and constructing Gibbs measures for bounded intervals. 
Taking limits over these Gibbs measures will result in a Gibbs measure 
on $\X$ whose lifted measure to $\Xi$ is consistent with the given
specification. As mentioned before, a reasonably confining $V$ is 
needed to make sure that the paths are not allowed to escape to 
infinity with large probability. 

The following notion of convergence will be used below to discuss
Gibbs measures. Let generally $E$ 
be a metric space, and $C(\R,E)$ the space of continuous paths
$\{X_t; t \in \R\}$ with values in $E$. For any bounded interval $I  
\subset \R$ let $\E_I \subset \E$ be a sub-$\sigma$-field of the Borel 
$\sigma$-field $\E$ of $E$ generated by the evaluations $\{X_t: t \in I\}$. 
A sequence of probability measures $(m_n)_{\{n \in \mathbb{N}\}}$ on 
$C(\R,E)$ is said to converge locally weakly to the probability measure 
$m$ if for any such $I$ the restrictions $m_n|_{\E_I}$ converge weakly 
to the measure $m|_{\E_I}$. 

The main result of this paper is the following

\begin{theorem}
Suppose $V$ and $W$ satisfy the assumptions stated in Section 
\ref{sec:potentials}. Take any unbounded increasing sequence 
$(T_n)_{n\geq 0}$ of positive real numbers, and suppose $0 < |\lambda| 
\leq \lambda^*$ with $\lambda^*$ small enough. Then  the local weak 
limit $\lim_{n \rightarrow \infty} \mu_{T_n} = \mu$ exists on $\X$ 
and does not depend on the choice of sequence $T_n$. Its unique lift 
$\mu^\sharp$ on $\Xi$ is a Gibbs probability measure consistent with
the specification  $\{\rho^\sharp_I\}_I$.
\label{exist}
\end{theorem}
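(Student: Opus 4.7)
The plan is to recast the finite-volume measures $\mu_{T_n}$ using the It\^o current $\nu^\sharp$ from Lemma~\ref{lemma:nu-lift} as reference measure, thereby absorbing the confining potential $V$ into the stationary Markov process, and then to run a polymer-type cluster expansion in the interaction $\lambda W_T(X)$ (together with the boundary term) around $\nu^\sharp$. The payoff of using $\nu^\sharp$ is that its finite-dimensional distributions decay exponentially in the separation of times, thanks to intrinsic ultracontractivity: the transition kernel $\pi_t(x,y)$ from (\ref{pito}) satisfies a spectral gap estimate and tight a priori moment bounds on $N_{[k,k+1]}(X)$, uniform in the base point and in translations $k \in \Z$. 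These are the ingredients that replace the usual independence of the reference measure in conventional cluster expansions.

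First I would partition $[-T,T]$ into unit boxes $I_k = [k,k+1]$ and write the interaction, modulo boundary corrections, as a sum of pair terms
\begin{equation*}
W_T(X) \;=\; \sum_{k \le l} W_{kl}(X), \qquad W_{kl}(X) = \int_{I_k}\!\int_{I_l} W(X_s-X_t,s-t)\, dX_s\cdot dX_t,
\end{equation*}
where the off-diagonal double integrals are interpreted through the lifted current, and a similar decomposition is carried out for the boundary interaction $\int_I w^{C_{I^c}^Y}(u,X_u)\,dX_u$ in (\ref{eq:specc2}). Mayer's trick gives
\begin{equation*}
e^{-\lambda W_T(X)} \;=\; \prod_{k\le l}\bigl(1+f_{kl}(X)\bigr), \qquad f_{kl}(X) := e^{-\lambda W_{kl}(X)} - 1,
\end{equation*}
and expanding the product produces a sum over collections of bonds $B$ on the vertex set $\{k : I_k \subset [-T,T]\}$. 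Bonds are grouped into connected components (\emph{polymers}) $\gamma$, and the partition function rewrites as the standard polymer sum $Z_T = \sum \prod_\gamma z(\gamma)$ over families of compatible polymers, where the activity $z(\gamma)$ is the $\nu^\sharp$-expectation of $\prod_{\{k,l\}\in\gamma} f_{kl}$, after integrating out the paths in boxes not touched by $\gamma$.

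The core analytic step is the activity bound. Using the pathwise rough-path estimate (\ref{eq:F-bound}) from Lemma~\ref{lemma:rough} together with the decay hypothesis (\ref{a22}) (which via $\beta>3$ provides summable decay $\|W(\cdot,t-s)\|_{\mathcal D_\beta} \lesssim (1+|k-l|)^{-\beta}$ when $s\in I_k$, $t\in I_l$), one obtains a pointwise bound of the form
\begin{equation*}
|f_{kl}(X)| \;\le\; C\,|\lambda|\,(1+|k-l|)^{-\beta}\bigl(1+N_{[k,k+1]}(X)+N_{[l,l+1]}(X)\bigr)^{6}\,e^{|\lambda|\,\ldots}.
\end{equation*}
Combining this with the moment bound $\expect_\nu[N_{[k,k+1]}(X)^p]\le C_p$ uniform in $k$ (proved exactly as in Lemma~\ref{lemma:nu-lift}) yields, for $|\lambda|$ small, the Kotecky--Preiss-type estimate
\begin{equation*}
\sum_{\gamma \ni k_0} |z(\gamma)|\, e^{a(\gamma)} \;\le\; \varepsilon(\lambda),
\end{equation*}
with $\varepsilon(\lambda) \to 0$ as $\lambda \to 0$. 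Intrinsic ultracontractivity of $\nu$ is what guarantees that expectations over disjoint blocks are \emph{approximately} factorized, with error that is absorbable into the activity bound; this is where the non-product nature of the reference measure must be carefully handled and constitutes the main obstacle.

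Once convergence of the cluster expansion is in hand uniformly in $T$, the remaining steps are standard. For a local observable $F$ measurable with respect to $\A_J$ on a bounded $J \subset \R$, the ratio $\mu_{T_n}(F) = Z_{T_n,F}/Z_{T_n}$ admits a convergent cluster representation whose terms stabilize once $T_n$ is large enough to contain every polymer intersecting $J$; the limit $\mu(F)$ is thereby independent of $(T_n)$, giving local weak convergence. Uniqueness of the lift $\mu^\sharp$ follows from the identification (\ref{eq:curr-constr-0}) between the current and the stochastic integral, already established in Lemmas \ref{lemma:W-lift} and \ref{lemma:nu-lift}. Finally, DLR consistency of $\mu^\sharp$ with the specification $\{\rho^\sharp_I\}_I$ is verified by comparing the cluster expansion of $\mu^\sharp_T$ conditioned on $\A_{I^c}$ with that of $\rho^\sharp_I(\cdot|\mathbb{Y})$: the boundary activities built from $w^{C_{I^c}^Y}$ coincide in both representations, and passage to the limit $T\to\infty$ preserves the equality because all absolutely-convergent cluster sums are continuous in $\mathbb{Y}$ through $\mathcal{N}_{\alpha,3}(Y)$, which is finite $\mu^\sharp$-a.s.
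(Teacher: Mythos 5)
Your overall architecture — Mayer expansion of the pair interaction into polymers, activity bounds from the decay hypothesis on $W$, stabilization of local cluster sums as $T_n\to\infty$, and a separate verification of the DLR property — is the same as the paper's, and the final steps (independence of the sequence, uniqueness of the lift via (\ref{eq:curr-constr-0}), consistency with $\{\rho^\sharp_I\}_I$) are handled in essentially the way the paper does in Propositions \ref{p0}--\ref{prop:infinite-measure} and Section 6. However, there are two places where your sketch names the difficulty but the step as proposed would actually fail.

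First, the non-product reference measure. You partition into \emph{unit} boxes and assert that intrinsic ultracontractivity makes expectations over disjoint blocks "approximately factorized, with error absorbable into the activity bound." For unit boxes this is false: the decoupling error between adjacent blocks is governed by $\sup_{x,y}|\pi_1(x,y)-1|$, which is merely bounded, not small, so the corresponding polymer activities are $O(1)$ and no Kotecky--Preiss criterion can hold. The paper's resolution is structural, not an absorption: it expands $\prod_k\pi_b(x_{k+1},x_k)=\prod_k(1+(\pi_b-1))$ explicitly, introducing \emph{chains} as a second species of polymer alongside the contours, and — crucially — takes the block length $b=b(\lambda)=-\log|\lambda|/(\Lambda+C)\to\infty$ as $\lambda\to 0$, so that Lemma \ref{holderpfi} gives $|\pi_b-1|\le Ce^{-\Lambda b}=\varepsilon$ small. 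This creates the competing effect that intra-block and adjacent-block energies grow like $e^{Cb}$, and the whole convergence proof rests on the balance $\lambda e^{Cb}=e^{-\Lambda b}$ in Proposition \ref{pclustest}. Your proposal is missing both the chain species and the $\lambda$-dependent block size, and without them the expansion around $\nu^\sharp$ does not close.

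Second, the activity bound itself. Your pointwise estimate $|f_{kl}(X)|\le C|\lambda|(1+|k-l|)^{-\beta}(1+N_{I_k}(X)+N_{I_l}(X))^{6}e^{|\lambda|(\cdots)}$, obtained from the rough-path bound (\ref{eq:F-bound}), cannot be integrated against $\nu$: the exponential factor is of the form $e^{c(1+N)^{6}}$, and since $N$ contains the H\"older norm of Brownian motion (which has only Gaussian tails), $\expect_\nu[e^{cN^{p}}]=\infty$ for every $p>2$ and $c>0$. Polynomial moments of $N$ do not help here because the Mayer factor requires exponential integrability of $W_{kl}$. The paper circumvents this by never using the pathwise rough-path bound for the cluster estimates: it passes to independent Wiener measures on the two blocks, applies H\"older with block-dependent exponents $n_{ij}\sim|i-j|^{\Delta}$, controls $\expect[|W_{\tau_i,\tau_j}|^{2n}]$ by Burkholder--Davis--Gundy, and controls $\expect[e^{\pm 4n\lambda W_{\tau_i,\tau_j}}]$ by the Girsanov/Gaussian-linearization device of Lemmas \ref{lemma:exp-bound} and \ref{lemma:final-exp-bound}, which exploits that the integrand of the inner stochastic integral is deterministic in size $(1+|t_i-t_j-b|)^{-\beta}$. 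Some replacement for this probabilistic machinery is indispensable; the purely pathwise route does not yield integrable bounds.
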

\begin{proof}
We develop a cluster expansion, i.e., choose the coupling parameter 
$\lambda$ sufficiently small for being able to control the measure 
for the interaction switched on ($\lambda \neq 0$) in terms of a 
convergent perturbation series around the free case ($\lambda = 0$).  
The theorem follows then through Propositions \ref{p0}, \ref{p2},
\ref{pclustest} and \ref{prop:infinite-measure} below. 
\end{proof}

\medskip
Take a division of $[-T,T]$ into disjoint intervals $\tau_k = (t_k, 
t_{k+1})$,  $k = 0,..., N-1$, with $t_0 = -T$ and $t_N = T$, each of 
length $b$, i.e. fix $b = 2T/N$; for convenience we choose $N$ to be 
an even number so that the origin is endpoint to some intervals. We 
break up a path $X$ into pieces $X_{\tau_k}$ by restricting it to 
$\tau_k$. The total energy contribution of the pair interaction then
becomes
\begin{equation}
W_T(X) = \frac{1}{2} \sum_{i,j = 0}^{N-1} 
\langle C^X_{\tau_i},C^X_{\tau_j}\rangle_W  = 
\sum_{0 \leq i < j \leq N-1} W_{\tau_i,\tau_j}
\label{sum}
\end{equation}
where with the notation $\J_{ij} = 
\langle C^X_{\tau_i},C^X_{\tau_j}\rangle_W$ we have
\begin{equation}
W_{\tau_i,\tau_j} 
= \left\{ 
\begin{array}{ll}
\vspace{0.2cm} 
\J_{ij} + \J_{ji} & \mbox{if $|i-j| \geq 2$} \nonumber \\
\vspace{0.2cm} 
\frac{1}{2} (\J_{ii} + \J_{jj}) + \J_{ij} + \J_{ji}  
& \mbox{if $|i-j| = 1$, and 
$i \neq 0$, $j \neq N - 1$} \nonumber \\
\vspace{0.2cm} 
\J_{ij} + \J_{ji} + \frac{1}{2} \J_{00} 
& \mbox{if $i = 0$ and $j = 1$} \nonumber \\ 
\vspace{0.2cm} 
\J_{ij} + \J_{ji} + \frac{1}{2} \J_{N-1 \; N-1} 
& \mbox{if $i = N-1$ and $j = N-2$.} \nonumber \\
\end{array} \right.
\end{equation}
To keep the notation simple we do not make explicit the $X$ 
dependence in these objects.

By using (\ref{sum}) we obtain
\begin{equation}
e^{-\lambda W_T} = \prod_{0 \leq i < j \leq N-1} 
(e^{-\lambda W_{\tau_i,\tau_j}} + 1 - 1) =  1 + \sum_{{\cal R} \neq 
\emptyset} \prod_{(\tau_i,\tau_j) \in {\cal R}} 
(e^{-\lambda W_{\tau_i,\tau_j}} - 1).
\label{sum1}
\end{equation}
Here the summation is performed over all nonempty sets of different 
pairs of intervals, i.e. ${\cal R} = \{(\tau_i,\tau_j): 
(\tau_i, \tau_j) \neq (\tau_{i'},\tau_{j'}) \; \mbox{whenever} \; (i,j) 
\neq (i',j') \}$. 

A break-up of the paths involves a corresponding factorization of the 
reference measure into It\^o bridges for each subinterval. Put $X_{t_k} = 
x_k$ for the positions at the time-points of the division, $\forall 
k = 0,...,N$, with $-T = t_0 < t_1 < ... < t_N = T$. We write for a 
shorthand
\begin{equation}
\nu_T^{\mathbf{x}}(\cdot) \equiv \nu_T(\, \cdot \,| X_{t_0} = x_0, \ldots , 
X_{t_N} = x_N) = \prod_{k=0}^{N-1} d\nu_{\tau_k}^{x_k,x_{k+1}}(\cdot).
\label{mar}
\end{equation}
Let $p_{t_0,...,t_N} (x_0,...,x_N)$ be the density with respect to 
$\prod_{k=0}^N d\omega(x_k)$, $d\omega = \Psi^2 dx$, of the joint 
distribution of positions of the path $X$ recorded at the time-points 
of the division. 
By Markovianness it follows that
\begin{eqnarray}
p_{t_0,...,t_N} (x_0,...,x_N) 
& = &
\prod_{k=0}^{N-1} \pi_b(x_{k+1},x_k) = \prod_{k=0}^{N-1} 
(\pi_b(x_{k+1},x_k) - 1 + 1)  
\nonumber \\ 
& = &
1 + \sum_{\cal S \neq \emptyset} \prod_{k: \tau_k \in {\cal S}} 
(\pi_b(x_{k+1},x_k) - 1),
\nonumber
\end{eqnarray}
where $\pi_t$ is the transition kernel for the It\^o diffusion given by
(\ref{pito}). The summation runs over all nonempty sets ${\cal S} = 
\{\tau_k = (t_k,t_{k+1})\}$ of different pairs of consecutive time-points. 

In order to have a systematic control over these sums we introduce:

\begin{itemize}
\item[(1)] 
 \emph{Contours.} \hspace{0.2cm} 
Two distinct pairs of intervals $(\tau_i,\tau_j)$ and 
$(\tau_{i'},\tau_{j'})$ will be called directly connected and denoted 
$(\tau_i,\tau_j) \sim (\tau_{i'},\tau_{j'})$ if one interval of the pair 
$(\tau_i,\tau_j)$ coincides with one interval of the pair $(\tau_{i'},
\tau_{j'})$. A set of connected pairs of intervals is a collection 
$\{(\tau_{i_1},\tau_{j_1}),..., (\tau_{i_n},\tau_{j_n})\}$ in which 
each pair of intervals is connected to another through a sequence of 
directly connected pairs, i.e., for any $(\tau_i,\tau_j) \neq 
(\tau_{i'},\tau_{j'})$ there exists $\{(\tau_{k_1},\tau_{l_1}),..., 
(\tau_{k_m},\tau_{l_m})\}$ such that $(\tau_i,\tau_j) \sim 
(\tau_{k_1},\tau_{l_1}) \sim ... \sim (\tau_{k_m},\tau_{l_m}) 
\sim (\tau_{i'},\tau_{j'})$. A maximal set of connected pairs of 
intervals is called a contour, denoted by $\gamma$. We denote by 
$\bar\gamma$ the set of all intervals that are elements of the pairs 
of intervals belonging to contour $\gamma$, and by $\ga^*$ the set of 
time-points of intervals appearing in $\bar\ga$. Two contours $\ga_1, 
\ga_2$ are disjoint if they have no intervals in common, i.e. 
$\bar \ga_1 \cap \bar \ga_2 = \emptyset$. Clearly, $\cal R$ can 
be decomposed into sets of pairwise disjoint contours: 
${\cal R} = \cup_{r \geq 1} {\cal R}_r$, where ${\cal R}_r = 
\{\ga_1,...,\ga_r\}$ with $\bar\ga_i \cap \bar \ga_j = \emptyset$, 
$i \neq j$; $i,j = 1,...,r$. 

\item[(2)]  \emph{Chains.} \hspace{0.2cm} 
A collection of consecutive intervals $\{\tau_{j},\tau_{j+1}...,
\tau_{j+k} \}$, $j \geq 0$, $j+k \leq N-1$ is called a chain. 
As in the case of contours, $\bar \rh$ and $\rh^*$ mean the set of 
intervals belonging to the chain $\rh$ and the set of time-points in 
$\rh$, respectively. Two chains $\rh_1, \rh_2$ are called disjoint if 
they have no common time-points, i.e. $\rh_1^* \cap \rh^*_2 = \emptyset$. 
Denote by $\partial^-\rh$ resp. $\partial^+\rh$ the leftmost resp. 
rightmost time-points belonging to $\rh$. 

\item[(3)]
 \emph{Clusters.} \hspace{0.2cm} 
Take a (non-ordered) set of disjoint contours and disjoint chains, 
$\Ga = \{\gamma_1,...,\gamma_r;\varrho_1,...,\varrho_s\}$, with some 
$r \geq 1$ and $s \geq 0$. Note that such contours and chains may have 
common time-points. The notation $\Ga^* = (\cup_i \ga^*_i) \cup 
(\cup_j \rh_j^*)$ means the set of all time-points appearing as
beginnings or ends of intervals belonging to some contour or chain 
in $\Gamma$. Also, we put $\bar \Ga = (\cup_i \bar\ga_i) \cup (\cup_j 
\bar\rh_j)$ for the set of intervals appearing in $\Ga$ through 
entering some contours or chains. $\Gamma$ is called a cluster if 
$\Ga^*$ is a connected collection of sets (in the usual sense), and 
for every $\varrho \in \Gamma$ we have that $\partial^-\rh, 
\partial^+\rh \in \cup_{j=1}^r\ga^*_j$. This means that in a cluster 
chains have no loose ends. We denote by $\K_N$ the set of all clusters 
for a given $N$. 

\end{itemize}

With these notations the sum in (\ref{sum1}) is then further expanded as
\begin{equation}
\sum_{{\cal R} \neq \emptyset} \prod_{(\tau_i,\tau_j) \in \cal R} 
(e^{-\lambda W_{\tau_i,\tau_j}} - 1) = \sum_{r \geq 1} 
\sum_{\{\gamma_1,...,\gamma_r\}} \prod_{k = 1}^r \prod_{(\tau_i,\tau_j) 
\in \gamma_k} (e^{-\lambda W_{\tau_i,\tau_j}} - 1)
\label{sum2}
\end{equation}
where now summation goes over collections $\{\gamma_1,...,\gamma_r\}$ 
of contours such that $\bar\ga_k \cap \bar\ga_{k'} = \emptyset$ 
unless $k = k'$. In a similar way (\ref{mar}) appears in the form
\begin{equation}
\sum_{\cal S \neq \emptyset} \prod_{k: \tau_k \in {\cal S}} 
\left( \pi_b(x_{k+1}, x_k) - 1 \right) = 
\sum_{s \geq 1} \sum_{\{\varrho_1,...,\varrho_s\}} \prod_{j=1}^s 
\prod_{k: \tau_k \in \varrho_j} \left(\pi_b(x_{k+1},x_k) - 1 \right).
\label{sum3}
\end{equation}
Here $\{\varrho_1,...,\varrho_s\}$ is a collection of disjoint chains,
and this formula justifies how we defined them. 

For every cluster $\Ga = \{\ga_1,...,\ga_r;\rh_1,...,\rh_s\} \in \K_N$ 
define the function
\begin{equation}
\kappa_\Ga = \prod_{l=1}^r \prod_{(\tau_i,\tau_j) \in \gamma_l} 
(e^{-\lambda W_{\tau_i,\tau_j}} - 1) \prod_{m=1}^s 
\prod_{k: \tau_k \in \rh_m}\left(\pi_b(x_{k+1},x_k) - 1\right). 
\label{kappa}
\end{equation}
Also, introduce the auxiliary probability measure 
\begin{equation}
d\chi_N(X) = \prod_{k=0}^{N-1} d\nu_{\tau_k}^{x_k,x_{k+1}}(X_{\tau_k}) 
\prod_{k=0}^N d\omega(x_k),
\label{auxi}
\end{equation}
and look at 
\begin{equation}
K_{\Ga} = \mathbb{E}_{\chi} [\kappa_\Ga],
\label{aux}
\end{equation}
where $\chi$ is the unique extension over the real line of the family 
of consistent probabilities $\{\chi_N\}_{N\ge 1}$. Note that 
$\int(\pi_b(x_{k+1},x_k)-1) d\omega(x_{k+1}) = \int(\pi_b(x_{k+1},x_k) 
-1) d\omega(x_k) = 0$. This is the reason why from a cluster we rule 
out chains having loose ends; for any such chain $\mathbb{E}_{\chi_N} 
[\kappa_\Ga] = 0$.

Define 
\begin{equation*}
\phi^T(\Ga_1,...,\Ga_n) = \begin{cases}
1 & \text{if $n=1$}\\
\sum_{G \in \G^n} \prod_{\{i,j\}\in G} 
(- 1_{\Ga^*_i \cap \Ga^*_j \neq \emptyset}) & \text{if $n>1$},
\end{cases}
\end{equation*}
with $\G^n$ the set of connected graphs on the vertex set
$\{1,\dots,n\}$. Note that $\phi^T(\Ga_1,...,\Ga_n)=0$ if the graph 
on the vertex set  $\{1,\dots,n\}$ with edges $\{i,j\}$ drawn 
whenever $\Gamma^*_i \cup \Gamma^*_j \neq \emptyset$, is connected.

By putting (\ref{sum2}), (\ref{mar}), (\ref{sum3}), (\ref{kappa}) and 
(\ref{aux}) together we obtain the cluster representation of the
partition function $Z_T$.
\begin{proposition}
\label{p0}
For every $T = Nb/2 > 0$ we have 
\begin{equation}
\label{clexp}
Z_T  = 1 + \sum_{n \geq 1} \sum_{\{\Ga_1,...,\Ga_n\} \in \K_N
\atop \Ga^*_i \cap \Ga^*_j = \emptyset, i \neq j} \prod_{l=1}^n 
K_{\Ga_l}.
\end{equation}
If the activities $K_{\Ga}$ satisfy the bound
\begin{equation}
\sum_{\Ga \in \K_N \atop \Ga^* \ni 0, |\bar\Ga| = n} |K_{\Ga}| 
\;\leq\;  c \; \eta^n
\label{conv}
\end{equation}
for $\eta > 0$ small enough, then the series above and at the right 
hand side of
\begin{equation}
\log Z_T = \sum_{n \geq 1} \sum_{\{\Ga_1,...,\Ga_n\} \in \K_N 
\atop 0 \in \Ga_1^*} \phi^T(\Ga_1,...,\Ga_n) \prod_{l=1}^n K_{\Ga_l}
\end{equation}
are absolutely convergent, uniformly in $N$, and the latter one 
gives the logarithm of the partition function for the interval 
$[-T,T]$.
\end{proposition}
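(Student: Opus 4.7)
The plan is to establish the cluster representation (\ref{clexp}) as a purely algebraic identity valid for each finite $N$, and then to invoke the standard polymer-expansion machinery to deduce absolute convergence (uniform in $N$) and the Ursell-type formula for $\log Z_T$ from the hypothesis (\ref{conv}).

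For the first part, I would start by rewriting $Z_T = \mathbb{E}_\nu[e^{-\lambda W_T}]$ using that the joint law of $(X_{t_0},\ldots,X_{t_N})$ under $\nu$ has density $p_{t_0,\ldots,t_N}$ with respect to $\prod_k d\omega(x_k)$ and that its conditional given those positions is the product of bridges $\prod_k d\nu_{\tau_k}^{x_k,x_{k+1}}$. This yields $Z_T = \mathbb{E}_{\chi_N}\bigl[p_{t_0,\ldots,t_N}(x_0,\ldots,x_N)\, e^{-\lambda W_T}\bigr]$ with $\chi_N$ as in (\ref{auxi}). I would then insert (\ref{sum1})--(\ref{sum2}) for $e^{-\lambda W_T}$ and (\ref{mar})--(\ref{sum3}) for $p_{t_0,\ldots,t_N}$, producing a double sum indexed by pairs $(\{\gamma_l\},\{\varrho_m\})$ of disjoint contour and chain collections.

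To regroup these into clusters I would use the equivalence relation on $\{\gamma_l\}\cup\{\varrho_m\}$ in which two elements are declared equivalent when their starred time-point sets intersect. Under $\chi_N$ the positions $x_k$ are independent with common law $d\omega$ and the path pieces over disjoint $\tau_k$ are conditionally independent, hence the expectation factorises over the resulting equivalence classes. A chain $\varrho$ whose endpoint $\partial^\pm \varrho$ lies outside $\bigcup_l \gamma_l^*$ contributes zero because
\[
\int\bigl(\pi_b(x_{k+1},x_k)-1\bigr)\,d\omega(x_k) \;=\; \int\bigl(\pi_b(x_{k+1},x_k)-1\bigr)\,d\omega(x_{k+1}) \;=\; 0.
\]
So only configurations whose every chain has both endpoints pinned to contour time-points survive, which is exactly the cluster condition. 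Each surviving equivalence class is a cluster $\Gamma$, its contribution under $\mathbb{E}_{\chi_N}$ equals $K_\Gamma$ by (\ref{aux}), and collecting a product over distinct classes together with the induced pairwise disjointness $\Gamma_i^*\cap\Gamma_j^*=\emptyset$ yields (\ref{clexp}).

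For the second part I would view (\ref{clexp}) as an abstract polymer model whose polymers are the clusters $\Gamma\in\K_N$, whose activities are $K_\Gamma$, and whose compatibility relation is the hard-core one $\Gamma_i^*\cap\Gamma_j^*=\emptyset$. The classical Koteck\'y--Preiss criterion then yields absolute convergence of (\ref{clexp}) and of the Ursell expansion for $\log Z_T$, uniformly in $N$, provided that for some $a,b>0$ and every polymer $\Gamma_0$
\[
\sum_{\Gamma:\,\bar\Gamma\cap\bar\Gamma_0\neq\emptyset} |K_\Gamma|\,e^{(a+b)|\bar\Gamma|} \;\le\; a\,|\bar\Gamma_0|.
\]
I would bound the left-hand side by $|\bar\Gamma_0|\sup_t \sum_{\Gamma:\,t\in\Gamma^*}|K_\Gamma|e^{(a+b)|\bar\Gamma|}$ and apply (\ref{conv}) to reduce the criterion to $c\sum_{n\ge 1}(\eta e^{a+b})^n<a$, which holds once $\eta$ is small and $a,b$ are chosen so that $\eta e^{a+b}<1$. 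Uniformity in $N$ is automatic because the bound (\ref{conv}) is $N$-independent. The main technical obstacle is this counting step: clusters anchored at a given time-point must be enumerated by their size $|\bar\Gamma|$ in a way consistent with (\ref{conv}), and the interplay between the geometric weight $e^{(a+b)|\bar\Gamma|}$ and the activity decay $\eta^{|\bar\Gamma|}$ is precisely where the smallness of $\lambda$ ultimately enters.
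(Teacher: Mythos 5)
Your proposal is correct and follows essentially the same route as the paper: the identity (\ref{clexp}) is obtained exactly as in the text by inserting the expansions of $e^{-\lambda W_T}$ and of the density $p_{t_0,\dots,t_N}$ into $\mathbb{E}_{\chi_N}\bigl[p\,e^{-\lambda W_T}\bigr]$, factorising the expectation over connected components, and killing chains with loose ends via $\int(\pi_b-1)\,d\omega=0$, while the convergence and the formula for $\log Z_T$ are quoted from the abstract polymer machinery (the paper cites Malyshev--Minlos where you invoke Koteck\'y--Preiss, but these are the same standard result). The only point worth flagging is that the criterion you verify requires the bound (\ref{conv}) anchored at an arbitrary time-point rather than only at $0$; this is harmless here because the underlying cluster estimates are uniform in the anchor.
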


\noindent
The expression of the logarithm  and the absolute convergence of the 
sums are a general result of cluster expansion techniques, for details 
of proof see~\cite{MM}.

\subsection{Convergence of cluster expansion}

\begin{proposition}
\label{p2}
Suppose that there exist a function $D: \mathbb{N}\times\mathbb{N} 
\rightarrow (0,\infty)$ and numbers $\e, C>0$ with
$
\sup_{i\in\mathbb{N}} \sum_{j\in\mathbb{N}} D(i,j) \le C
$
such that for every $N > 0$ and every cluster 
$\Ga = \{\ga_1,...,\ga_r;\rh_1,...,\rh_s\} \in \K_N$, the bound 
\begin{equation}
|K_{\Gamma}| \; \leq \; \left(\prod_{l=1}^r \prod_{(\tau_i,\tau_j) 
\in \gamma_l} \e D(i,j)\right) \e^{\sum_{m=1}^s |\bar\rh_m|}
\label{clustest-abs}
\end{equation}
holds. Then there is a constant $c > 0$ and a function 
$0 < \eta(\eps) < 1$ with $\eta \to 0$ as $\eps \to 0$ such that
\begin{equation}
\sum_{\Ga \in \K_N \atop \Ga^* \ni 0, |\bar\Ga| = n} |K_{\Ga}| 
\;\leq\;  c \; \eta^n.
\label{convv}
\end{equation}
\end{proposition}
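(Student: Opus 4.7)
The plan is a standard cluster-expansion convergence estimate: I would reduce the sum over clusters to a sum over spanning trees on the interval vertex set and then use the summability assumption $\sup_i \sum_j D(i,j) \le C$ to perform each vertex sum.

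First, using the hypothesis (\ref{clustest-abs}), I rewrite
\[
|K_\Gamma| \le \e^{p+q} \prod_{(i,j) \in \mathrm{pairs}(\Gamma)} D(i,j),
\]
with $p$ the total number of contour pairs in $\Gamma$ and $q=\sum_m |\bar\rh_m|$ the total chain length. By the definition of a cluster, the graph $G(\Gamma)$ on vertex set $\bar\Gamma$ whose edges are the contour pairs together with the chain adjacencies is connected, so it contains a spanning tree with $n-1=|\bar\Gamma|-1$ edges.

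Second, the key step is to parameterize each cluster $\Gamma$ by the triple consisting of its vertex set, a chosen spanning tree $T$ of $G(\Gamma)$, and the collection of non-tree edges. After fixing an interval of $\bar\Gamma$ incident to time $0$ (at most two choices) and rooting $T$ there, I perform the sum over each child's interval index by traversing $T$ from the root: a contour-tree edge $(i,j)$ contributes $\sum_j \e D(\cdot,j) \le \e C$, while a chain-tree edge contributes at most $2\e$, since the child index is forced to be adjacent to the parent's index in $\mathbb{Z}$. This yields a factor bounded by $(\e \max\{2,C\})^{n-1}$ per spanning tree.

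Third, once the vertex indices are fixed, the sum over the presence of additional (non-tree) contour pairs is bounded by $\prod_{(i,j)}(1+\e D(i,j)) \le \exp(\e C n)$, using $D \ge 0$, $\sum_j D(i,j)\le C$, and the fact that $\bar\Gamma$ has $n$ elements. Multiplying by the (exponential-in-$n$) combinatorial count of rooted tree shapes together with contour/chain labels on their edges, the overall bound becomes $c\,(K\e\,e^{\e C})^n$; setting $\eta(\e):=K\e\,e^{\e C}$ gives $\eta\to 0$ as $\e\to 0$ and yields $\sum_\Gamma |K_\Gamma| \le c\eta^n$, which is precisely (\ref{convv}).

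The main obstacle is controlling the combinatorial richness of clusters: contours can carry up to $\binom{m}{2}$ pairs on $m$ intervals, and chains of arbitrary length can attach at contour time-points. The spanning-tree reduction combined with the summability assumption on $D$ (which also implies the pointwise bound $D\le C$, since $D\ge 0$) is what tames these contributions, reducing the full sum to a geometric product of one-step summations indexed by the tree edges.
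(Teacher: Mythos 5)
Your overall strategy --- a single interval-level spanning-tree bound combined with $\prod(1+\e D(i,j))$ over the non-tree pairs --- is a legitimate and in fact more streamlined route than the paper's, which runs the tree-graph bound twice (once inside each contour, via Lemma \ref{lsumd}, and once on the graph whose vertices are whole contours joined by reduced chains) and packages the conclusion as analyticity in $z$ of a generating function $H(z;\e)$. However, there is a concrete gap at the very first step: the graph $G(\Gamma)$ on $\bar\Gamma$ whose edges are the contour pairs and the within-chain adjacencies need \emph{not} be connected. A cluster is held together by the requirement that the collection of time-point sets $\{\ga_1^*,\dots,\ga_r^*,\rh_1^*,\dots,\rh_s^*\}$ is connected, and two disjoint contours (or a chain endpoint and a contour) can meet in a single time-point without any pair or chain edge joining their intervals. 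For instance $\ga_1=\{(\tau_0,\tau_1)\}$ and $\ga_2=\{(\tau_2,\tau_3)\}$ share only the time-point $t_2$ and form a valid two-contour cluster with no chains; your $G(\Gamma)$ is then disconnected, no spanning tree with $n-1$ edges exists, and the traversal argument cannot reach the second component.

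The repair is to add ``adjacency'' edges joining intervals of different components that share a time-point. Such an edge carries no factor of $\e$; it only forces the child's index to lie within distance one of the parent's, i.e.\ $O(1)$ choices. The price is that your per-tree factor $(\e\max\{2,C\})^{n-1}$ is too optimistic: the tree now mixes genuine edges (worth $\e C$ or $2\e$) with adjacency edges (worth a constant), so one obtains $(\mathrm{const})^{n}\e^{p+q}$ with $p$ the number of pairs and $q=\sum_m|\bar\rh_m|$, and one must check separately that $p+q\ge cn$ for a fixed $c>0$. This does hold --- the pair graph of a contour on $m\ge2$ intervals is connected and hence has at least $m-1\ge m/2$ pairs, and every chain interval contributes one factor of $\e$ --- so the final rate is $\eta\sim\mathrm{const}\cdot\e^{c}\to0$ rather than $\mathrm{const}\cdot\e$, which is still exactly what (\ref{convv}) requires. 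With this repair (and with the $1/n!$ from passing between ordered tuples and sets made explicit when you invoke Cayley's count of trees, as the paper does via the bound $2^{r-2}(r-2)!/\prod_j l_j!$), your argument goes through and is genuinely simpler than the paper's two-level construction; as written, the claimed connectivity and the claimed rate are not justified.
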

The function $D(i,j)$ will be specified in Proposition 
\ref{pclustest} below. 

\medskip
\begin{proof}
We put for a shorthand  
$\D(\ga) = \prod_{(\tau_i,\tau_j) \in \ga} \e D(i,j)$.
Consider the function of complex variable $z$
\begin{equation}
H(z;\eps) = \sum_{\Ga\in\K_N: \Ga^* \ni 0} K_{\Ga} z^{|\bar \Ga|} = 
\sum_{\Ga\in\K_N: \Ga^* \ni 0 \atop \Ga \supset\; \mbox{\tiny{one contour}}} 
K_{\Ga} z^{|\bar\Ga|} + \sum_{\Ga\in\K_N: \Ga^* \ni 0 \atop \Ga \supset\;
\mbox{\tiny{more than one contour}}}K_{\Ga} z^{|\bar\Ga|}
\label{H}
\end{equation}
We show that for sufficiently small $\e > 0$ this is an analytic 
function of $z$ in a circle of radius $R(\e)$ which diverges as 
$\e \to 0$. Moreover, we show that within this circle $H(z;\e)$ 
is uniformly bounded in $\e$. This will then imply (\ref{convv}) by 
choosing $\eta(\e) = 1/R(\e)$. 

We start by estimating the second sum; the first is simpler as it
involves clusters having a single contour. Our strategy is first to
bound it by a sum taken over graphs whose vertices are the contours of
the same cluster. The sums over graphs will then be bounded by sums
taken over trees.

\medskip
\noindent
\emph{\textbf{Bounds by sums over graphs}} \quad
For each $r \geq 2$ consider in $\K_N$ those clusters $\Ga$ that have $r$ 
contours. For given $\Ga \in \K_N$ let $\V_r = \{\ga_1,...,\ga_r\}$ be the 
collection of these contours. We construct connected graphs $G$ by drawing 
edges between the elements of $\V_r$ considered as vertex set. 
Connected graphs are those for which either $\ga_i^* \cap \ga_j^* \neq \emptyset$ 
or there exists $\rh_l \in \Ga$ such that $\rh^*_l \cap \ga_i^* \neq \emptyset 
\neq \rh_l^* \cap \ga_j^*$. Let $\G_r$ denote the set of all possible such graphs. 

Consider the collection of {\it reduced} chains
$\{\hat\rh_1,...,\hat\rh_s\}$ with the properties:
\begin{enumerate}
\item[(1)]
for every pair $\{\ga_i,\ga_j\} \in G \in \G_r$, $\ga_i^* \cap \ga_j^* = \emptyset$, 
there is at least one chain $\hat\rh_l$ of this collection connecting $\ga_i$ and 
$\ga_j$ (i.e. $\overline{\hat \rh_l} \cap \bar \ga_i = \emptyset = \overline{\hat 
\rh_l} \cap \bar \ga_j$ and $\hat \rh_l^* \cap \ga_i^* \neq \emptyset \neq \hat 
\rh_l^* \cap \ga_j^*$), and for any pair $\{\ga_i,\ga_j\} \not \in G$ such a chain 
does not occur;
\item[(2)]
$\cup_i \bar\ga_i \cap \cup_j \overline{\hat\rh}_j = \emptyset$  and 
$\cup_j \{\partial^-{{\hat\rh}_j},\partial^+ {{\hat\rh}_j}\}  \subset 
\cup_i \ga_i^*$; 
\item[(3)]
$0 \in (\cup_i \ga_i^*) \cup (\cup_j {\hat\rh}_j^*)$;
\item[(4)]
each chain $\hat\rh_k$ connects a pair $\{\ga_i,\ga_j\} \in G \in
\G_r$ 
or fills a gap within a contour $\ga_i$.
\end{enumerate}
We call a collection of reduced chains compatible with graph $G$ if it
satisfies the conditions above and denote it 
$\{\hat\rh_1,...,\hat\rh_s\}_G$. Note that each $\hat\rh$ can join
only one pair of contours. A collection of reduced chains is then
constructed through the following steps:
\begin{enumerate}
\item[(1)]
first remove all chains $\rh_l \in \Ga$ for which $\bar\rh_l \subset \cup_{i=1}^r 
\bar\ga_i$;
\item[(2)]
for all remaining chains $\rh_k \in \Ga$, $\bar \rh_k \not\subset \cup_{i=1}^r 
\bar\ga_i$ remove all intervals from the set $\bar\rh_k \cap (\cup_{i=1}^r \bar\ga_i)$;
\item[(3)]
of the remaining intervals form all possible collections of non-empty chains denoted 
by $\{\hat\rh_1,...,\hat\rh_s\}$.
\end{enumerate} 
Then by Proposition \ref{clustest} we write
\begin{eqnarray}
\label{onecont}
\lefteqn{
\big|\sum_{\Ga: \Ga^* \ni 0 \atop \Ga \supset\;
\mbox{\tiny{more than one contour}}} K_\Ga z^{|\bar\Ga|}\big| 
\; \leq \label{est} } \\ &&
\sum_{r\geq2} \sum_{\{\ga_1,...,\ga_r\}} \sum_{G \in \G_r} 
\sum_{s \geq 0} \prod_{i=1}^r ((|z|(1+\e|z|))^{|\bar\ga_i|} 
\D(\ga_i) \sum_{\{\hat\rh_1,...,\hat\rh_s\}_G \atop 
\mbox{\tiny {$0 \in \cup_i \gamma_i^* \cup {\hat\rh}_i^*$}}} 
\prod_{j=1}^s(\e|z|)^{|\overline {\hat\rh_j}|}. \nonumber
\end{eqnarray}
Note that for fixed $\{\ga_1,...,\ga_r\}$ the collection
$\{\hat\rh_1,...,\hat\rh_m\}$ can be obtained from many possible
collections of chains $\{\rh_1,...,\rh_s\}$. This gives the factor 
$(1 + \e |z|)^{|\bar\ga_i|}$ appearing at the right hand side of 
(\ref{est}). From now on we assume that $\e |z|\le 1$ so that we 
can estimate this factor by $2^{|\bar\ga_i|}$.

Now consider the last sum above involving the reduced chains.
In this sum, either $0$ belongs to a contour or some chains. In
the second case there is a factor of $(\e|z|)^{\dist(0,\{\gamma\})}$
appearing in the sum, so we can estimate the sum by
\begin{equation*}
  \begin{split}
&  (\e|z|)^{\dist(0,{\gamma})/2} \sum_{\{\hat\rh_1,...,\hat\rh_s\}_G 
\atop \mbox{\tiny {$0 \in \cup_i \gamma_i^* \cup {\hat\rh}_i^*$}}}
\prod_{j=1}^s (\e|z|)^{|\overline {\hat\rh_j}|/2}
 \le     
 \sum_k (\e|z|)^{\dist(0,\gamma_k)/2}
 \sum_{\{\hat\rh_1,...,\hat\rh_s\}_G 
\atop \mbox{\tiny {$0 \in \cup_i \gamma_i^* \cup {\hat\rh}_i^*$}}}
\prod_{j=1}^s (\e|z|)^{|\overline {\hat\rh_j}|/2}.
  \end{split}
\end{equation*}
For each contour $\gamma_i$ the sum over all reduced chains belonging
to this contour cannot be larger than $2^{|\overline{\gamma_i}|}$
since the number of such chains is bounded by $\overline{\gamma_i}$
(when every chain separates each two successive intervals in the
contour). Moreover for each couple of contours $(\gamma_i,\gamma_j)$
the contribution to the sum of the chains connecting them is given by
$$
2^{|\overline{\gamma_i}|+|\overline{\gamma_j}|} 
(\e |z|)^{\dist(\gamma_i,\gamma_j)/2}
$$
since there is at least one chain longer than
$\dist(\gamma_i,\gamma_j)$ and the rest of the chains contribute
into the combinatorial prefactor. This gives
\begin{eqnarray}
\label{morecont-2}
\lefteqn{
\big|\sum_{\Ga: \Ga^* \ni 0 \atop \Ga \supset\;
\mbox{\tiny{more than one contour}}}
K_\Ga z^{|\bar\Ga|}\big| \; \leq \label{estt} } \\ &&
\sum_{r\geq2} \sum_{\{\ga_1,...,\ga_r\}}
 \sum_k  \prod_{i=1}^r (8|z|)^{|\bar\ga_i|} \D(\ga_i) 
(\e|z|)^{\dist(0,\gamma_k)/2}  \sum_{G \in \G_r}
 \prod_{\{\gamma_i,\gamma_j\}\in G} 
(\e |z|)^{\dist(\gamma_i,\gamma_j)/2}.
 \nonumber
\end{eqnarray}

\medskip
\noindent
\emph{\textbf{Bounds by sums over trees}} \quad
We use the tree-graph bound  (cf. Lemma 8, Ch. 2, Sect. 4 
of \cite{MM}) to get
\begin{equation}
\sum_{G\in \G_r} \prod_{\{\ga_i,\ga_j\} \in G} (\e |z|)^
{\dist (\gamma_i,\gamma_j)/2} 
\leq 
\prod_{i,j =1}^r (1 + (\e |z|)^{\dist(\gamma_i,\gamma_j)/2}) 
\sum_{T \in \T_r} \prod_{\{\ga_i,\ga_j\} \in T} 
(\e |z|)^{\dist(\gamma_i,\gamma_j)/2}
\label{mm8}
\end{equation}
where $\T_r$ is the set of trees on the vertex set
$\{\gamma_1,\dots,\gamma_r\}$. Moreover we have
$$
\prod_{i,j =1}^r (1 + (\e |z|)^{\dist(\gamma_i,\gamma_j)/2}) 
\le
2^{2 r} \le 2^{2\sum_i |\overline\gamma_i| },
$$
thus (\ref{onecont}) is further estimated by
\begin{eqnarray}
\label{morecont-3}
\lefteqn{
\big|\sum_{\Ga: \Ga^* \ni 0 \atop \Ga \supset\;
\mbox{\tiny{more than one contour}}}
K_\Ga z^{|\bar\Ga|}\big| \; \leq \label{esttt} } \\ &&
\sum_{r\geq2} \sum_{\{\ga_1,...,\ga_r\}}
 \sum_k  \prod_{i=1}^r (32|z|)^{|\bar\ga_i|} \D(\ga_i) 
(\e|z|)^{\dist(0,\gamma_k)/2}  \sum_{T \in \T_r}
 \prod_{\{\gamma_i,\gamma_j\}\in G} (\e |z|)^{\dist
(\gamma_i,\gamma_j)/2}.
\nonumber
\end{eqnarray}

Take the trees over vertex set $\{1,...,r\}$ obtained through 
$\ga_k \mapsto k$, $\forall k = 1,...,r$; denote them $\tilde T$ 
and the set of all such trees by $\tilde \T_r$. Then we re-sum in
(\ref{morecont-3}):
\begin{equation}
  \begin{split}
& \mbox{r.h.s.}\; (\ref{morecont-3}) \; \leq\\
& \quad  \sum_{r=2}^\infty \frac{1}{r!} 
\sum_{\tilde T \in \tilde \T_r} 
\sum_{i^*=1}^r 
\sum_{(\ga_1,...,\ga_r)} \prod_{i=1}^r (32|z|)^{|\bar\ga_i|} 
(\e |z|)^{\dist(0,\ga_{i^*})/2} \D(\ga_i) 
\prod_{\{i,j\} \in \tilde T}
(\e |z|)^{\dist(\ga_i,\ga_j)/2}.    
  \end{split}
\label{hae}
\end{equation}
The change of bracket indicates that the third sum here is performed 
over all ordered collections of disjoint contours. Fix an enumeration
of $\tilde T$ and pick its first element $i_0$. We estimate first 
\begin{equation}
\sum_{(\ga_1,...,\ga_r)} \prod_{i=1}^r (32|z|)^{|\bar\ga_i|} \D(\ga_i) 
\prod_{\{i,j\} \in \tilde T} (\e |z|)^{\dist(\ga_i,\ga_j)/2}.
\label{hva}
\end{equation}
Let $j_0 \neq i_0$ be an end vertex of tree $\tilde T$ being joint 
only with vertex $k_0$. Then
\begin{eqnarray}
\lefteqn{
\hspace{-1.5cm}
\sum_{\ga_{j_0}} (32|z|)^{|\bar\ga_{j_0}|} \D(\ga_{j_0}) (\e |z|)^
{\dist(\ga_{k_0},\ga_{j_0})/2}  \label{i0}} \\ &&
\;\leq\;  \sum_{\tau'' \in \bar\ga_{k_0}} \sum_{\ga_{j_0}} 
\sum_{\tau' \in \bar \ga_{j_0}} (\e |z|)^{\dist(\tau',\tau'')/2} 
(32|z|)^{|\bar\ga_{j_0}|} \D(\ga_{j_0}) 
\nonumber  \\ &&
\;\leq\; \sum_{\tau'' \in \bar\ga_{k_0}} \sum_{\tau'} 
(\e |z|)^{\dist(\tau',\tau'')/2} \sum_{\ga_{j_0}: \tau' \in \bar 
\ga_{j_0}} (32|z|)^{|\bar\ga_{j_0}|} \D(\ga_{j_0}). \nonumber 
\end{eqnarray}
Here we used that $(\e|z|)^{\dist(\ga,\ga')/2} 
\leq 
\sum_{\tau \in \ga, \tau' \in \ga'} (\e|z|)^{\dist(\tau,\tau')/2}$.
By using Lemma \ref{lsumd} below and the bound
\begin{equation}
\sum_{k=2}^\infty a^{k-1} k^m \leq \frac{2^{m}m! e a}{1 - e a}
\label{geo}
\end{equation}
obtained via complex integration, we further estimate (\ref{i0}) by
\begin{equation}
\label{j0}
\sum_{\tau'' \in \bar\ga_{k_0}} \sum_{\tau'} 
(\e |z|)^{\dist(\tau',\tau'')/2} 
\sum_{k=2}^\infty (32|z|)^k (C\e)^{k-1} 
\leq  
\frac{64 C |\bar\ga_{k_0}| \e|z|}{(1 - (\e|z|)^{1/2})
(1 - 32C \e|z|)}. 
\end{equation}
From now on we choose $z$ such that $32C \e|z| < 1$ holds.

Next we go on by taking the next vertex of $\tilde\T$ in line, say
$j_1 \neq i_0$ connecting with $k_1$. We iterate the procedure for the
new tree obtained by deleting from $\tilde T$ the vertex $j_0$ and
edge $(j_0, k_0)$. If $j_1 \neq k_0$, we get again an estimate of the
type (\ref{j0}). If $j_1 = k_0$, we estimate
\begin{equation*}
\sum_{\tau'' \in \ga_{k_1}} \sum_{\tau'} (\e|z|)^{\dist(\tau',\tau'')/2} 
\sum_{\ga_{j_1}: \tau' \in \ga}|\bar\ga_{j_1}|(32|z|)^{|\bar\ga_{j_1}|}
\D(\ga)
\leq  
\const{q2} |\bar\ga_{k_1}| \sum_{k=2}^\infty (32|z|)^k 
k (C \e)^{k-1},
\end{equation*}
with some $\defconst{q2} > 0$. Continuing this procedure inductively 
we get after summation over $\ga_{j_m}$, $j_m \neq i_0$, connected 
to $\ga_{k_m}$, the net contribution
\begin{eqnarray*}
{\const{q2}|\bar\ga_{k_m}|} \sum_{k=1}^\infty (32|z|)^k k^{l_{j_m} -1} 
(C\e)^{k-1} 
&\leq&
{\const{q2}|\ga_{k_m}|} \sum_{k=2}^\infty (\const{q3}|z|)^k 
k^{l_{j_m}-1} (C\e)^{k-1} \\
&\leq&
\const{q9} \; |\ga_{k_m}||z| \sum_{k=2}^\infty ({\const{q5}} 
\e |z|)^{k-1} k^{l_{j_1}-1}
\label{ind}
\end{eqnarray*}
where $l_{j_m}$ is the degree of vertex $j_m$, i.e. the number 
of edges of $\tilde T$ incident to $j_m$, and $\defconst{q3},
\defconst{q4} > 0$, $\defconst{q5} = C\const{q3}$, $\defconst{q9} 
= \const{q3}\const{q4}$ is the long sequence of constants. By using 
(\ref{geo}) again, we estimate (\ref{hva}) further for fixed 
$\ga_{i_0}$ and $\tilde T$ to get
\begin{equation*}
\sum_{\ga_k: \; k \neq i_0 \atop k = 1,...,r} \prod_{\{i,j\} \in \tilde T} 
(\e|z|)^{\dist(\tau_i,\tau_j)/2} \prod_{i \neq i_0} \D(\ga_i) (32|z|)^
{|\bar\ga_i|} \leq
|\bar\ga_{i_0}| ^{l_{i_0}} \prod_{k \neq i_0} 2^{l_{j_k}} (l_{j_k}-1)! \; 
(k_3 \e |z|^2)^{r-1}  
\end{equation*}
where we used that $\e |z| \leq 1/(\const{q5} e)$. Thus we need 
furthermore (see (\ref{hae}))
\begin{eqnarray*}
\lefteqn{
\sum_{\ga_{i_0}} |\bar\ga_{i_0}|^{l_{i_0}} \D(\ga_{i_0})
(16|z|(1+\e|z|))^{|\ga_{i_0}|} (\e|z|)^{\dist(0,\ga_{i_0})/2} } \\ &&
\hspace{0.3cm} \leq 
\sum_{\tau} (\e|z|)^{\dist(0,\tau)/2} \sum_{\ga_{i_0}: \tau \in \ga_{i_0}} 
|\bar\ga_{i_0}|^{l_{i_0}} \D(\ga_{i_0}) (32|z|)^{|\ga_{i_0}|}.
\end{eqnarray*}
By a repetition of the arguments above we get 
\begin{eqnarray*}
\sum_{\ga_{i_0}: \tau \in \ga_{i_0}} |\bar\ga_{i_0}|^{l_{i_0}}
\D(\ga_{i_0}) (32|z|)^{|\ga_{i_0}|}
& \leq &
\sum_{k=2}^\infty k^{l_{i_0}} (C\e)^{k-1} (32|z|)^k \\
& \leq &
\const{q6} 2^{l_{i_0}} l_{i_0}! \e |z|^2,
\end{eqnarray*}
with $\defconst{q6} > 0$. Summation over $\tau$ gives
$
\sum_\tau (\e|z|)^{\dist(0,\tau)/2} \leq \; \const{q7}, 
$
with some $\defconst{q7} > 0$, hence we finally obtain for fixed 
$\tilde T$ and $i_0$
\begin{eqnarray*}
2^{l_{i_0}} l_{i_0}! \prod_{j_k \neq i_0} 2^{l_k} (l_{j_k}-1)! 
\const{q7} (\const{q6} \e |z|^2)^{r-1}
&\leq&
\const{q7} (2^2 \const{q6} \e |z|^2)^{r-1} \prod_{k=1}^r l_{j_k}! 
\end{eqnarray*}
where we used the fact $\sum_{k=1}^r l_{j_k} = 2(r-1)$ for trees.  
An upper bound on the number of trees with vertices $\{1,...,r\}$ 
and degrees $\{l_1,...,l_r\}$ is \cite{MM}
\begin{equation}
\frac{2^{r-2} (r-2)!}{\prod_{j=1}^r l_j!}. 
\label{inc}
\end{equation}
Moreover, the number of collections $\{l_1,...,l_r\}$ such that $l_i > 
0$ and $\sum_i l_i = 2(r-1)$ is bounded from above by $2^{2(r-1)}$.
Hence, by summing over $i_0$ and combining this estimate with
(\ref{inc}), we get
\begin{equation}
\sum_{\Ga: \Ga^* \ni 0 \atop \Ga \supset\;\mbox{\tiny{more than one
      contour}}} K_\Ga |z|^{|\bar\Ga|} \leq c \e |z|^2
\end{equation}
with some constant $c > 0$. This completes the estimate of the second
term in (\ref{H}). The first term there can be handled in a similar
way with substantial simplifications due to the fact that only one 
contour occurs in the clusters. 

It is seen then that by choosing $z$ such that $\e |z|^2 \leq
\mbox{const}$, the sum $\sum_\Ga K_{\Ga} z^{|\bar \Ga|}$ converges
and is bounded. Hence $H(z)$ is an analytic function within a circle 
of radius $R(\e)$ with $R(\e) \to \infty$ as $\e \to 0$, and is 
bounded by a constant independent of $\e$. Thus 
\begin{equation}
\sum_{\Ga: \Ga^* \ni 0 \atop |\bar\Ga| = n} |K_{\Ga}| \;\leq\; 
\mbox{const} \; R(\e)^{-n} \; := \;  \mbox{const} \; \eta(\e) ^n,
\end{equation}
with suitable constants.
\end{proof}

Finally we show the lemma referred to in the proof above. 
\begin{lemma}
\label{lsumd}
There is a constant $C > 0$ such that for any interval $\tau$ 
and integer $k \geq 2$ 
\begin{equation}
\sum_{\ga: \bar\ga \ni \tau \atop |\bar\ga|=k} \D(\ga) \; \leq\; 
(C\e)^{k-1}.
\label{sumd}
\end{equation}
\end{lemma}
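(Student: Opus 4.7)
The plan is to identify each contour $\ga$ with $|\bar\ga|=k$ and $\tau\in\bar\ga$ as a connected simple graph on a $k$-element vertex set $V\ni\tau$ whose edges are the pairs $(\tau_i,\tau_j)\in\ga$. This rewrites
\[
\sum_{\ga:\,\bar\ga\ni\tau,\,|\bar\ga|=k}\D(\ga)
\;=\;\sum_{\substack{V\ni\tau\\ |V|=k}}\,\sum_{G\text{ conn.\ on }V}\prod_{\{i,j\}\in G}\e D(i,j),
\]
and I proceed in two steps: first reducing to a sum over spanning trees, then controlling the tree sum by a generating-function argument.

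For the first step, since every connected graph on $V$ contains at least one spanning tree, summing over pairs $(T,G)$ with $T$ a spanning tree and $G\supseteq T$ gives the upper bound
\[
\sum_{G\text{ conn.}}\prod_{e\in G}\e D(e)\;\le\;\sum_{T}\prod_{e\in T}\e D(e)\prod_{e\in\binom{V}{2}\setminus T}(1+\e D(e)).
\]
Using $1+x\le e^x$ together with $\sum_{\{i,j\}\subset V}D(i,j)\le kC$ (which follows from the row-sum hypothesis $\sup_i\sum_j D(i,j)\le C$), the non-tree factor is at most $e^{\e kC}$, and this is $\le e^k$ whenever $\e C\le 1$. Peeling off the $k-1$ factors of $\e$ carried by the tree edges, what remains to bound is
\[
S_k(\tau):=\sum_{\substack{V\ni\tau\\|V|=k}}\sum_{T\text{ sp.\ tree of }V}\prod_{e\in T}D(e),
\]
a weighted sum over labeled rooted trees with root $\tau$ and $k$ vertices.

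For the second step, decomposing such a tree into the $m\ge 1$ subtrees rooted at the children of $\tau$ and dropping the distinctness of children (an upper bound) gives the recursion
\[
S_k\;\le\;\sum_{m=1}^{k-1}\frac{C^m}{m!}\sum_{\substack{k_1+\cdots+k_m=k-1\\ k_i\ge 1}}\prod_{i=1}^m S_{k_i},\qquad S_1=1,
\]
with $S_k:=\sup_\tau S_k(\tau)$; the $C^m$ uses $\sum_{\tau'}D(\tau,\tau')\le C$ at the root and the $1/m!$ accounts for ordering of the children. The generating function $h(x)=\sum_{k\ge 1}S_k x^k$ therefore satisfies $h(x)\le xe^{Ch(x)}$, and Lagrange inversion yields $S_k\le(Ck)^{k-1}/k!\le c\,(eC)^{k-1}$ by Stirling. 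Combining with the first step gives
\[
\sum_\ga\D(\ga)\;\le\;e^{\e kC}\,\e^{k-1}\,S_k\;\le\;(C'\e)^{k-1}
\]
for some $C'>0$, which is the desired bound (after renaming $C'$ as the $C$ in the statement).

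The main obstacle is the tree bound in the second step: by Cayley's formula there are $k^{k-2}$ labelled spanning trees on $k$ vertices, so the naive estimate $C^{k-1}$ per tree would drastically overshoot. The symmetrisation factor $1/m!$ built into the recursion provides exactly the $1/k!$ needed to cancel the Cayley growth, and Lagrange inversion turns this into the clean geometric bound $(eC)^{k-1}$ up to a polynomial prefactor. Once this is in hand, the spanning-tree reduction of the first step is routine bookkeeping.
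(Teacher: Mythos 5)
Your proof is correct, and its skeleton is the same as the paper's: identify a contour with a connected graph on its interval set, apply the tree--graph inequality to reduce to spanning trees (picking up a harmless $e^{\e kC}$ from the non-tree edges), and then control the tree sum using only the row-sum condition $\sup_i\sum_j D(i,j)\le C$. The one place you genuinely diverge is the tree sum itself. The paper symmetrises the unordered vertex set into ordered $k$-tuples (gaining a $1/k!$), bounds the positional sum for each \emph{fixed} labelled tree shape by $(C\e)^{k-1}$ via the standard leaf-stripping induction, and then multiplies by Cayley's count $k^{k-2}$ of labelled trees, with Stirling showing $k^{k-2}/k!$ grows only like $e^k$. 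You instead root the tree at $\tau$, write the recursion over the subtrees at the children (the $1/m!$ playing the role of the paper's $1/k!$), and extract $S_k\le (Ck)^{k-1}/k!$ from $h(x)\le xe^{Ch(x)}$ by Lagrange inversion. The two routes encode exactly the same cancellation between Cayley growth and the symmetry factor; yours packages it into the tree generating function and avoids quoting the enumeration of trees by degree sequence, while the paper's leaf-stripping version is the one it reuses (in elaborated form) in the proof of Proposition 5.2, so it keeps the two arguments parallel. Either way the constant is absorbed using $k\ge 2$, as you note.
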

\begin{proof}
\begin{equation}
\sum_{\ga: \tau \in \bar\ga \atop |\bar\ga| = k} \D(\ga) = 
\sum_{\{\tau_1,...,\tau_k\}} \sum_{G\in \hat\G}
\prod_{\{\tau_i,\tau_j\} \in G} \e D(i,j)
\end{equation}
Here $\hat\G$ denotes the set of connected graphs with vertices 
$\tau_1,...,\tau_k$. Note that for fixed $i_0$ we have 
$\sum_{\tau_j: j \neq i_0}  \e D(i_0,j) \le C \e$ with some 
$C > 0$. Thus by using (\ref{mm8}) we find
\begin{equation}
\sum_{G\in\hat\G} \prod_{\{\tau_i,\tau_j\} \in G}  \e D(i,j)
\leq 
e^{k C \e} \sum_{T\in \hat\T} \prod_{\{\tau_i,\tau_j\} \in T} 
\e D(i,j) 
\label{102}
\end{equation}
where $\hat\T$ are trees for the same vertex set as for $\hat\G$. 
Next order the  collection $\{\tau_1,...,\tau_k\}$ further to get 
\begin{equation}
\sum_{\{\tau_1,...,\tau_k\}} \sum_{T \in \hat T} \prod_{\{i,j\} 
\in \tilde \T} \e D(i,j)  =
\frac{1}{k!} \sum_{(\tau_1,...,\tau_k)}\sum_{\tilde T \in\tilde\T} 
\prod_{\{i,j\} \in \tilde T}  \e D(i,j),
\label{101}
\end{equation}
with the same $\tilde \T$ as previously. We then obtain inductively
\begin{equation}
\sum_{(\tau_1,...,\tau_k)} \prod_{(\tau_i,\tau_j) \in \T} \e D(i,j)
\leq (C \e)^{k-1}.
\end{equation}
Since the number of trees having $k$ vertices is $k^{k-2}$ \cite{MM}, 
by using Stirling's formula, (\ref{102}) and (\ref{101}) we complete 
the proof of the lemma.
\end{proof}

\subsection{Cluster estimates}
Having the abstract cluster expansion at hand, we turn now to
establishing the bounds~(\ref{clustest-abs}) on the cluster
activities. 
\begin{proposition}
There exists $\delta > 1$ and a function
$\e(\lambda)<\infty$ with $\e(\lambda) \to 0$ as $\lambda \to 0$ and
$b = b(\lambda)\ge 1$ such that for every $N > 0$ and every cluster
$\Ga = \{\ga_1,...,\ga_r;\rh_1,...,\rh_s\} \in \K_N$, the bound 
\begin{equation}
|K_{\Gamma}| \; \leq \; \prod_{l=1}^r \prod_{(\tau_i,\tau_j) \in 
\gamma_l} \frac{\e}{(1+b|i-j-1|)^\delta }\prod_{m=1}^s  \e^{|\bar\rh_m|}
\label{clustest}
\end{equation}
holds.
\label{pclustest}
\end{proposition}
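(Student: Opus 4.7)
The plan is to decompose $|K_\Gamma|=|\expect_\chi[\kappa_\Gamma]|$ and obtain factorized bounds for each pair factor $e^{-\lambda W_{\tau_i,\tau_j}}-1$ occurring in a contour, and for each chain factor $\pi_b(x_{k+1},x_k)-1$. I would apply a generalized Hölder inequality across all pair factors and chain factors, exploiting the fact that the reference measure $\chi$ is, by construction, a product of It\^o bridges $\nu_{\tau_k}^{x_k,x_{k+1}}$ over $k$ glued together by an independent product of $\omega$-marginals on the endpoints, so that expectations cleanly separate provided one is careful about which intervals the integrands depend on.

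For each pair factor I start from $|e^{-\lambda W_{\tau_i,\tau_j}}-1|\le |\lambda W_{\tau_i,\tau_j}|\,e^{|\lambda W_{\tau_i,\tau_j}|}$ and use Cauchy--Schwarz to isolate the small factor $|\lambda|$ from the exponential remainder, which remains uniformly bounded by Khasminskii's lemma for $|\lambda|$ under a fixed threshold (the Fourier positivity of $W$ keeps the diagonal contribution $\J_{ii}$ from hurting this). For the linear piece I rewrite
\[
W_{\tau_i,\tau_j}=\int_{\tau_i}\!\!\int_{\tau_j} W(X_s-X_t,s-t)\,dX_s\cdot dX_t,
\]
and apply the rough-path bound (\ref{eq:F-bound}) of Lemma \ref{lemma:rough} iteratively: the inner integral in $t$ over $\tau_j$ yields, for each fixed $s\in\tau_i$, a vector field whose $\|\cdot\|_{\rho,2,\tau_j}$-norm is dominated by $\|W(\cdot,s-\cdot)\|_{\D_\beta}$, and the weight $(1+|k|)^{-\beta}$ built into $\D_\beta$ with $\beta>3$ produces the polynomial decay $(1+b|i-j-1|)^{-\beta}$ in the time gap between $\tau_i$ and $\tau_j$. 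Performing the outer rough integral and using the uniform moment bounds $\expect_\chi[(1+N_{[k,k+1]}(X))^p\,|\,x_k,x_{k+1}]\le C_p$ established in Lemma~\ref{lemma:nu-lift} gives
\[
\expect_\chi\!\left[|W_{\tau_i,\tau_j}|^p\right]\le C_p\, b^{2\gamma p}(1+b|i-j-1|)^{-\beta p},
\]
so that after Cauchy--Schwarz one can take $\delta=\beta>1$.

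For each chain factor I use intrinsic ultracontractivity of $e^{-tH}$, which yields a spectral gap $g>0$ of $H_\omega$ above its zero eigenvalue. Expanding $\pi_b(x,y)-1=\sum_{j\ge 1}e^{-b\lambda_j}\phi_j(x)\phi_j(y)$ in $L^2(d\omega)$ and integrating successively over the shared endpoints of the chain against $d\omega(x_k)$, orthogonality of eigenfunctions reduces the chain contribution in $L^1(\chi)$ to $(C e^{-bg})^{|\bar\rh_m|}$. Choosing $b=b(\lambda)\ge 1$ large enough that $b^{2\gamma}\,e^{-bg/2}\le 1$ and all combinatorial prefactors $C^{|\bar\gamma_l|}$, $C^{|\bar\rh_m|}$ get absorbed, and then setting
\[
\e(\lambda)=\max\!\left\{|\lambda|^{1/2},\ e^{-b(\lambda)g/2}\right\},
\]
one checks $\e(\lambda)\to 0$ as $\lambda\to 0$, and (\ref{clustest}) follows.

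The main obstacle is the interlocking nature of the factors: a single interval $\tau_i$ typically appears in several pair factors of its contour, and can be shared with a chain endpoint, so the $\chi$-expectation does not split as a naive product. I would handle this by estimating each pair factor in $L^p(\chi)$ for $p$ at least as large as the maximum number of pair factors meeting any one interval (bounded by $|\bar\gamma_l|$ in the $l$-th contour), applying generalized Hölder, and absorbing the resulting $C^{|\bar\gamma_l|}$ losses into $\e^{|\bar\gamma_l|}$ for small $\lambda$. The nearest-neighbour diagonal terms $\tfrac12\J_{ii}$ appearing in $W_{\tau_i,\tau_{i\pm 1}}$ are harmless because they carry an extra factor $|\lambda|$ in the Taylor expansion and their $\chi$-moments are uniformly bounded by the same rough-path estimates, so they can simply be folded into $\e$.
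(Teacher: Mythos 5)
Your overall architecture matches the paper's: a generalized H\"older inequality splitting $|K_\Gamma|$ into per-pair and per-chain factors, a Taylor expansion $|e^{-\lambda W_{\tau_i,\tau_j}}-1|\le|\lambda W_{\tau_i,\tau_j}|e^{|\lambda W_{\tau_i,\tau_j}|}$ followed by Cauchy--Schwarz, polynomial decay in $|i-j|$ extracted from the $\D_\beta$-condition on $W$, and a spectral-gap estimate $|\pi_b-1|\lesssim e^{-\Lambda b}$ for the chains. But there is a genuine gap at the heart of the pair-factor estimate: you dispose of the exponential remainder $\expect\bigl[e^{c|\lambda W_{\tau_i,\tau_j}|}\bigr]$ by invoking ``Khasminskii's lemma,'' which controls exponential moments of \emph{additive} functionals $\int V(X_s)\,ds$ of Kato-class potentials and says nothing about a double stochastic integral $\int_{\tau_i}\int_{\tau_j}W\,dX\cdot dX$. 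The Fourier positivity of $W$ does not rescue this either: it makes the full quadratic form $\langle C^X_I,C^X_I\rangle_W$ nonnegative, but the individual cross terms $\J_{ij}+\J_{ji}$ with $i\neq j$ are not sign-definite, so one really must bound their exponential moments from both sides. This is exactly what the paper's Section on energy estimates supplies and what your proposal is missing: Lemma \ref{lemma:exp-bound} (a Girsanov/Cauchy--Schwarz bound reducing $\expect[e^{\int f\,dX}]$ to $\expect[e^{2\int|f|^2}]$) applied twice, with a Gaussian linearization $e^{cZ^2}=\expect_G[e^{\sqrt{2c}\,GZ}]$ in between to handle the square of the inner integral, yielding Lemma \ref{lemma:final-exp-bound}. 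Without some substitute for this step the exponential remainder is uncontrolled and the proof does not close.

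Two secondary points. First, since the H\"older exponents $n_{ij}$ must grow with $|i-j|$ (so that $\sum_j 1/n_{ij}$ converges per interval), the exponential-moment bound must hold with exponent $n_{ij}|\lambda|$, and the moment bound $\expect[|W_{\tau_i,\tau_j}|^{2n}]^{1/2n}$ must grow at most polynomially in $n$; the paper tracks this via the condition $a\,n_{ij}\le 1/2$ and the BDG constant $c_n\le(2n)^{2n}$, and pays for it with $\delta=\beta-2\Delta$ rather than your claimed $\delta=\beta$. Your uniform ``$C_p$'' from Lemma \ref{lemma:nu-lift} does not record this $p$-dependence. Second, the pair factor inevitably carries a factor $e^{Cb}$ (from $\expect[e^{-cV_\tau}]$ for Kato-class $V$, and from the $b$-powers in the moment bounds), so $b(\lambda)$ must be chosen to balance $\lambda e^{Cb}$ against $e^{-\Lambda b}$ --- the paper takes $b=-\log|\lambda|/(\Lambda+C)$ so both equal $|\lambda|^{\Lambda/(\Lambda+C)}=\e$; your choice of $b$ only accounts for polynomial growth $b^{2\gamma}$ and would not absorb an exponential one.
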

The first product (over the contours) above is our $D(i,j)$ in
Proposition \ref{p2} above, and it is readily seen that it satisfies
the condition given there.

\medskip
\begin{proof}
By H\"older inequality
\begin{eqnarray}
|K_{\Ga}| 
& \leq & 
\prod_{l=1}^r \prod_{(\tau_i,\tau_j) \in \gamma_l} \left(\int \expect_{\nuxT}
\left[|e^{-\lambda W_{\tau_i,\tau_j}} - 1|^{n_{ij}}\right]
d\omega^{\otimes N}(\mathbf{x}) \right)^{1/n_{ij}} 
\times \\
&& \hspace{0.3cm} \times 
\prod_{m=1}^s \prod_{k: \tau_k \in \rh_m}\left(\int|\pi_b(x_{k+1},x_k) - 1|^\beta 
d\omega(x_k)d\omega(x_{k+1})\right)^{1/\beta} \nonumber
\end{eqnarray}
with suitable exponents. We choose $\beta = 4$, $n_{ij} = A |i-j+1|^{\Delta}$, 
with $\Delta > 1$ to be specified below. Taken with correct 
multiplicities, we pick $A$ such that
$$
\frac{2}{\beta} + \sum_{j \in \mathbb{N} \atop j \geq i}\frac{2}{n_{ij}} =
\frac{1}{2} + \frac{2}{A}\sum_{k=1}^\infty \frac{1}{k^\Delta} \leq 1.
$$
The first part of estimate (\ref{clustest}) follows by Lemma \ref{holderw}, 
the second by Lemma \ref{holderpfi} below. By choosing $b= - \log |\lambda|/
(\Lambda+C)$ with suitable $C>0$, we have
$
\lambda e^{Cb} = e^{-\Lambda b} = |\lambda|^{\Lambda/(\Lambda+C)} =: \e,
$
thus the estimate (\ref{clustest}) is finally obtained.
\end{proof}

\begin{lemma}
\label{lemma:trans}
For large enough $b > 0$ there is a constant $C > 0$ such that
\begin{equation}
|\pi_b(x,y) - 1| \,\leq\, Ce^{-\Lambda b},
\end{equation}
uniformly in $x,y \in \R^d$, where $\Lambda > 0$ is the spectral gap of the 
Schr\"odinger operator $H = H_0 + V$. 
\label{holderpfi}
\end{lemma}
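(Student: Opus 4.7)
The statement is a standard exponential-convergence-to-equilibrium estimate in sup-norm, and the plan is to derive it from the combination of the spectral gap and intrinsic ultracontractivity that the paper has already imposed on $V$ in Section~\ref{sec:potentials}. After implicitly normalizing $\inf\Spec H = 0$ (equivalent to a constant shift of $V$; without this $\pi_t$ would not even integrate to $1$ against $d\omega$, so the paper tacitly assumes it), the ground state transform gives $H_\omega 1 = 0$, so the orthogonal projection $P$ on constants in $L^2(\R^d,d\omega)$ commutes with $e^{-bH_\omega}$.

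The first step would be to identify $\pi_b(x,y)-1$ as the integral kernel (with respect to $d\omega$) of the operator
$$
A_b \;:=\; e^{-bH_\omega} - P \;=\; e^{-bH_\omega}(I-P),
$$
the last equality using $e^{-bH_\omega}P = P$. Since the $L^\infty$-norm of the kernel of an operator equals its $L^1(d\omega)\to L^\infty(d\omega)$ operator norm, the task reduces to bounding $\|A_b\|_{L^1\to L^\infty}$.

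For this I would choose any fixed $s>0$ and $b\ge 2s$ and factor
$$
A_b \;=\; e^{-sH_\omega}\,\bigl[\,e^{-(b-2s)H_\omega}(I-P)\,\bigr]\,e^{-sH_\omega},
$$
then estimate the three factors separately. Intrinsic ultracontractivity gives $\|e^{-sH_\omega}\|_{L^2\to L^\infty}<\infty$, and by self-adjointness and duality the same bound holds for $\|e^{-sH_\omega}\|_{L^1\to L^2}$; these control the outer factors. For the middle factor, the spectral gap of $H_\omega$ on the invariant subspace $(I-P)L^2(d\omega)$ yields $\|e^{-(b-2s)H_\omega}(I-P)\|_{L^2\to L^2}\le e^{-\Lambda(b-2s)}$. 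Composing,
$$
\|\pi_b-1\|_\infty \;\le\; \|e^{-sH_\omega}\|_{2,\infty}^{2}\, e^{-\Lambda(b-2s)} \;=\; C\,e^{-\Lambda b},
$$
with $C = \|e^{-sH_\omega}\|_{2,\infty}^{2}\,e^{2\Lambda s}$, which is finite by intrinsic ultracontractivity. Fixing, say, $s=1$ gives the claim for all $b\ge 2$.

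The only genuinely delicate point is conceptual rather than computational: ensuring that the $L^1\to L^\infty$ operator-norm bound really translates into a pointwise (essential) sup-bound on the kernel, and that the ground-state normalization has been made so that $P$ is the correct projection to factor out. Once these bookkeeping items are in place, the argument is a two-line operator estimate; no hard analysis is required beyond what the hypotheses on $V$ already provide.
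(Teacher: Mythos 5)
Your proof is correct, and its skeleton is the same as the paper's: normalize $\inf\Spec H=0$, sandwich the long-time evolution between two short-time factors controlled by intrinsic ultracontractivity, and let the spectral gap act on the middle factor. Where you differ is in how the middle factor is estimated. The paper works entirely at the kernel level: it writes $\pi_b(x,y)-1$ as $\int\!\!\int \pi_1(x,\xi)\Psi^2(\xi)(\pi_{b-2}(\xi,\eta)-1)\Psi^2(\eta)\pi_1(\eta,y)\,d\xi\,d\eta$, bounds the outer kernels by $C_1=\|\pi_1\|_{L^\infty(\R^{2d})}$, and then applies Cauchy--Schwarz against $\Psi\otimes\Psi$ to reduce the middle term to the Hilbert--Schmidt norm of $e^{-(b-2)H}-P_\Psi$, evaluated as $e^{-(b-2)(E_1-E)}\bigl(\sum_{k\ge 2}e^{-2(b-2)(E_k-E_1)}\bigr)^{1/2}$; this step uses that IU forces $e^{-tH}$ to be Hilbert--Schmidt, hence the spectrum discrete and the eigenvalue sum finite. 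You instead bound $\|e^{-(b-2s)H_\omega}(I-P)\|_{L^2\to L^2}\le e^{-\Lambda(b-2s)}$ directly from the spectral theorem and compose $L^1\to L^2\to L^2\to L^\infty$ operator norms, which buys a slightly cleaner argument that needs only the gap (no discreteness or summability of the eigenvalue tail) at the mild cost of the kernel-versus-operator-norm identification you already flagged; the passage from essential sup to pointwise sup is harmless here because the paper has asserted joint continuity of $\hat\pi_t(x,y)$.
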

\begin{proof}
By assumption the potential is chosen so that $H$ is intrinsically 
ultracontractive, i.e., for each $b > 0$, $C_b = \norm[L^{\infty}(\R^{2d})]{\pi_b} 
< \infty$. 
 By the semigroup property of $\pi_b$ and the fact that $\int \pi_b(x,y) d\omega(y) = 
 1$ for each $x$, for $b > 2$ we have
 \begin{equation}
 \label{iucestimate}
 \begin{split}
 |\pi_b(x,y) - 1| 
 & =  
 \Big| \int d\xi \int d\eta \,  \pi_1(x,\xi)\Psi^2(\xi) (\pi_{b-2}(\xi,\eta) - 1) 
 \Psi^2(\eta) \pi_1(\eta,y) \Big| \\
 & \leq  
 C_1^2 \int d\xi \int d\eta \, \Psi(\xi) | \hat\pi_{b-2}(\xi,\eta) - 
 \Psi(\xi)\Psi(\eta)| \Psi(\eta)  \\
 & \leq  
 C_1^2 \left( \int d\xi \int d\eta (\hat\pi_{b-2}(\xi,\eta) - \Psi(\xi) \Psi(\eta))^2 
 \right)^{1/2}  \\
 & = 
 C_{1}^2 e^{-(b-2)(E_1-E)} \left(\sum_{k\geq 2} e^{-2(b-2)(E_k- E_1)} \right)^{1/2} 
\leq  
 Ce^{-\Lambda b},
  \end{split}
  \end{equation}
where $E = \inf \Spec H$.
The last but first step comes about as follows. $\norm[L^{\infty}(\R^{2d})]{\pi_b} < \infty$ 
implies that $\hat\pi_b \in L^2(\R^{2d},dx)$ for each $b > 0$. Thus $e^{-bH}$ is a 
Hilbert-Schmidt operator for each $b>0$, in particular $H$ has a purely discrete spectrum 
with eigenvalues $E < E_{1} \leq E_{2} \leq \ldots$. With $P_\Psi$, the projection onto 
the subspace of $L^2(\R^d,dx)$ spanned by $\Psi$, the last equality gives the Hilbert-Schmidt 
norm of $e^{-(b-2)H} - P_\Psi$.
\end{proof}

\medskip
\begin{lemma}
Assume that $\beta > 3$. Then there exists $\delta > 1$ and
constants $\defconst{ggg}, \defconst{ggg2} < \infty$ such that,  
for $\lambda$ small enough
\begin{equation}
\begin{split}
\left(\int \expect_{\nuxT}\left[ |e^{\lambda
W_{\tau_i,\tau_j}}-1|^{n_{ij}}\right] d\omega^{\otimes N}
(\mathbf{x}) \right)^{1/n_{ij}} 
& \le
\const{ggg}  \lambda  e^{C b}  (1+b|i-j-1|)^{\delta}
\end{split}
\end{equation}
for all $i,j$ and $b \ge 1$.
\label{holderw}
\end{lemma}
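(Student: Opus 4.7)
The plan is to use the pointwise inequality $|e^{\lambda W_{\tau_i,\tau_j}}-1|\le |\lambda W_{\tau_i,\tau_j}|\,e^{|\lambda W_{\tau_i,\tau_j}|}$ together with Cauchy--Schwarz to split the estimate into a polynomial-moment bound on $W_{\tau_i,\tau_j}$ (producing the decay factor and the single power of $\lambda$) and an exponential-moment bound (absorbed into the $e^{Cb}$ prefactor). Up front, absolute continuity of $\nuxT$ with respect to the product of Brownian bridges $\prod_k \WW^{x_k,x_{k+1}}_{\tau_k}$ with density $\prod_k e^{-V_{\tau_k}(X)}/Z$, combined with a H\"older step, transfers the expectation to Brownian bridge at the cost of a constant bounded by $e^{Cb}$ through Khasminskii's lemma for the Kato-class potential $V$; by Markovianity the bridges on distinct intervals are independent.

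For the polynomial moment in the case $|i-j|\ge 2$, rewrite $\J_{ij}$ as the iterated It\^o integral
\[
\J_{ij} \;=\; \int_{\tau_j} G_{ij}(t)\,dX_t , \qquad G_{ij}(t)\;=\;\int_{\tau_i} W(X_t-X_s,\,t-s)\,dX_s ,
\]
whose inner integrand is adapted in $s$ since $X_t$ is frozen during the $s$-integration. The assumption $\|W\|_{\D_\beta}\le M_{I,\beta}$ yields the pointwise bound $|W(x,t-s)|\le M_{I,\beta}(1+|t-s|)^{-\beta}$, and $|t-s|\ge b|i-j-1|$ on $\tau_i\times\tau_j$. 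Applying Burkholder--Davis--Gundy first to the inner and then to the outer integral gives
\[
\bigl(\expect\,|\J_{ij}|^{p}\bigr)^{1/p} \;\le\; C\sqrt{p}\,b\,M_{I,\beta}\,(1+b|i-j-1|)^{-\beta} .
\]
For the near-diagonal terms $|i-j|\le 1$ (including the self-interaction parts $\J_{ii}$ that appear in $W_{\tau_i,\tau_j}$ when $i,j$ are adjacent), adaptedness is lost and one invokes instead the pathwise rough-path bound of Lemma \ref{lemma:rough}, controlling moments through the $L^p$-integrability of $\mathcal{N}_{\alpha,3}(X)$ established in Lemma \ref{lemma:W-lift}.

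For the exponential moment, $\J_{ij}$ is essentially a second-Wiener-chaos variable in the Brownian increments on $\tau_i\cup\tau_j$, so the hypercontractive bound $(\expect|\J_{ij}|^p)^{1/p}\le C p\,\|\J_{ij}\|_{L^2}$ holds, and summing the Taylor series $\sum (c\lambda)^k \expect|\J_{ij}|^k/k!$ shows that $\expect[\exp(c\lambda|\J_{ij}|)]$ is uniformly bounded as soon as $c\lambda\, b\, M_{I,\beta}$ is small, which is ensured by the smallness of $\lambda^*$. Combining the two estimates via
\[
\bigl(\expect|e^{\lambda W_{\tau_i,\tau_j}}-1|^{n_{ij}}\bigr)^{1/n_{ij}} \le \bigl(\expect|\lambda W_{\tau_i,\tau_j}|^{2n_{ij}}\bigr)^{1/(2n_{ij})}\bigl(\expect\, e^{2n_{ij}|\lambda W_{\tau_i,\tau_j}|}\bigr)^{1/(2n_{ij})}
\]
and picking any $\delta\in(1,\beta)$ then produces the stated inequality.

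The main obstacle is the uniform exponential-moment bound over the full range of $n_{ij}$: elements of the second Wiener chaos have only sub-exponential tails, so the hypercontractive argument has to be carried out carefully, and for the near-diagonal contributions---where rough-path theory replaces the chaos decomposition---one must substitute the polynomial-moment control of $\mathcal{N}_{\alpha,3}(X)$, whose higher moments are finite by the Besov-type bounds of Lemma \ref{lemma:besov}. The $e^{Cb}$ factor in the conclusion absorbs both the Khasminskii prefactor from the change of measure and the residual $b$-dependence of the exponential bound, while the decay $(1+b|i-j-1|)^{-\delta}$ and the linear dependence on $\lambda$ come entirely from the polynomial moment estimate.
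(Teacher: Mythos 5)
Your overall architecture coincides with the paper's: the pointwise bound $|e^{\lambda W}-1|\le|\lambda W|e^{|\lambda W|}$, a H\"older split into a polynomial moment (which carries the factor $\lambda$ and the decay in $|i-j|$) and an exponential moment, a change of measure to (products of) Wiener measures costing $e^{Cb}$ via Khasminskii for the Kato-class $V$, and a double application of Burkholder--Davis--Gundy to the iterated integral $\J_{ij}=\int_{\tau_i}dX_t\int_{\tau_j}dX_s\,W(X_t-X_s,t-s)$. Two of your substitutions are legitimate variants: for the exponential moment the paper uses the Girsanov/Cauchy--Schwarz bound of Lemma \ref{lemma:exp-bound} together with a Gaussian linearization (yielding the explicit $(1-n^2a^2)^{-\const{b}}$ of Lemma \ref{lemma:final-exp-bound}), whereas you invoke second-chaos hypercontractivity plus a Taylor expansion; for the adjacent intervals the paper relies on boundedness of $W$ and "similar arguments", whereas you appeal to the rough-path bound of Lemma \ref{lemma:rough} --- arguably a cleaner justification for the non-adapted self-interaction terms. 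Note also that the paper first integrates out $x_{i+1},x_{j+1}$ (using Lemma \ref{holderpfi}) to replace the bridges by two genuinely independent Wiener measures started at $x_i,x_j$, which is what makes the iterated It\^o integral honestly adapted.

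There is, however, a genuine soft spot at exactly the place you flag as "the main obstacle". The condition you state for the uniform exponential moment --- smallness of $c\lambda bM_{I,\beta}$ --- is not the right one: since $W_{\tau_i,\tau_j}$ lives (essentially) in the second chaos, $\expect[e^{2n_{ij}|\lambda W_{\tau_i,\tau_j}|}]$ is finite only when $n_{ij}\lambda\|\J_{ij}\|_{L^2}\lesssim 1$, and $n_{ij}=A|i-j|^{\Delta}$ grows with $|i-j|$. The bound is saved only because $\|\J_{ij}\|_{L^2}\lesssim \lambda b(1+b|i-j-1|)^{-\beta}$ decays faster than $n_{ij}$ grows, i.e.\ one must choose $1<\Delta<\beta$; this is precisely the paper's condition $an_{ij}\le 1/2$. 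Relatedly, your "picking any $\delta\in(1,\beta)$" is too optimistic: the $n$-dependence of the moment constants ($c_n^{1/n}\le(2n)^2$ in BDG, or the factor $p$ from hypercontractivity) converts into a loss of $2\Delta$ powers of $(1+b|i-j-1|)$, so the achievable exponent is $\delta=\beta-2\Delta$. This is where the hypothesis $\beta>3$ enters --- it is what allows $\Delta>1$ and $\delta=\beta-2\Delta>1$ simultaneously --- and your writeup never accounts for it. With these two points made explicit the argument closes, but as written the exponent bookkeeping does not yet yield the stated $\delta>1$.
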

\begin{proof}
Note that the conditional expectation $\expect_{\nuxT}
[|e^{\lambda W_{\tau_i,\tau_j}}-1|^{n_{ij}}]$ depends only on 
$x_i,x_{i+1},x_{j},x_{j+1}$. Thus we can write with a slight 
abuse of notation,
$$
\int \expect_{\nuxT} 
\left[|e^{\lambda W_{\tau_i,\tau_j}}-1|^{n_{ij}}\right] 
d\omega^{\otimes N}(\mathbf{x}) =
\int \expect_{\nuxT} 
\left[|e^{\lambda W_{\tau_i,\tau_j}}-1|^{n_{ij}}\right] 
d\omega^{\otimes 4}(x_i, x_{i+1}, x_j, x_{j+1})
$$
when $|i-j|>1$, otherwise one integral must be ignored in this 
expression. Then by (\ref{eq:bridge-2}) the problem reduces to 
estimates on the multiple Brownian bridge $\widehat\WxT$:
\begin{equation*}
  \begin{split}
& A = \int \expect_{\nuxT} 
      \left[|e^{\lambda W_{\tau_i,\tau_j}}-1|^{n_{ij}}\right] 
       d\omega^{\otimes 4}(x_i,x_{i+1},x_j,x_{j+1}) \\ 
& \le
\int \frac{\expect_{\widehat\WxT} 
      \left[|e^{\lambda W_{\tau_i,\tau_j}}-1|^{n_{ij}} 
       e^{-V_{\tau_i}-V_{\tau_j}}\right]}
       {Z_b \Psi(x_i) \Psi(x_{i+1}) \pi_{b}(x_i,x_{i+1})}
     \frac{\Pi_{b}(x_i,x_{i+1}) \Pi_{b}(x_j,x_{j+1})}
       {Z_b \Psi(x_j) \Psi(x_{j+1}) \pi_{b}(x_j,x_{j+1})} 
       d\omega^{\otimes 4}(x_i,x_{i+1},x_j,x_{j+1}) \\ 
&  \le 
C Z_b^{-2} \int \expect_{\widehat\WxT} 
      \left[|e^{\lambda W_{\tau_i,\tau_j}}-1|^{n_{ij}} 
       e^{-V_{\tau_i}-V_{\tau_j}}\right]
       \Pi_{b}(x_i,x_{i+1}) \Pi_{b}(x_j,x_{j+1}) \times \\ 
& \qquad \times 
\Psi(x_i) \Psi(x_{i+1}) \Psi(x_j) \Psi(x_{j+1})
dx_i dx_{i+1} dx_{j} dx_{j+1},
  \end{split}
\end{equation*}
where we used Lemma \ref{holderpfi} and chose $b$ large enough 
so that
$
\sup_{x,y} |\pi_{b}(x,y)-1|\le C e^{-\Lambda b} \le 1/2
$.
For $i+1<j$ notice that $\Psi(x_{i+1})\Psi(x_{j+1}) \le C$. Then 
by integrating with respect to $x_{i+1},x_{j+1}$ we remove the 
conditional expectation and obtain
\begin{equation}
\label{eq:W-est-1}
  \begin{split}
A
& \le C 
\int \widehat\expect_{x_i,x_j} 
 \left[|e^{\lambda W_{\tau_i,\tau_j}}-1|^{n_{ij}} 
   e^{-V_{\tau_i}-V_{\tau_j}}\right]
   \Psi(x_i) dx_i \Psi(x_j) dx_j, 
 \end{split}
\end{equation}
where now $\widehat\expect_{x_i,x_j}$ denotes expectation over the two
pieces $\XX_{\tau_i}$ and $\XX_{\tau_j}$ weighted by $\W^{x_i}_{\tau_i} 
\otimes \W^{x_j}_{\tau_j}$, i.e. two independent Wiener measures 
starting at $x_i$ and $x_j$, respectively. 

Next we estimate the expectation in (\ref{eq:W-est-1}),
\begin{equation*}
  \begin{split}
&  \widehat\expect_{x_i,x_j} \left[|e^{\lambda
      W_{\tau_i,\tau_j}}-1|^{n} e^{-V_{\tau_i}-V_{\tau_j}}\right]
 \le      \widehat\expect_{x_i,x_j} \left[|\lambda
      W_{\tau_i,\tau_j}|^n e^{n |\lambda
      W_{\tau_i,\tau_j}|} e^{-V_{\tau_i}-V_{\tau_j}} \right] \\ 
& \qquad
  \le    |\lambda|^n  \left(\widehat\expect_{x_i,x_j} \left[|
      W_{\tau_i,\tau_j}|^{2n}\right]\right)^{1/2} \left(\widehat\expect_{x_i,x_j} \left[ e^{4 n |\lambda
      W_{\tau_i,\tau_j}|}\right]\right)^{1/4}  \left(\widehat\expect_{x_i,x_j} \left[e^{-4 V_{\tau_i}-4 V_{\tau_j}} \right]\right)^{1/4}.
  \end{split}
\end{equation*}
Since $V$ is of Kato-class, we have the uniform bound
$$
 \left(\widehat\expect_{x_i,x_j} \left[e^{-4 V_{\tau_i}-4 V_{\tau_j}}
 \right]\right)^{1/4} \le \sup_x  \left(\expect_{x} \left[e^{-4
 V_{[0,b]}} \right]\right)^{1/2} \le C e^{C b},
$$
with some $C > 0$. Furthermore, write
$
 W_{\tau_i,\tau_j} = \int_{\tau_i} dX_t \int_{\tau_j} dX_s g_{ts}
$
(see (\ref{eq:J-rewriting}) below) and estimate the double integral below
by using the Burkholder-Davis-Gundy inequality,
\begin{equation*}
  \begin{split}
\widehat\expect_{x_i,x_j} \left[|
      W_{\tau_i,\tau_j}|^{2n}\right] & \le c_{n} \widehat\expect_{x_i,x_j} \left[\left|
      \int_{\tau_i} dt\left | \int_{\tau_j} dX_s g_{ts}\right|^2
      \right|^{n}\right]  
\\  & \le c_{n} |\tau_i|^{n-1} \int_{\tau_i} dt \widehat\expect_{x_i,x_j} \left[
     \left | \int_{\tau_j} dX_s g_{ts}\right|^{2n} \right]      
\\  & \le c_{n}^2 |\tau_i|^{n-1} |\tau_j|^{n-1} \int_{\tau_i} dt  \int_{\tau_j} ds \widehat\expect_{x_i,x_j} \left[
     |g_{ts}|^{2n} \right]      
\\  & \le c_{n}^2 b^{2n} C^n (1+b|i-j-1|)^{-2 n \beta}.
  \end{split}
\end{equation*}
Now we estimate also the exponential of the energy. By $e^{|x|} \le 2
\cosh x$ we get
 \begin{equation*}
  \begin{split}
 \widehat\expect_{x_i,x_j} \left[ e^{4 n |\lambda
      W_{\tau_i,\tau_j}|} \right] \le  \widehat\expect_{x_i,x_j} \left[ e^{4 n \lambda
      W_{\tau_i,\tau_j}}  \right]    +    \widehat\expect_{x_i,x_j}
      \left[ e^{- 4 n \lambda
      W_{\tau_i,\tau_j}}  \right].   
  \end{split}
\end{equation*}
Each of the expectations in the right hand side can be similarly 
estimated by using Lemma~\ref{lemma:final-exp-bound} below:
\begin{equation*}
  \begin{split}
\widehat\expect_{x_i,x_j} \left[ e^{4 n |\lambda
      W_{\tau_i,\tau_j}|} \right] \le  (1-n^2 a^2)^{-\const{b}} \le C
  \end{split}
\end{equation*}
with 
$
a = \const{a}
    \lambda b   (1+ b |i-j-1|)^{-\beta} 
$ and $na \le 1/2$.
Hence, by making use these inequalities (uniform in $x_i,x_j$), 
and the fact that $\Psi \in L_1(\R^d,dx)$, we arrive at
\begin{equation}
 \label{eq:ewbound}
\begin{split}
& 
\left(\int \widehat\expect_{x_i,x_j}  
\left[ |e^{\lambda W_{\tau_i,\tau_j}}-1|^{n_{ij}}
 e^{-V_{\tau_i}-V_{\tau_j}}\right]  
\Psi(x_i) dx_i  \Psi(x_j) dx_j \right)^{1/n_{ij}} \\
&
\hspace{1cm} \le 
C c_{n_{ij}}^{1/n_{ij}}\lambda be^{C b}(1+b|i-j-1|)^{-\beta}. 
\end{split}
\end{equation}
By using the estimate $c_k \le (2k)^{2k}$ for the universal 
constant in the Burkholder-Davis-Gundy inequality, we furthermore
obtain
$$
 c_{n_{ij}}^{1/n_{ij}}  (1+b|i-j-1|)^{-\beta} \le (2n_{ij})^2
 (1+b|i-j-1|)^{-\beta} .
$$
Recall that $n_{ij} = A |i-j|^\Delta$ with $\Delta >1$ and choose 
$\Delta <\beta$ so that $an_{ij} \to 0 $ as $|i-j|\to\infty$ and 
condition $an_{ij}\le 1/2$ is satisfied uniformly in $i,j$ for 
$\lambda$ small enough. Thus there is a constant $\defconst{fff} > 
0$ such that
\begin{equation}
\begin{split}
\text{r.h.s. (\ref{eq:ewbound})}
\, \le \,
\const{fff}  \lambda b^{2-2\Delta} e^{C b}(1+b|i-j-1|)^{2\Delta-\beta}
\, \le \, 
\const{fff}  \lambda  e^{C b}(1+b|i-j-1|)^{-\delta}, 
\end{split}
\end{equation}
for all $|i-j| > 1$ and $b \ge 1$, where $\delta = \beta-2\Delta>1$
(this is possible by choosing $\Delta>1$ but small enough). For the 
cases $|i-j|=1$ we can follow a similar strategy to estimate
$$
\left(\expect_{\nuxT} \left[|e^{\lambda W_{\tau_i,\tau_j}}-1|^{2n_{ij}} 
e^{-V_{\tau_i}-V_{\tau_j}}\right]\right)^{1/2n_{ij}} 
\le 
\defconst{tt2} e^{\defconst{tt1} b},  
$$ 
where the constants do not depend on either $b$ or $n_{ij}$.
\end{proof}

\subsection{Energy estimates}
Here we estimate $\widehat \expect_{x,y} [e^{\lambda W_{\tau_i,\tau_j}}
]$.
Below we will prove two lemmas that give the basic estimates by
making use of the following result to control exponential
integrability of stochastic integrals. The first lemma will be 
often used for controlling exponential moments of stochastic integrals.
\begin{lemma}
\label{lemma:exp-bound}
If $X$ is Brownian motion and $f$ is an $\F$-adapted process, we 
have the bound
\begin{equation}
  \label{eq:exp-bound}
\expect [e^{\int_0^b f_s dX_s}] \le 
\left\{\expect [e^{2 \int_0^b |f_s|^2 ds} ] \right\}^{1/2}.
\end{equation}
\end{lemma}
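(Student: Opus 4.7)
The plan is to leverage the Doléans--Dade exponential martingale associated with the integrand $f$, together with a carefully chosen Cauchy--Schwarz splitting that reproduces a second exponential martingale in the error term.

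First I would introduce the candidate martingale $M_t = \exp\bigl(\int_0^t f_s dX_s - \tfrac{1}{2} \int_0^t |f_s|^2 ds\bigr)$, which is a non-negative local martingale and hence (by Fatou) a supermartingale with $\expect[M_t] \le 1$. The algebraic identity $e^{\int_0^b f_s dX_s} = M_b \cdot e^{\frac{1}{2}\int_0^b |f_s|^2 ds}$ then recasts the left-hand side into a product whose two factors we want to decouple.

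Next I would apply Cauchy--Schwarz in the non-obvious way
$$
\expect\Bigl[ M_b \cdot e^{\frac{1}{2}\int_0^b |f_s|^2 ds}\Bigr]
\le \Bigl(\expect\bigl[ M_b^2 \cdot e^{-\int_0^b |f_s|^2 ds} \bigr]\Bigr)^{1/2}
\Bigl(\expect\bigl[ e^{2 \int_0^b |f_s|^2 ds}\bigr]\Bigr)^{1/2}.
$$
The key observation is that the quantity inside the first expectation telescopes:
$$
M_b^2 \cdot e^{-\int_0^b |f_s|^2 ds} = \exp\Bigl(2 \int_0^b f_s dX_s - 2 \int_0^b |f_s|^2 ds\Bigr),
$$
which is exactly the Doléans--Dade exponential with integrand $2f$. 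As a non-negative local martingale it is again a supermartingale, so its expectation is $\le 1$. Substituting this bound yields the claimed inequality.

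The only delicate point is the status of the second exponential as a genuine supermartingale: this requires that $\int_0^b |f_s|^2 ds < \infty$ a.s. (so that the stochastic integral is well-defined) and adaptedness of $f$, both of which are hypotheses in the statement. No Novikov-type condition is needed because we use only the free bound $\expect[\cdot] \le 1$ for non-negative local martingales, not the martingale property itself; this is what makes the proof clean and the constant on the right-hand side universal.
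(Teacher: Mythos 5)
Your proof is correct and is essentially the paper's argument: the same Cauchy--Schwarz splitting that isolates the Dol\'eans--Dade exponential with integrand $2f$ (note $2\int_0^b f_s\,dX_s - 2\int_0^b|f_s|^2\,ds = \int_0^b 2f_s\,dX_s - \tfrac12\int_0^b|2f_s|^2\,ds$), bounded by $1$ via the supermartingale property. Your remark that only the supermartingale bound, and no Novikov condition, is needed is a slightly more careful justification than the paper's appeal to ``Girsanov,'' but the computation is identical.
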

\begin{proof}
The proof is a combination of Cauchy-Schwartz and Girsanov Theorems:
  \begin{equation*}
    \begin{split}
 \expect [e^{\int_0^b f_s dX_s}] &= 
 \expect [e^{\int_0^b f_s dX_s -  \int_0^b |f_s|^2 ds +
  \int_0^b |f_s|^2 ds} ]
\\ & \le  \left\{\expect [e^{2 \int_0^b f_s dX_s - \frac{1}{2} \int_0^b
 |2 f_s|^2 ds}]\right\}^{1/2}\left\{\expect [e^{
 2 \int_0^b |f_s|^2 ds} ]    \right\}^{1/2}
\\ & \le \left\{\expect [e^{
 2 \int_0^b |f_s|^2 ds} ]    \right\}^{1/2}.
    \end{split}
  \end{equation*}
\end{proof}

\medskip
\noindent
\emph{\textbf Estimates for separated intervals} \quad
We turn to estimating exponentials of energy contributions in 
(\ref{sum}) by starting with pairs of intervals that are not 
adjacent. Let thus $i>j+1$; in this case the exponent has the 
form $W_{\tau_i,\tau_j} = \J_{ij} + \J_{ji}$ with
\begin{equation}
  \label{eq:J-rewriting}
\J_{i,j} =  \langle C^X_{\tau_i},C^X_{\tau_j}\rangle_W = 
\int_{t_i}^{t_i+b} dX_t \int_{t_j}^{t_j+b} d X_s\, W(X_t-X_s,t-s).  
\end{equation}
Note that  under the
measure $\W^{x_i}_{\tau_i} \otimes \W^{x_j}_{\tau_j}$
the two currents  $\XX_{\tau_i}$ and $\XX_{\tau_j}$ are independent 
and the interaction energy can be written as a  double stochastic
It\^o integral and estimated by using tools borrowed from stochastic 
analysis.
\begin{lemma}
\label{lemma:final-exp-bound}
Let $i>j+1$. There exist positive constants $\defconst{a},\defconst{b}$ 
such that whenever
\begin{equation}
  \label{eq:cond-bound-1}
a := \lambda  \const{a}  b (1+ |t_i-t_j-b|)^{-\beta} < 1,  
\end{equation}
we have
\begin{equation*}
  \begin{split}
\widehat \expect_{x,y} [e^{\lambda W_{\tau_i,\tau_j}}] &
\le \left[1-a^2\right]^{-\const{b}}.
  \end{split}
\end{equation*}  
\end{lemma}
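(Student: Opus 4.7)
The plan is to reduce the exponential of the double stochastic integral
to a Gaussian-type bound by iterating Lemma~\ref{lemma:exp-bound} twice
and exploiting the independence of $X_{\tau_i}$ and $X_{\tau_j}$ under
$\W^{x}_{\tau_i}\otimes\W^{y}_{\tau_j}$. Since $W$ is even in both
arguments, the two pieces of $W_{\tau_i,\tau_j}=\J_{ij}+\J_{ji}$ play
symmetric roles, so I would begin with Cauchy--Schwarz to reduce the
problem to estimating $\widehat\expect_{x,y}[e^{2\lambda\J_{ij}}]$ (and
the identical bound for $\J_{ji}$).

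Writing $\J_{ij}=\int_{\tau_i}h_t\,dX_t$ with
$h_t:=\int_{\tau_j}W(X_t-X_s,t-s)\,dX_s$ and conditioning on
$X_{\tau_j}$ turns $h_t$ into a function of $X_t$ alone, hence adapted
to the forward filtration of $X_{\tau_i}$; a conditional application
of Lemma~\ref{lemma:exp-bound} then gives
\[
\widehat\expect^{X_{\tau_i}}\bigl[e^{2\lambda\int_{\tau_i}h_t\,dX_t}\bigr]
\le\widehat\expect^{X_{\tau_i}}\bigl[e^{8\lambda^2\int_{\tau_i}h_t^2\,dt}\bigr]^{1/2}.
\]
Taking $\widehat\expect^{X_{\tau_j}}$ with one more Cauchy--Schwarz and
then Jensen's inequality for the convex function $e^{c\,\cdot}$ with the
uniform probability measure $dt/b$ on $\tau_i$ brings the task down to
a pointwise-in-$t$ bound on $\widehat\expect_{x,y}[e^{8\lambda^2 b\,h_t^2}]$.

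For this last bound I would freeze the value of $X_t$ and regard
$u\mapsto\int_{t_j}^u W(X_t-X_s,t-s)\,dX_s$, $u\in\tau_j$, as a
continuous $\W^{y}_{\tau_j}$-martingale whose quadratic variation is
deterministically bounded by $M'^2 b$, where
$M':=\sup\{|W(y,t-s)|:y\in\R^d,\,t\in\tau_i,\,s\in\tau_j\}$. The
separation $i>j+1$ together with the $\D_\beta$-regularity of $W$ then
yields $M'\le C M_{I,\beta}(1+|t_i-t_j-b|)^{-\beta}$. The
exponential-martingale bound makes $h_t$ sub-Gaussian with parameter
$M'^2 b$, and a standard Gaussian integration against this sub-Gaussian
estimate supplies
$\widehat\expect[e^{c h_t^2}\mid X_t]\le(1-2cM'^2 b)^{-1/2}$ whenever
$2cM'^2 b<1$. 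Substituting $c=8\lambda^2 b$ identifies
$a^2=16\lambda^2 b^2 M'^2$ with the quantity
$a=\const{a}\lambda b(1+|t_i-t_j-b|)^{-\beta}$ appearing in the
statement, and tracking the two square-root losses from Cauchy--Schwarz
fixes $\const{b}=1/4$. The main obstacle is precisely the last passage,
from the sub-Gaussianity of $h_t$ to a quantitative bound on
$\expect[e^{c h_t^2}]$: this requires the quadratic variation of $h_t$
to be bounded \emph{almost surely}, which is exactly where the
$L^\infty$-decay of $W$ in the time-lag variable built into the
$\D_\beta$-norm, and not merely an $L^2$-type control, is
indispensable.
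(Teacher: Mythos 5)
Your proof is correct and follows essentially the same route as the paper: Cauchy--Schwarz to separate $\J_{ij}$ and $\J_{ji}$, Lemma~\ref{lemma:exp-bound} for the outer $dX_t$ integral (legitimate because $h_t$ is adapted to the filtration of $X_{\tau_i}$ enlarged by the independent $\sigma(X_{\tau_j})$), Jensen with $dt/b$, and then a Gaussian bound on $\expect[e^{ch_t^2}]$ using the uniform decay $|W(\cdot,t-s)|\le C(1+|t_i-t_j-b|)^{-\beta}$ coming from $i>j+1$. The only cosmetic difference is that where you invoke sub-Gaussianity of the bounded-quadratic-variation martingale $h_t$, the paper linearizes the square via $e^{Z^2/2}=\expect_G[e^{GZ}]$ and applies Lemma~\ref{lemma:exp-bound} a second time before performing the same explicit Gaussian integration.
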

\begin{proof}
We have
\begin{equation*}
  \begin{split}
\left(\widehat \expect_{x,y} [e^{\lambda W_{\tau_i,\tau_j}}]\right)^4 &
\le
\left(\widehat \expect_{x,y} [e^{2\lambda \J_{ij}}]\right)^2 
\left(\widehat \expect_{x,y} [e^{2\lambda \J_{ji}}]\right)^2. 
  \end{split}
\end{equation*}
By using (\ref{eq:exp-bound}) we obtain
\begin{equation*}
  \begin{split}
\left(\widehat \expect_{x,y} [e^{\lambda \J_{ij}}]\right)^2 &\le
\widehat \expect_{x,y} \left[\exp\left(2 \lambda^2 \int_{t_i}^{t_i+b} dt
  \left|\int_{t_j}^{t_j+b}  d X_s W(X_t-X_s,t-s)
  \right|^2\right)\right]
\\ &\le \int_{t_i}^{t_i+b} \frac{dt}{b}
\widehat \expect_{x,y} \left [\exp\left(2 b \lambda^2 
  \left|\int_{t_j}^{t_j+b}  d X_s W(X_t-X_s,t-s) \right|^2\right)\right ]
\\ &\le \int_{t_i}^{t_i+b} \frac{dt}{b}
\widehat \expect_{x,y} \expect_G \left[\exp\left(2 \lambda
  G \sqrt{b}\int_{t_j}^{t_j+b}  d X_s W(X_t-X_s,t-s)\right)\right],
  \end{split}
\end{equation*}
with $G$ a normal random variable defined on a new probability space
and $\expect_G$ the related Gaussian expectation. Then, using again 
(\ref{eq:exp-bound}) yields
\begin{equation*}
  \begin{split}
\left(\widehat \expect_{x,y} [e^{\lambda W_{\tau_i,\tau_j}}]\right)^2 &
\le \int_{t_i}^{t_i+b} \frac{dt}{b} 
\left(\widehat \expect_{x,y} \expect_G \left[\exp\left(8 \lambda^2 G^2 
b \int_{t_j}^{t_j+b} ds |W(X_t-X_s,t-s)|^2 \right)\right]\right)^{1/2}.
  \end{split}
\end{equation*}
The assumptions on $W$ give
\begin{equation*}
  \begin{split}
  \int_{t_j}^{t_j+b} ds |W(Y_t-Z_s,t-s)|^2 & \le 
  \int_{t_j}^{t_j+b} ds  \frac{C}{(1+|t-s|)^{2\beta}}
 \le {\const{c}} b (1+|t_i-t_j-b|)^{-2\beta}
  \end{split}
\end{equation*}
for some constant $\defconst{c}$,
where we used that $t \in [t_i,t_i+b]$. Hence
\begin{equation*}
  \begin{split}
\left(\widehat \expect_{x,y} [e^{2\lambda \J_{ij}}]\right)^2 &
\le  \left(
\expect_G \left[\exp\left[32 \lambda^2 \const{c} b^2 
(1+|t_i-t_j-b|)^{-2\beta} G^2  \right]\right]\right)^{1/2}.
  \end{split}
\end{equation*}
The Gaussian integration can be performed explicitly, yielding
\begin{equation*}
  \begin{split}
\widehat \expect_{x,y} [e^{2\lambda \J_{ij}}] &
\le   
 \left(1-32 \lambda^2 \const{c} b^2 (1+|t_i-t_j-b|)^{-2\beta}
    \right)^{-d/8},
  \end{split}
\end{equation*}
 as soon as
$
1-32 \lambda^2 \const{c} b^2(1+ |t_i-t_j-b|)^{-2\beta} > 0.
$
Thus the claim follows.
\end{proof}

\medskip
\noindent
\emph{\textbf Estimates for adjacent intervals} \quad
The interaction energy estimates between adjacent intervals are 
given by

\begin{lemma}
For all $i = 0,...,N-2$ we have
$
\widehat \expect_{x,y} [e^{\lambda W_{\tau_i,\tau_{i+1}}}] \le
\defconst{diag} e^{\defconst{diag2} b } < \infty
$
for sufficiently small $\lambda$.
\end{lemma}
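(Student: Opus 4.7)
The plan is to adapt the argument of Lemma \ref{lemma:final-exp-bound} to the case $j=i+1$. Two new features appear: the temporal decay factor $(1+|t_i-t_j-b|)^{-\beta}$ used there is now $O(1)$, and $W_{\tau_i,\tau_{i+1}}=\tfrac{1}{2}\J_{ii}+\tfrac{1}{2}\J_{i+1,i+1}+2\J_{i,i+1}$ now contains the two self-interaction pieces $\J_{ii}$ and $\J_{i+1,i+1}$. As in the previous proof, a H\"older inequality against $e^{-V_{\tau_i}-V_{\tau_{i+1}}}$ reduces the bridge expectation to two independent Wiener measures (with the shared endpoint $x_{i+1}$ integrated against $\Psi\in L^1$), at the cost of a factor $e^{Cb}$ coming from the Kato-class bound on $V$ and from Lemma \ref{holderpfi}. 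A Cauchy--Schwarz splitting then leaves us to bound each of
\begin{equation*}
\widehat\expect_{x,y}\bigl[e^{4\lambda\J_{i,i+1}}\bigr], \quad
\widehat\expect_{x,y}\bigl[e^{2\lambda\J_{ii}}\bigr], \quad
\widehat\expect_{x,y}\bigl[e^{2\lambda\J_{i+1,i+1}}\bigr]
\end{equation*}
by $e^{Cb}$.

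The cross term $\J_{i,i+1}$ can be handled by repeating the proof of Lemma \ref{lemma:final-exp-bound} verbatim, the only change being that without temporal decay one only gets $\int_{\tau_{i+1}}|W(X_t-X_s,t-s)|^2\,ds\le Cb$. The final Gaussian integration then produces a bound of the form $(1-C\lambda^2 b^2)^{-d/8}$, which is a constant uniformly in $\lambda$ under the choice $b=-\log|\lambda|/(\Lambda+C)$ made in Proposition \ref{pclustest}, since then $\lambda b\to 0$ as $\lambda\to 0$.

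The self-interaction terms are the main obstacle. For $\lambda\le 0$ they are immediate because $\J_{ii}=\langle C^X_{\tau_i},C^X_{\tau_i}\rangle_W\ge 0$ by positivity of $\wW$. For $\lambda>0$ the plan is to exploit this positivity through a Bochner-type representation: since $\wW\ge 0$, there is an auxiliary centered real Gaussian field $g(t,x)$ with $\expect_g[g(t,x)g(s,y)]=W(x-y,t-s)$, and condition (2) on $W$ in Section~\ref{sec:potentials} (with $\beta>3$) ensures that almost every realization of $g$ lies in $\D$, so that the rough integral $Z_X=\int_{\tau_i}g(t,X_t)dX_t$ is defined pathwise via Theorem \ref{th:rough-main}. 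Since conditional on $X$ the variable $Z_X$ is centered Gaussian with variance $\J_{ii}$, the elementary identity $\expect_g[e^{t Z_X}]=e^{t^2\J_{ii}/2}$ gives, by Fubini,
\begin{equation*}
\widehat\expect_{x,y}\bigl[e^{2\lambda\J_{ii}}\bigr]
=\expect_g\,\widehat\expect_{x,y}\bigl[e^{2\sqrt{\lambda}\int_{\tau_i}g(t,X_t)dX_t}\bigr].
\end{equation*}
A further application of Lemma \ref{lemma:exp-bound} (with $g$ frozen), concavity of $\sqrt{\cdot}$, and one more Fubini reduce matters to bounding, for fixed $X$, the Gaussian Laplace transform $\expect_g[e^{8\lambda\int_{\tau_i}g(t,X_t)^2\,dt}]=\det(I-16\lambda K_X)^{-1/2}$, where $K_X$ is the trace-class operator on $L^2(\tau_i)$ with kernel $W(X_t-X_s,t-s)$. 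Since $\|K_X\|_{\mathrm{op}}\le b\sup_{x,\tau}|W(x,\tau)|$ and $\mathrm{tr}(K_X)=bW(0,0)$ are both $O(b)$ uniformly in $X$, this determinant is controlled by $\exp(C\lambda b)\le\exp(Cb)$ once $16\lambda b\sup|W|<1$, which holds for $\lambda$ small enough. The main technical burden in carrying out this scheme lies in justifying that the Gaussian representation fits properly inside the rough-paths/currents framework used throughout, so that the Fubini exchanges and the operator-theoretic manipulations above are legitimate.
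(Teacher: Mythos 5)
Your proposal is correct in outline, and it is in fact far more substantive than the paper's own proof, which consists of a single sentence deferring to Lemma \ref{lemma:final-exp-bound} ``by using that $W(x,t)$ is bounded''. You have correctly isolated the genuinely new difficulty that the paper glosses over: for adjacent intervals the weight $W_{\tau_i,\tau_{i+1}}$ contains the diagonal pieces $\J_{ii}$, $\J_{i+1,i+1}$, which are anticipating double integrals over a \emph{single} interval and are therefore not reachable by the independence-based iteration of Lemma \ref{lemma:exp-bound} used for separated intervals; moreover, positivity of $\wW$ only disposes of one sign of $\lambda$, while the cluster estimates (Lemma \ref{holderw}) require both signs through $|e^{\lambda W}-1|^{n}$. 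Your Gaussian linearization — writing $\J_{ii}=\expect_g[Z_X^2]$ via a field with covariance $W$, applying Lemma \ref{lemma:exp-bound} with $g$ frozen, and reducing to $\det(I-16\lambda K_X)^{-1/2}\le \exp\bigl(C\lambda\,\mathrm{tr}\,K_X/(1-16\lambda\|K_X\|)\bigr)\le e^{C\lambda b}$ — is exactly the second-Wiener-chaos exponential-moment bound in disguise and closes this gap; the smallness condition $\lambda b\sup|W|<1$ is indeed compatible with the choice $b=-\log|\lambda|/(\Lambda+C)$. Two remarks: first, once $g$ is frozen it is a deterministic function of $(t,x)$, so $g(t,X_t)$ is adapted and the plain It\^o integral suffices — you do not need to route $Z_X$ through Theorem \ref{th:rough-main}, which removes most of the ``technical burden'' you worry about at the end (an a.s. continuous version of $g$ via Kolmogorov's criterion, available from the assumed regularity of $W$, is enough). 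Second, the identification $\expect_g[Z_X^2]=\J_{ii}$ is cleanest through the Fourier representation $\J_{ii}=\int\wW(k,\varpi)\,|C^X_{\tau_i}(\psi_{k,\varpi})|^2\,dk\,d\varpi$ already used in Section 4.2, which avoids manipulating the anticipating double integral directly. With these adjustments your argument is a complete and honest proof of a statement the paper merely asserts.
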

\begin{proof}
By using that $W(x,t)$ is bounded and arguments similar to those 
of the previous lemma, the required exponential integrability of
$W_{\tau_i,\tau_{i+1}}$ easily follows, at least for sufficiently
small $\lambda$.
\end{proof}

\subsection{Properties of the cluster expansion}
We finally show how the convergence of the cluster expansion of 
$Z_T$ and $\log Z_T$ seen in Proposition \ref{p0} imply existence 
of a limit Gibbs measure $\mu$. 

For any subset $A \subset \R$ let
$$
Z^A_T = 1 + \sum_{n\geq 1} 
\sum_{{\Ga_1,...,\Ga_r \in \C_N \atop \Ga^*_i \cap \Ga^*_j 
= \emptyset, \; i \neq j} \atop A \cap (\cup_i\Ga^*_i) 
= \emptyset} \prod_{i=1}^r K_{\Ga_i}
$$
and write $Z_T^\Gamma := Z_T^{\overline \Gamma \cup \Gamma^*}$.
By the cluster expansion we have
$$
\log Z^\Gamma_T = 1 + \sum_{n\geq 1} 
\sum_{{\Ga_1,...,\Ga_r \in \C_N } \atop \Ga^* \cap (\cup_i\Ga^*_i) 
= \emptyset} \phi^T(\Ga_1,...,\Ga_n)\prod_{i=1}^r K_{\Ga_i}.
$$
Moreover we can define the correlation functions for the clusters by
\begin{equation}
  \label{eq:cluster-corr}
f^\Gamma_T = \frac{Z_T^\Gamma}{Z_T} = \exp\left( - \sum_{n\geq 1} 
\sum_{{\Ga_1,...,\Ga_r \in \C_N } \atop \Ga^* \cap (\cup_i\Ga^*_i) 
\neq \emptyset} \phi^T(\Ga_1,...,\Ga_n)\prod_{i=1}^r K_{\Ga_i} \right) 
\end{equation}
and let $f^\Gamma = \lim_{T\to\infty} f^\Gamma_T $ as the limit
exists by the cluster estimates above and general arguments of 
cluster expansion~\cite{MM}. Moreover we have the uniform estimate
\begin{equation}
  \label{eq:cluster-corr-estimate}
|f^\Gamma_T| \le 2^{|\overline \Gamma|}  
\end{equation}
for $\lambda$ small enough (the constant $2$ can actually be
replaced with any number larger than $1$, provided $\lambda$ is
chosen correspondingly small). Then existence of the infinite time 
limit for the measures $\{\mu_T\}_T$  follows  easily and we have
\begin{proposition}
  \label{prop:infinite-measure}
The local limit $\mu = \lim_{T\to\infty} \mu_T$ exists and satisfies 
the equality
\begin{equation}
  \label{eq:limit-eq-cluster}
\mathbb{E}_{\mu} [F_S] =
\mathbb{E}_{\chi} [F_S] f^{S} + \sum_{n\geq 1} 
\sum_{{\Ga_1,...,\Ga_r \in \C \atop \Ga^*_i \cap \Ga^*_j 
= \emptyset, \; i \neq j} \atop i: S \cap \Ga^*_i 
\neq \emptyset}
 \mathbb{E}_{\chi}[F_S \prod_{i=1}^r \kappa^{\Ga_i}] 
f^{\cup \overline \Ga}
\end{equation}
for any bounded, $\mathcal{F}_S$-measurable function $F_S$, where 
$S$ is a finite union of intervals of the partition considered in 
the cluster expansion. Moreover, the measure $\mu$ is invariant 
with respect to time shift.
\end{proposition}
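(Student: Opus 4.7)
The strategy is to start from $\mathbb{E}_{\mu_T}[F_S] = Z_T^{-1} \mathbb{E}_\chi[F_S e^{-\lambda W_T}]$, expand the exponential using the cluster representation from Proposition \ref{p0}, separate in each term the clusters that meet the support $S$ from those that do not, recognize the latter contribution as a restricted partition function, and pass to the limit $T \to \infty$ through the ratios $f_T^\Ga = Z_T^\Ga/Z_T$ whose convergence to $f^\Ga$ is standard cluster-expansion material \cite{MM}.

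First, after inserting (\ref{sum1})--(\ref{sum3}) one is left with a sum over collections $\{\Ga_1,\dots,\Ga_n\}$ of mutually disjoint clusters. Since $F_S$ is $\F_S$-measurable and $\chi$ factorizes over the intervals of the partition, the subcollection of clusters \emph{not} meeting $S$ decouples from $F_S$ and its full resummation yields exactly the restricted partition function $Z_T^S$, contributing $\mathbb{E}_\chi[F_S]\,Z_T^S/Z_T$. For the remaining terms I would fix the clusters $\Ga_1,\dots,\Ga_n$ with $\Ga_i^*\cap S \neq \emptyset$ and resum the clusters disjoint from $S \cup \bigcup_i \bar\Ga_i$, producing $Z_T^{S\cup \bigcup_i \bar\Ga_i}$. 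Division by $Z_T$ then gives a pre-limit version of (\ref{eq:limit-eq-cluster}) with $f_T^\bullet$ in place of $f^\bullet$.

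To take $T \to \infty$ I would invoke dominated convergence. Each individual term stabilizes once $T$ is large enough to contain all clusters in question, because $\kappa^\Ga$ does not depend on $T$ and $f_T^\Ga \to f^\Ga$. The uniform bound (\ref{eq:cluster-corr-estimate}) combined with the activity estimate of Proposition \ref{p2} dominates the whole series by a convergent geometric series in $\eta(\varepsilon)$, uniformly in $T$ and in the configurations with $\Ga_i^* \cap S \neq \emptyset$, provided $|\lambda|\le\lambda^*$ is small enough so that $\eta<1$. This uniform control is the main technical step, and it is precisely where the smallness of $\lambda$ and the decay condition $\beta>3$ on $W$ intervene through Proposition \ref{pclustest}.

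Finally, shift invariance of $\mu$ follows from stationarity of the reference measure $\chi$ (built from the It\^o diffusion $\nu$) and invariance of $\K$ and of the activities $K_\Ga$ under time translations of the partition, combined with the uniqueness of the limit asserted in Theorem \ref{exist}: translating the finite-volume measure $\mu_T$ by an arbitrary $s\in\R$ produces a Gibbs measure on a shifted interval whose cluster representation has the same structure, and which therefore converges, by the argument above, to the same limit $\mu$.
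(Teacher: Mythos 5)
Your proposal follows the paper's own proof essentially verbatim: write $\mathbb{E}_{\mu_T}[F_S]=Z_T(F_S)/Z_T$, use the cluster representation to split off the clusters meeting $S$ and resum the rest into restricted partition functions, pass to the limit through the correlation functions $f^\Gamma_T=Z_T^\Gamma/Z_T$ with the uniform bounds of Proposition \ref{p2}, and deduce shift invariance from the translation invariance of the potentials and of the It\^o measure together with uniqueness of the limit. The only (harmless) difference is that you make explicit the domination argument that the paper compresses into ``standard arguments show that the series is absolutely convergent uniformly in $N$''.
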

\begin{proof}
We have
\begin{equation}
\mathbb{E}_{\mu}[F_S] = \lim_{T \rightarrow \infty} 
\frac{\int e^{-\lambda W_T(X)} F(X) d\nu_T(X)}
{\int e^{-\lambda W_T(X)} d\nu_T(X)} = 
\lim_{T \rightarrow \infty} \frac{Z_T(F)}{Z_T}.
\end{equation}
By the cluster expansion we are led to 
\begin{eqnarray}
Z_T(F) = 
\mathbb{E}_{\chi} [F_S] Z_T^{S} + \sum_{n\geq 1} 
\sum_{{\Ga_1,...,\Ga_r \in \C_N \atop \Ga^*_i \cap \Ga^*_j 
= \emptyset, \; i \neq j} \atop i: S \cap \Ga^*_i 
\neq \emptyset}
 \mathbb{E}_{\chi}[F_S \prod_{i=1}^r \kappa^{\Ga_i}] 
Z_T^{\cup \overline \Ga}.
\label{ce}
\end{eqnarray}
If $F_S$ is bounded, standard arguments show that the series 
on the right hand side is absolutely convergent uniformly in 
$N$ and thus~(\ref{eq:limit-eq-cluster}) follows. Given the 
uniqueness of the limiting measure, its invariance with respect 
to time shifts is a direct consequence of the invariance of the 
potentials and of the It\^o-measure~(for more details 
see~\cite{LM}).
\end{proof}

\begin{corollary}
\label{cor:equiv-mu-nu}
Let $F\in\mathcal{F}_{[0,b]}$ be a positive random variable. 
Then
$
\mathbb{E}_{\mu}[F] \leq C (\mathbb{E}_{\nu}[F^2])^{1/2}      
$.  
\end{corollary}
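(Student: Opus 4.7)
The plan is to start from the cluster-expansion identity~\eqref{eq:limit-eq-cluster} applied to $F_S=F$, with the partition of $\R$ chosen so that $[0,b]$ is one elementary subinterval and $S=[0,b]$. This produces
\begin{equation*}
\mathbb{E}_{\mu}[F] = \mathbb{E}_{\chi}[F]\,f^{S} + \sum_{n\ge 1}\sum_{\substack{\{\Gamma_1,\dots,\Gamma_r\}\\ \exists\,i:\,S\cap\Gamma_i^*\neq\emptyset}} \mathbb{E}_{\chi}\Bigl[F\prod_{i=1}^{r}\kappa^{\Gamma_i}\Bigr]\,f^{\cup\bar\Gamma},
\end{equation*}
and I bound each expectation inside the remainder sum by the Cauchy--Schwarz inequality under $\chi$, which extracts the target $(\mathbb{E}_{\chi}[F^{2}])^{1/2}$ together with $(\mathbb{E}_{\chi}[\prod_i(\kappa^{\Gamma_i})^{2}])^{1/2}$.

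To handle the second factor I exploit the product structure of $\chi$: under $\chi_N$ the marking points $x_0,\dots,x_N$ are independent with law $\omega$ and the bridge pieces $X_{\tau_k}$ are conditionally independent given them, so any family of functionals attached to intervals with pairwise disjoint endpoint sets is $\chi$-independent. Since the clusters in the sum obey $\Gamma_i^{*}\cap\Gamma_j^{*}=\emptyset$, this yields
\begin{equation*}
\mathbb{E}_{\chi}\Bigl[\prod_i(\kappa^{\Gamma_i})^{2}\Bigr] = \prod_i\mathbb{E}_{\chi}[(\kappa^{\Gamma_i})^{2}].
\end{equation*}
I then estimate each squared activity $\mathbb{E}_{\chi}[(\kappa^{\Gamma})^{2}]$ by rerunning the proof of Proposition~\ref{pclustest} with H\"older exponents $2n_{ij}$ and chain exponent $2\beta$ in place of $n_{ij}$ and $\beta$. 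Because these are free parameters and the energy estimates of Lemmas~\ref{lemma:trans},~\ref{holderw} and~\ref{lemma:final-exp-bound} extend to them once $|\lambda|$ is correspondingly small, the outcome is a bound of the same shape as~\eqref{clustest} with a new small parameter $\tilde\varepsilon(\lambda)\to 0$. Combined with $|f^{\cup\bar\Gamma}|\le\prod_i 2^{|\bar\Gamma_i|}$ from~\eqref{eq:cluster-corr-estimate}, this converts the remainder into a cluster series whose activities are $\widehat K_{\Gamma}=2^{|\bar\Gamma|}(\mathbb{E}_{\chi}[(\kappa^{\Gamma})^{2}])^{1/2}$, and Proposition~\ref{p2} applies to bound this series by a constant uniform in $T$.

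The last step is to pass from $\chi$ to $\nu$. Because $F\in\mathcal{F}_{[0,b]}$ depends only on the single interval $[0,b]$, marginalising $\chi_N$ and $\nu$ in all remaining $x_k$ shows that the restrictions of the two measures to $\mathcal{F}_{[0,b]}$ differ only by the factor $\pi_b(x_0,x_1)$, and intrinsic ultracontractivity (Lemma~\ref{lemma:trans}) places this factor in $[1/2,3/2]$ once $b$ is large. Hence $\mathbb{E}_{\chi}[F^{2}]\le 2\,\mathbb{E}_{\nu}[F^{2}]$ and $\mathbb{E}_{\chi}[F]\le 2\,\mathbb{E}_{\nu}[F]\le 2\,(\mathbb{E}_{\nu}[F^{2}])^{1/2}$ by Cauchy--Schwarz in the probability measure $\nu$, which closes the argument. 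The main obstacle is the squared-activity estimate: one must re-traverse the H\"older/Burkholder--Davis--Gundy scheme of Lemma~\ref{holderw} with every exponent doubled, re-tune $\Delta$ so that $2\Delta<\beta$ still holds, and take $|\lambda|$ correspondingly smaller; every other ingredient is a direct reuse of results already in place.
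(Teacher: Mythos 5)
Your argument is correct and follows essentially the same route as the paper: the cluster representation of $\mathbb{E}_\mu[F_S]$ from Proposition~\ref{prop:infinite-measure} with $S=[0,b]$, Cauchy--Schwarz under $\chi$ to split off $(\mathbb{E}_\chi[F^2])^{1/2}$ from the squared activities, a rerun of the cluster estimates with doubled exponents to sum the resulting series, and the comparison of $\chi$ with $\nu$ on $\mathcal{F}_{[0,b]}$ via $\pi_b\approx 1$. The only differences are organizational (you keep the full product of disjoint clusters and factorize it by independence, where the paper collapses the remainder to the single cluster meeting $[0,b]$ and absorbs the rest into the correlation function $f^{\Gamma_0}$) and that you spell out the doubled-exponent re-tuning that the paper dispatches with ``by the same arguments as in Proposition~\ref{p2}''.
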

\begin{proof}
By using Proposition ~\ref{prop:infinite-measure} for $S=[0,b]$ 
(which for fixed $N$ is the interval $\tau_{N/2}$ of the 
partition) we have
$$
\mathbb{E}_{\mu} [F_S] = \mathbb{E}_{\chi} [F_S]  
f^S + \sum_{\Ga_0: \bar\Ga_0 \cap [0,b] \neq \emptyset} 
\mathbb{E}_{\chi} [F_S \kappa^{\Ga_0}] f^{\Ga_0},   
$$
where in the second term the sum is over the only  cluster which
can overlap with $S$. We have $|f^S| \le 2$, $|f^{\Ga_0}| \le 
2^{|\Ga_0^*|}$. On the other hand, by using Lemma~\ref{lemma:trans} 
and choosing $b$ large enough to ensure that 
$\sup_{x,y}|\pi_b (x,y)-1|\le 1/2$, we have
\begin{equation}
\mathbb{E}_{\chi} [F_S] = \int F_S(X) \frac{d\nu^{x_0,x_1}(X)}
{\pi_b(x_0,x_1)} \pi_b (x_0,x_1) d\omega(x_0) d\omega(x_1)  
\; \leq \;  
C \; \mathbb{E}_{\nu} [F_S]. 
\nonumber
\end{equation}
Furthermore,
$
\mathbb{E}_{\chi^N} [\kappa_a^{\Ga_0}] \leq 
\left(\mathbb{E}_{\chi^N} [\xi_a]\right)^{1/2} \; 
\left(\mathbb{E}_{\chi^N} [(\kappa^{\Ga_0})^2]\right)^{1/2}
$
and by the same arguments as in Proposition~\ref{p2} above we 
obtain the bound
$$
\sum_{\Ga_0: [0,b]\in \bar \Ga_0 } [ \mathbb{E}_{\chi} 
(\kappa^{\Ga_0})^2]^{1/2} \; 2^{|\bar\Ga_0|} \; \leq \; 
\mbox{const}
$$
with some constant.  Hence we get that
$
\mathbb{E}_{\mu}[F_S] \le C
\mathbb{E}_{\nu}[F_S] + C
(\mathbb{E}_{\nu}[F_S^2])^{1/2}      
$,
which implies the claim.
\end{proof}

\begin{theorem}
\label{th:mu-lift}
There exists a unique forward current $\mu^\sharp$ on $\Xi$ 
such that its $\Xi$-marginal is $\mu$. Moreover, under $\mu$ 
we have $\mathcal{N}_\alpha(X) < \infty$ almost surely and 
the boundary currents are well defined under $\mu^\sharp$.
\end{theorem}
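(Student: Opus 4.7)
The plan is to lift $\mu$ by using the pathwise rough-path integration map $F$ of Lemma~\ref{lemma:rough}, and to transfer $\nu$-a.s.\ properties to $\mu$-a.s.\ properties by exploiting the local comparison estimate of Corollary~\ref{cor:equiv-mu-nu}.

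First I would show that $\mathcal{N}_{\alpha,3}(X)<\infty$ holds $\mu$-almost surely. Applying Corollary~\ref{cor:equiv-mu-nu} to $F_S = N_{[0,b]}(X)^3$ and then, by translation invariance of $\mu$ (Proposition~\ref{prop:infinite-measure}) combined with stationarity of $\nu$, to every shifted block, one obtains
$\expect_\mu[N_{[i,i+1]}(X)^3]\le C(\expect_\nu[N_{[i,i+1]}(X)^6])^{1/2}\le C'$
uniformly in $i\in\Z$, where the second inequality is the same kind of moment bound that was already used in the proof of Lemma~\ref{lemma:nu-lift}. Summing against $(1+|i|)^{-\alpha}$ for $\alpha>1$ then gives $\expect_\mu[\mathcal{N}_{\alpha,3}(X)]<\infty$, hence $\mathcal{N}_{\alpha,3}(X)<\infty$ $\mu$-a.s. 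By Lemma~\ref{lemma:rough} this guarantees that the map $F$ is well defined on a set of full $\mu$-measure, obeys the pathwise bound (\ref{eq:F-bound}), and that the boundary currents $C^\pm_t$ are meaningful elements of $\D'$.

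I would then define $\mu^\sharp$ as the pushforward of $\mu$ under the measurable map $X\mapsto(X,F(X))$, so that by construction its $\X$-marginal is $\mu$. To check that $\mu^\sharp$ is a forward current in the sense of Definition~\ref{def:forward-current} two things must be verified: that $X$ is a $(\mu,\{\F^+_t\}_t)$-semimartingale, and that $F(X)_{st}(\varphi)=\int_s^t\varphi(u,X_u)dX_u$ holds $\mu$-a.s.\ for every adapted $\varphi\in\D$. Corollary~\ref{cor:equiv-mu-nu} implies local absolute continuity of $\mu$ with respect to the It\^o diffusion law $\nu$; since the semimartingale property is preserved under locally absolutely continuous change of measure (a local density merely adds a drift to the canonical decomposition under $\nu$), and since the rough-path vs.\ It\^o identification holds $\nu$-a.s.\ by Lemma~\ref{lemma:nu-lift}, both properties transfer to $\mu$. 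Separability of $\D_\alpha$ together with the continuity bound (\ref{eq:F-bound}) lets us pick a single $\mu$-null set outside of which the identification holds for every $\varphi\in\D$ simultaneously; note that every deterministic $\varphi\in\D$ is automatically adapted, so no further restriction of the test class is needed. Uniqueness of $\mu^\sharp$ is then forced by the identification (\ref{eq:curr-constr-0}), which pins down $C^X$ uniquely as an element of $\mathcal{C}$.

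I expect the main obstacle to be the bookkeeping of the local absolute continuity step: Corollary~\ref{cor:equiv-mu-nu} is stated only on $\F_{[0,b]}$, so transferring $\nu$-a.s.\ identities over the whole real line requires iterating the bound over a countable cover of $\R$ by shifted blocks and invoking translation invariance of both $\mu$ and $\nu$ together with $\sigma$-subadditivity of null sets. Once this accounting is in place, the rest of the argument is a direct transcription of the proofs of Lemma~\ref{lemma:W-lift} and Lemma~\ref{lemma:nu-lift}.
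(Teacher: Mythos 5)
Your proposal is correct and follows essentially the same route as the paper: both arguments rest on Corollary~\ref{cor:equiv-mu-nu}, first to transfer $\nu$-almost-sure events (and hence the lift of Lemma~\ref{lemma:nu-lift}) to $\mu$, and second to obtain the uniform bound $\expect_\mu[N_{[k,k+1]}(X)^3]\le C(\expect_\nu[N_{[k,k+1]}(X)^6])^{1/2}\le C$ giving $\expect_\mu[\mathcal{N}_{\alpha,3}(X)]<\infty$ and the existence of the boundary currents. You are merely more explicit than the paper about the bookkeeping (iterating the block-local comparison over shifted intervals, the preservation of the semimartingale property, and the choice of a single null set via separability of $\D_\alpha$), all of which the paper leaves implicit.
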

\begin{proof}
Corollary~\ref{cor:equiv-mu-nu} implies that the measure $\mu$ 
is absolutely continuous with respect to $\nu$, thus the almost 
sure events of $\nu$ carry over to $\mu$ and we can consider 
the lifted measure $\mu^\sharp$. This further implies that
$$
\expect_\mu[N_{[k,k+1]}(X)^3] 
\le 
\left(\expect_\nu[(N_{[k,k+1]}(X))^6]\right)^{1/2} 
\le 
C, 
$$
independently of $k\in\Z$ and then $\expect_\mu[\mathcal{N}_
\alpha(X)] < \infty$ for any $\alpha > 1$. This last 
condition guarantees the existence of the boundary currents 
under the measure $\mu^\sharp$.
\end{proof}

\section{Properties of the Gibbs measure}

\subsection{Dependence on boundary conditions and DLR uniqueness}

Uniqueness in DLR sense means that for any increasing sequence of real 
numbers $\{T_n\}_n$, $T_n \uparrow \infty$, and any corresponding 
sequence of boundary conditions $\{Y_n\}_n \subset \Xi$ we have 
$\mathbb{E}_{\rho^\sharp_{T_n}(\cdot |\YY_n)}[F_B] \to \mathbb{E}_\mu[F_B]$, 
for every bounded $B \subset \R$, and each bounded and local (i.e., 
measurable with respect to $\F_B$) function $F_B$ on $\Xi$. However,
such a strong statement cannot be made in this context and we have to
restrict the class of allowed boundary conditions to be able to
control the limit. Fix $\alpha > 1$ and let
\begin{equation}
\Xi_a = \{Y \in \Xi : \mathcal{N}_\alpha(Y) \le a \},
\quad \Xi_* = \cup_{a > 0} \Xi_a
\end{equation} 
be the set of \emph{allowed boundary conditions} carrying full 
$\mu^\sharp$ measure. Then we have
\begin{theorem}
\label{th:uniq}
For any $a>0$ the measure $\mu$ is unique in DLR sense for any 
sequence of boundary conditions $(\YY_n)_n$ in $\Xi_a$, i.e.
\begin{equation}
\lim_{n\to\infty} \mathbb{E}_{\rho^\sharp_{T_n}(\cdot|\YY_n)}[F_B] =
\mathbb{E}_\mu[F_B].
\label{dlruniq}
\end{equation}
\end{theorem}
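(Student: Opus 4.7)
The plan is to extend the cluster expansion developed in Chapter~5 so that it accommodates the boundary term appearing in the specification (\ref{eq:specc2}), and then to use the uniform decay of the resulting boundary activities (guaranteed by $\YY_n\in\Xi_a$) to control the influence of $\YY_n$ on local observables $F_B$ as $T_n\to\infty$.

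First I would observe that the boundary energy in (\ref{eq:specc2}) splits additively over the intervals $\tau_k$ of the partition of $[-T_n,T_n]$: with $U_k(X,Y)=\int_{\tau_k} w^{C^Y_{I^c}}(u,X_u)\,dX_u$, the exponential weight factorises as $e^{-\lambda W_{T_n}(X)}\prod_k e^{-\lambda U_k(X,Y)}$. I would then expand each $e^{-\lambda U_k}=1+(e^{-\lambda U_k}-1)$ in parallel with the pair expansion (\ref{sum1}), thereby enlarging the notion of cluster from Section~\ref{sec:cluster-rep} to include, besides contours and chains, single-interval \emph{boundary activities} $u_k:=e^{-\lambda U_k}-1$. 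This yields an analogue of Proposition~\ref{p0} for $Z_{T_n}(\YY_n)$ and for the numerator $Z_{T_n}(F_B,\YY_n)$.

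Second, and central to the argument, is obtaining cluster estimates that are uniform in $\YY_n\in\Xi_a$. Since $\mathcal{N}_{\alpha,3}(Y_n)\le a$, one controls $|w^{C^Y_{I^c}}(u,x)|$ pathwise by a weighted sum of the local norms $\|W(x-\cdot,u-\cdot)\|_{\D_\alpha}$ over integer intervals of $I^c$; the assumption $\beta>\max\{\alpha,3\}$ on the decay of $W$ then produces a uniform bound of the form $\|w^{C^Y_{I^c}}\|_{L^\infty(\tau_k\times\R^d)}\le C_a(1+\dist(\tau_k,I^c))^{-(\beta-1)}$. Exponential integrability of $U_k$ with respect to $\nu^\sharp_{\tau_k}$ then follows from Lemma~\ref{lemma:exp-bound} together with a Burkholder--Davis--Gundy estimate carried out as in Lemma~\ref{holderw}, producing an $L^p$-bound of the schematic form $\mathbb{E}[|u_k|^p]^{1/p}\le\e(\lambda)(1+\dist(k,\{\pm N/2\}))^{-\delta'}$ with $\delta'>1$ and $\e(\lambda)\to0$ as $\lambda\to0$, uniformly in $\YY_n\in\Xi_a$. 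Rerunning the graph/tree bookkeeping of Proposition~\ref{p2} with the enlarged set of activities then preserves absolute convergence for small $\lambda$, uniformly in $n$ and in $\YY_n\in\Xi_a$.

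Third, for a bounded local $F_B$ I would insert the enriched expansion into the representation derived in Proposition~\ref{prop:infinite-measure}, writing
\begin{equation*}
\mathbb{E}_{\rho^\sharp_{T_n}(\cdot|\YY_n)}[F_B] = \mathbb{E}_{\chi}[F_B]\,f^{B}_{T_n,Y_n} + \sum_{\Ga:\,\Ga^*\cap B\neq\emptyset} \mathbb{E}_{\chi}\big[F_B\,\kappa^{Y_n}_\Ga\big]\, f^{\overline\Ga}_{T_n,Y_n},
\end{equation*}
where the $Y_n$-superscripts indicate that boundary activities are now allowed in $\kappa$ and in the correlation functions $f$. Each term in the sum is supported on a finite collection of intervals around $B$, so every boundary activity $u_k$ entering such a term sits on an interval at distance $\to\infty$ from $B$ as $n\to\infty$, hence $u_k\to 0$ pointwise; by the same token the $f^{\overline\Ga}_{T_n,Y_n}$ converge to the free-boundary $f^{\overline\Ga}$. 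Combining this pointwise convergence with the uniform domination $|f^{\overline\Ga}_{T_n,Y_n}|\le 2^{|\overline\Ga|}$ (the analogue of (\ref{eq:cluster-corr-estimate})) and the summability of $|\kappa_\Ga|$ from Proposition~\ref{p2} yields (\ref{dlruniq}) by dominated convergence, and identifies the limit with the expression for $\mathbb{E}_\mu[F_B]$ in Proposition~\ref{prop:infinite-measure}. The main obstacle I anticipate is the uniform exponential-integrability bound for $U_k$ with $\YY_n$-dependent, non-adapted integrand: the analogue of Lemma~\ref{holderw} must be carried out pathwise in $Y$ while preserving the summability over $k$, which is exactly where the strengthened decay $\beta>3$ on $W$ combines with the control given by $\YY_n\in\Xi_a$.
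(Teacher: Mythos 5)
Your proposal is correct in substance but organizes the boundary term differently from the paper. The paper does \emph{not} turn the single-interval boundary weights into new polymer activities: it absorbs them into a modified reference measure, defining $\chi^Y_N$ with per-interval density $e^{-\lambda W^Y_{\tau_k,T}(X_{\tau_k})}/\Z^T_{\tau_k}(Y|x_k,x_{k+1})$, so that the combinatorial structure of Proposition \ref{p2} (contours, chains, trees) is left untouched and only the activities change to $K^Y_\Ga=\expect_{\chi^Y_N}[\kappa^Y_\Ga]$. The proof then rests on three statements: the uniform bound (\ref{eq:bound-KY}) on the $K^Y_\Ga$ for $\YY\in\Xi_a$, the uniform convergence $K^Y_\Ga\to K_\Ga$ and $f^\U_T(Y)\to f^\U$ as $T\to\infty$, and the convergence $\expect_{\chi^Y_N}[F_S]\to\expect_\chi[F_S]$; all three use exactly the analytic inputs you identify, namely the decay estimate (\ref{a22y}) coming from $\mathcal N_\alpha(Y)\le a$ and $\beta>\alpha$ (note the paper's exponent is $\beta-\alpha$, not $\beta-1$), together with Lemmas \ref{lemma:exp-bound} and \ref{holderw}. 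Your alternative --- expanding $\prod_k e^{-\lambda U_k}=\prod_k(1+u_k)$ and admitting single-interval boundary activities into the clusters --- is legitimate and buys a cleaner identification of the limit with formula (\ref{eq:limit-eq-cluster}), since the reference measure $\chi$ never changes; the price is that you must rerun the tree-graph bookkeeping for a new polymer species whose activities, unlike the chain factors $\pi_b-1$, do \emph{not} have zero mean, so isolated single-interval boundary clusters genuinely contribute and their summability ($\sum_k\|u_k\|_\infty=O(\lambda)$ uniformly in $N$, by the decay away from $[-T,T]^c$) has to be checked. One wording slip to fix: a term in your sum over $\Ga$ with $\Ga^*\cap B\neq\emptyset$ carries boundary activities on intervals \emph{near} $B$; these vanish not because they are far from $B$ but because they are far from $\pm T_n$, where the decay of $w^{C^{Y\pm}_{\pm T_n}}$ in $\dist(\tau_k,[-T_n,T_n]^c)$ kills them --- the mechanism you state correctly in your second paragraph.
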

\begin{proof}
We consider the class of bounded local functions $F_S$ on $\Xi$ indexed 
by bounded intervals $S \subset \R$ (that is, $F_S$ is measurable with 
respect to $\F_S$). It suffices to prove that for any increasing sequence 
$\{T_n\}$, $T_n \to \infty$, and any corresponding sequence of boundary 
conditions $(\YY_n)_n \subset \Xi_a$ (\ref{dlruniq}) holds for arbitrary 
$F_S$ of the above class. To show this we express the conditional 
expectations appearing above in terms of the cluster representation. We 
suppose without loss that $S$ consists of a finite union of intervals of 
the partition of $[-T,T]$. 

From now on we follow the steps of the construction of the cluster
representation in Section \ref{sec:cluster-rep}. Take the same partition 
of the interval $[-T,T]$ into disjoint segments as before. The interaction 
energy can then be written as
\begin{equation}
W_T(X|Y) = 
\sum_{0 \leq i < j \leq N} W_{\tau_i,\tau_j}(X_{\tau_i},X_{\tau_j}) + 
\sum_{0 \leq i \leq N-1} W_{\tau_i,T}^Y (X_{\tau_i}),
\end{equation}
with the same notations as before, and with 
\begin{equation}
W_{\tau_i,T}^Y (X_{\tau_i}) = 2 \int_{\tau_i}w^{C^{Y +}_{T}}(X_t,t) dX_t + 
2\int_{\tau_i}  w^{C^{Y-}_{-T}}(X_t,t) dX_t.   
\end{equation}
By (\ref{a22}) the estimate 
\begin{equation}
\label{a22y}
\sup_{x\in \R^d, t \in \tau} |w^{C^{Y \pm}_{\pm T}}(x,t)| 
\leq 
\frac{M_\tau \|C^{Y \pm}_{\pm T}\|_{\mathcal{D}'}}
{\left(\dist(\tau,[-T,T]^c) + 1 \right)^{\beta-\alpha}}
\end{equation}
easily follows. (Here $[-T,T]^c = \R \smallsetminus [-T,T]$ and 
$\dist(\tau_k,[-T,T]^c) = \min \{kb, (N-k-1)b\}$.) 

Fix the positions $X_{t_0} = Y^-_{-T} = x_0$, $X_{t_1} = x_1$, ..., 
$X_{t_{N-1}} = x_{N-1}$, $X_{t_N} = Y^+_T = x_N$. Similarly to 
(\ref{auxi}) introduce the auxiliary measure 
\begin{equation}
d\chi_N^Y = 
\prod_{k=0}^{N-1} \frac{e^{-\lambda W^Y_{\tau_k,T}(X_{\tau_k})}}
{\Z_{\tau_k}^T(Y|x_k,x_{k+1})} d\nu^{x_k,x_{k+1}}_b (X_{\tau_k}) 
\prod_{k=1}^{N-1} d\omega(x_k) 
\end{equation}
where 
\begin{equation}
\Z^T_{\tau_k} (Y|x_k, x_{k+1}) = \mathbb{E}_{\nu^{x_k,x_{k+1}}_b} 
\left[ e^{-\lambda W_{\tau_k,T}^Y (X_{\tau_k})} \right].
\end{equation}
Also, for every cluster $\Ga$ consider the function $\kappa_{\Ga}^Y$ 
defined similarly to (\ref{kappa}). If $\pm T \not \in \Ga^*$, then 
$\kappa^Y_\Ga$ does not depend on $Y$. If $-T \in \Ga^*$ and/or $T \in 
\Ga^*$, then $\kappa^Y_\Ga$ depends on $Y^-_{-T} = x_0$ and/or $Y^+_T = 
x_N$, respectively. Next we define the weights
\begin{equation}
K^Y_{\Ga} = \mathbb{E}_{\chi^Y_N} [\kappa^Y_\Ga]
\end{equation}
in the same manner as in (\ref{aux}). The partition function $Z_T(Y)$ 
can be expressed similarly to (\ref{clexp}) with these altered objects. 
Note that
\begin{enumerate}
\item[(1)]
for sufficiently small $|\lambda| \neq 0$ the cluster estimate 
(\ref{convv}) obtained in Proposition \ref{p2} stays essentially 
valid, i.e., 
\begin{equation}
\label{eq:bound-KY}
\sum_{\Ga: \Ga^* \ni 0 \atop |\bar \Ga| = n} |K^Y_\Ga| \;\leq\; 
c \eta'(\lambda) ^n,
\end{equation}
with $\eta'(\lambda)$ going to zero as $\lambda \to 0$;
\item[(2)]
for any fixed $\Ga$ we have
$
\lim_{T\to \infty} \kappa^Y_\Ga = \kappa_\Ga
$ and 
\begin{equation}
\label{eq:convergence-KY}
\lim_{T\to \infty} K^Y_\Ga = K_\Ga,
\end{equation}
both uniformly convergent in $\YY \in \Xi_a$. 
\end{enumerate}
The proof of these statements goes by the same arguments  used 
in the previous section and it will be omitted. The 
bound~(\ref{eq:bound-KY}) can be proven as in Lemma~\ref{p2}.
Indeed, by using Lemma~\ref{lemma:exp-bound} and the bound~(\ref{a22y}) 
on the influence of the boundary current we have a handle to control 
the exponential moments of  $W_{\tau_i,T}^Y(X_{\tau_i})$ in terms of 
the norm of the boundary current and repeat the proof of 
Lemma~\ref{holderw} to obtain the necessary estimates on cluster 
activities (with constants depending on $a$). A good control of the 
exponential moments is the key to obtain (\ref{eq:convergence-KY}).

Then in (\ref{dlruniq}) we have
\begin{equation}
\mathbb{E}_{\mu_T} [F_S|Y] = \mathbb{E}_{\chi^Y_T} \left[F_S \right] 
\; f^S_T(Y) + \sum_{n\geq 1} \sum_{\{\Ga_1,...,\Ga_m\}: 
\Ga^*_i \cap \Ga_j^* \neq \emptyset \atop \Ga_i^* \cap S \neq 0, 
\Ga_i \subset [-T,T], i = 1,...,m} 
\mathbb{E}_{\chi^Y_T} \left [F_S \prod_{i=1}^m \kappa^Y_T \right] \; 
f_T^{\cup \bar\Ga}(Y)
\label{eee}
\end{equation}
with the same notations as in (\ref{ce}) and $f^A_T(Y)= Z_T^A(Y)/Z_T(Y)$.

Take now a collection of intervals $\{\tau_i\} = \U$; the partition 
function $ Z_T^{\U}(Y) := Z_{T}^{\cup_{\tau_i \in \U} \tau^*_i} (Y)$ 
can then be written like in~(\ref{clexp}) except for changing $K_\Ga$ 
for $K^Y_\Ga$.
\begin{lemma}
For sufficiently small $|\lambda| \neq 0$ we have the following 
properties of $f^{\U}_T(Y) := Z_T^\U (Y)/Z_T(Y)$. On the one hand,
\begin{equation}
|f^{\U}_T(Y)| \;\leq\; 2^{|\U|},
\label{zup}
\end{equation}
with $|\U|$ denoting the number of intervals contained in $\U$. 
On the other hand,
\begin{equation}
\lim_{T \to \infty} f^{\U}_T(Y) = f^{\U}
\label{zap}
\end{equation}
uniformly in $\YY \in \Xi_a$. Moreover, $f^{\U}$ also satisfies 
(\ref{zup}) above. 
\end{lemma}
\begin{proof}
Both statements are direct consequences of the 
bounds (\ref{eq:bound-KY}) and of (\ref{eq:convergence-KY})
together with the cluster representation of the correlation 
functions~(\ref{eq:cluster-corr}). By putting
\begin{equation}
\hat K_\Ga^Y = \left\{ 
\begin{array}{ll}
K_\Ga^Y & \mbox{if $\Ga$ lies inside $[-T,T]$} \nonumber \\
0 & \mbox{otherwise} 
\end{array} \right.
\end{equation}
and using dominated convergence we obtain that $f_{\U}^Y \to 
f_{\U}$ uniformly as $T \to \infty$.
\end{proof}

We now return to the expression (\ref{eee}). By ergodicity 
of the reference measure  
\begin{equation}
\lim_{T\to \infty} \mathbb{E}_{\chi_N^Y} [F_S] = 
\mathbb{E}_{\chi} [F_S],  
\end{equation}
and hence the first term of (\ref{eee}) converges to 
$\mathbb{E}_{\chi}[F_S] f_S$. By the same argument as above 
we also obtain
\begin{equation}
\lim_{N \to \infty} \mathbb{E}_{\chi_N^Y} 
\left[F_S \prod_{i=1}^m \kappa_{\Ga_i}^Y \right] 
= 
\mathbb{E}_{\chi}\left[F_S \prod_{i=1}^m \kappa_{\Ga_i}\right] 
\label{ele}
\end{equation}
uniformly in $Y$, and 
\begin{equation*}
|\mathbb{E}_{\chi_N^Y} 
[F_S \prod_{i=1}^m \kappa_{\Ga_i}^Y ]| \;\leq\; (\sup|F_S|) 
\prod_{i=1}^m |K^Y_{\Ga_i}|,
\label{maa}
\end{equation*}
which implies
\begin{eqnarray*}
\sum_{\{\Ga_1,...,\Ga_m\} \atop \Ga^*_j \cap S^* = 
\emptyset, j = 1,...,m} 
\left ( \prod_{i=1}^m  |K^Y_{\Ga_i}|\right)  
2^{|\cup_i \bar\Ga_i|} 2^{|S^*|}
& \leq &
2^{|S^*|} \sum_{m=1}^{|S^*|} {|S^*| \choose m} 
\left( \sum_{\Ga: \Ga \ni 0} 2^{|\bar\Ga|} |K^Y_\Ga| \right)^m \\
& \leq &
2^{|S^*|} \sum_{m=1}^{|S^*|} {|S^*| \choose m} 
\left( \sum_{n=2}^\infty 2^n c \eta(\lambda)^n \right)^m < \infty.
\end{eqnarray*}
Here $S^*$ denotes the time points occurring in $S$. By using now 
this estimate together with (\ref{ele}) and applying Lebesgue's 
dominated convergence theorem once again, we arrive at
\begin{equation}
\mathbb{E}_{\chi} [F_S] f_S + \sum_{m=1}^\infty 
\sum_{\{\Ga_1,...,\Ga_m\} \atop \Ga_i^* \cap S^* \neq
\emptyset, i = 1,...,m} \mathbb{E}_{\chi} 
\left[F_S \prod_{i=1}^m \kappa_{\Ga_i} \right] 
f_{S \cup (\cup_i \bar\Ga_i)} = \mathbb{E}_\mu [F_S].
\end{equation}
\end{proof}

\begin{corollary}
The measure $\mu^\sharp$ satisfies the DLR equations
$$
\int_\Xi \expect_{\rho^\sharp_T(\cdot|\YY)}[F_S] 
d\mu^\sharp(\YY) = \expect_{\mu^\sharp}[F_S]
$$  
for any $S<T$.
\end{corollary}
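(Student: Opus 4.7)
The plan is to derive the DLR identity by passing to the thermodynamic limit in the finite-volume consistency relation satisfied by the specification. Fix an arbitrary allowed boundary condition $\ZZ \in \Xi_a$ and take $T_n > T$ with $T_n \to \infty$. Applying the consistency lemma for $\{\rho^\sharp_I\}_I$ proved earlier, with $I=[-T,T]$ and $J=[-T_n,T_n]$, yields
$$
\expect_{\rho^\sharp_{T_n}(\cdot|\ZZ)}[F_S] \;=\; \int_\Xi G(\YY)\, \rho^\sharp_{T_n}(d\YY|\ZZ),
\qquad
G(\YY) \;:=\; \expect_{\rho^\sharp_T(\cdot|\YY)}[F_S].
$$
Theorem~\ref{th:uniq} ensures that the left-hand side converges to $\expect_{\mu^\sharp}[F_S]$ as $n\to\infty$, so it remains to show that the right-hand side converges to $\int_\Xi G(\YY)\, d\mu^\sharp(\YY)$.

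The obstacle is that $G$ is a non-local functional of $\YY$: it depends on $\YY$ only through the boundary currents $C^{Y\pm}_{\pm T}$, which encode the entire future and past of $\YY$. I would handle this by truncation. Using the cluster representation~(\ref{eee}) for $G(\YY)$ together with the activity bound~(\ref{eq:bound-KY}) and the decay~(\ref{a22y}) of the boundary interaction kernel $w^{C^{Y\pm}_{\pm T}}$, I would construct bounded $\A_{[-N,N]}$-measurable functions $G_N$ on $\Xi$ that retain only contributions from contours and chains confined to $[-N,N]$, with
$$
\sup_{\YY\in \Xi_a} |G(\YY) - G_N(\YY)| \longrightarrow 0 \quad \text{as } N\to\infty.
$$
For each fixed $N$, $G_N$ is a bounded local observable, so Theorem~\ref{th:uniq} applied with $G_N$ in place of $F_B$ gives $\int_\Xi G_N(\YY)\, \rho^\sharp_{T_n}(d\YY|\ZZ) \to \int_\Xi G_N(\YY)\, d\mu^\sharp(\YY)$ as $n\to\infty$.

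Combining the uniform approximation with the fact that both $\rho^\sharp_{T_n}(\cdot|\ZZ)$ (uniformly in $n$, by the same cluster estimates as in Proposition~\ref{prop:infinite-measure}) and $\mu^\sharp$ (by Theorem~\ref{th:mu-lift}) concentrate on $\Xi_*=\bigcup_a \Xi_a$, a standard three-$\varepsilon$ argument yields
$$
\int_\Xi G(\YY)\, \rho^\sharp_{T_n}(d\YY|\ZZ) \longrightarrow \int_\Xi G(\YY)\, d\mu^\sharp(\YY),
$$
which together with the first display gives the DLR identity. The main technical difficulty is the truncation step: producing a quantitative rate of approximation of $G$ by local observables $G_N$ uniformly over $\Xi_a$. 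The exponential cluster decay plus the polynomial decay~(\ref{a22y}) of the boundary kernel supply the necessary ingredients; assembling them into a clean uniform estimate is where the real work lies.
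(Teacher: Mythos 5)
Your proposal follows essentially the same route as the paper: combine the consistency lemma for the specification $\{\rho^\sharp_I\}_I$ with the DLR-uniqueness statement of Theorem~\ref{th:uniq}, applied once to $F_S$ and once to the integrand $G(\YY)=\expect_{\rho^\sharp_T(\cdot|\YY)}[F_S]$. The one place where you diverge is instructive: the paper simply invokes Theorem~\ref{th:uniq} to assert $\int G\,d\rho^\sharp_{T_2}(\cdot|\ZZ)\to\int G\,d\mu^\sharp$, even though that theorem is stated only for bounded \emph{local} observables and $G$ depends on $\YY$ through the boundary currents $C^{Y\pm}_{\pm T}$, hence on the whole path. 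You correctly flag this and interpose a truncation by local observables $G_N$ with $\sup_{\Xi_a}|G-G_N|\to 0$, plus a uniform-in-$n$ concentration of $\rho^\sharp_{T_n}(\cdot|\ZZ)$ on the sets $\Xi_a$. That extra step is exactly what is needed to make the paper's argument airtight, and the ingredients you cite (the activity bounds~(\ref{eq:bound-KY}) and the decay~(\ref{a22y}) of the boundary kernel) are the right ones; the residual work is the quantitative tightness estimate for $\mathcal{N}_\alpha$ under $\rho^\sharp_{T_n}(\cdot|\ZZ)$ uniformly in $n$, which neither you nor the paper writes out but which follows from the same cluster/moment machinery. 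In short: same proof skeleton, with your version patching a genuine (if minor) gap in the published argument.
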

\begin{proof}
Note that $|\expect_{\rho^\sharp_T(\cdot|\YY)}[F_S]| \le \sup |F_S|$ 
and that $\cup_{a>0} \Xi_a$ has full $\mu^\sharp$-measure so that 
the left hand side of the equality makes sense. Fix $a>0$. By 
Theorem~\ref{th:uniq} we have
$$
\int_\Xi \expect_{\rho^\sharp_T(\cdot|\YY)}[F_S] d\mu^\sharp(\YY) 
= 
\lim_{T_2\to+\infty}\int_\Xi \expect_{\rho^\sharp_T(\cdot|\YY)}[F_S] 
d\rho^\sharp_{T_2}(\YY|\ZZ)
$$
uniformly in $Z\in \Xi_a$. Then the DLR consistency of the 
specification implies
$$
\int_\Xi \expect_{\rho^\sharp_T(\cdot|\YY)}[F_S|\YY] d\mu^\sharp(\YY) = 
\lim_{T_2\to+\infty} \expect_{\rho^\sharp_{T_2}(\cdot|\ZZ)}[F_S] 
= E_{\mu^\sharp}[F_S],
$$
proving the statement.
\end{proof}

\subsection{Typical path configurations}
In this section we show that most of $\mu$'s weight is
concentrated on paths that can be characterized by a growth 
condition. The proof of this depends on a lemma which was
already shown in \cite{LM}, however, we include it here for 
making the presentation more self-contained.   
\begin{theorem}
Suppose (\ref{iuccond}) holds with $a = b = s$, and
$\mu$ is a probability measure obtained by Theorem \ref{exist} 
for $V$ and $W$. Then there is a number $C > 0$ and a functional 
$R(X)$, such that 
\begin{equation}
|X_t| \leq \left(C \log (|t| + 1)\right)^{1/(s+1)} + R(X)
\end{equation}
$\mu$-almost surely.
\label{tpb}
\end{theorem}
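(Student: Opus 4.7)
The plan is to reduce the $\mu$-almost sure bound on $|X_t|$ to a tail estimate under the reference It\^o diffusion $\nu$ through Corollary~\ref{cor:equiv-mu-nu}, and then to apply Borel-Cantelli along the grid of times $kb$, $k\in\Z$. Specifically, by the time-translation invariance of $\mu$ from Proposition~\ref{prop:infinite-measure} together with Corollary~\ref{cor:equiv-mu-nu} applied to the positive $\mathcal{F}_{[0,b]}$-measurable random variable $F=1_{\{\sup_{s\in[0,b]}|X_s|>r\}}$, for every $t\in\R$ and $r>0$ I would write
$$
\mu\!\left(\sup_{s\in[t,t+b]}|X_s|>r\right) \;\leq\; C\,\sqrt{\nu\!\left(\sup_{s\in[0,b]}|X_s|>r\right)}.
$$

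The analytic input --- and, as the theorem statement points out, the lemma borrowed from \cite{LM} --- is a stretched-exponential tail bound for the $P(\phi)_1$-process: under (\ref{iuccond}) with $a=b=s$, there exist $c_1,c_2>0$ such that
$$
\nu\!\left(\sup_{s\in[0,b]}|X_s|>r\right) \;\leq\; c_1\,e^{-c_2 r^{s+1}}
$$
for all $r$ large. The proof of this lemma, which I would state but not reprove, combines Agmon-type exponential decay of the ground state $\Psi$, intrinsic ultracontractivity of $e^{-tH_\omega}$ (which gives uniform $L^\infty$-control of the transition densities $\pi_t$ and so reduces the supremum tail to a marginal tail of the stationary law $d\omega=\Psi^2\,dx$ up to a bounded multiplicative factor), and the dissipative structure of the SDE (\ref{eq:diffusion-eq}) driving the It\^o diffusion, which is what pins down the exponent $s+1$.

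Given these two ingredients, I would choose $r_k:=(C\log(|k|+1))^{1/(s+1)}$ with $C>2/c_2$ and apply Borel-Cantelli to the events $A_k:=\{\sup_{s\in[kb,(k+1)b]}|X_s|>r_k\}$: combining the two estimates yields $\mu(A_k)\leq c_1^{1/2}(|k|+1)^{-Cc_2/2}$, which is summable over $k\in\Z$. Thus $\mu$-a.s.\ only finitely many of the $A_k$ occur, and defining
$$
R(X):=\sup_{k\in\Z}\Bigl(\sup_{s\in[kb,(k+1)b]}|X_s|-(C\log(|k|+1))^{1/(s+1)}\Bigr)_+,
$$
which is $\mu$-a.s.\ finite, delivers the desired pathwise bound after absorbing the $O(1)$ discrepancy between $\log(|k|+1)$ and $\log(|t|+1)$ for $t\in[kb,(k+1)b]$ into a slightly larger constant $C$. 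The main obstacle is the tail lemma from \cite{LM}; everything else is a standard Borel-Cantelli argument driven by Corollary~\ref{cor:equiv-mu-nu}.
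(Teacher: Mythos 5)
Your reduction is exactly the paper's: Corollary~\ref{cor:equiv-mu-nu} plus stationarity turns the stretched-exponential tail bound for the reference It\^o diffusion $\nu$ into one for $\mu$, and Borel--Cantelli along the grid then yields the pathwise logarithmic bound with $R(X)$ absorbing the finitely many exceptional blocks. The only divergence is in the key tail lemma (Lemma~\ref{ldphi}), which you cite from \cite{LM} rather than prove, and whose mechanism you attribute to intrinsic ultracontractivity and the dissipativity of the SDE (\ref{eq:diffusion-eq}); the paper instead reproves it via the Kipnis--Varadhan/Varadhan bound $\nu(\max_{0\le t\le 1}|f(X_t)|\ge N)\le \frac{3}{N}\sqrt{\mathcal{E}(f,f)+(f,f)}$ applied to a mollified indicator of $\{|x|\ge a\}$, combined with Carmona's pointwise decay $\Psi(x)\le Ce^{-\theta|x|^{s+1}}$ --- but since the paper itself notes the lemma was already established in \cite{LM}, this does not constitute a gap.
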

\begin{proof}
The strategy of proving this theorem is to derive the typical 
behaviour of $\mu$ from the typical behaviour of the reference 
process. This follows through Lemma \ref{ldphi} below. Then
combining this lemma with Corollary~\ref{cor:equiv-mu-nu} gives
\begin{equation*}
  \begin{split}
\mu\left(\max_{0 \leq t \leq 1} |X_t| \geq a\right)
& \;\le \; 
C \left[\expect_\nu(1_{\max_{0 \leq t \leq 1} |X_t| \geq a})\right]^{1/2}      
\le  c'  e^{-\theta' a^{s+1}},
  \end{split}
\end{equation*}
with some constants $c',\theta' > 0$. Thus under the stationary 
measure $\mu$ 
\begin{equation}
\mu\left(\max_{n \leq t \leq n+1} |X_t| \; \geq \; 
(k\log n)^{1/(s+1)}\right) \; \leq \; \mbox{const} 
\frac{1}{n^{k \theta'}}
\end{equation}
holds. Choosing $k$ so that $k \theta' > 1$, the 
Borel-Cantelli Lemma implies that $\mu$-almost surely
\begin{equation}
|X(t)| \; \leq \; (k \log t)^{1/(s+1)}
\end{equation}
for $t \geq T^*$, with $T^* = T^*(X)$ sufficiently large. 
Writing $R(X) = \max_{|t| \leq T^*} |X(t)|$ completes the proof. 
\end{proof}

Finally we prove the lemma used above. 
\begin{lemma}
Let $\nu$ be the measure of the It\^o-process for $V$ satisfying 
(\ref{iuccond}) with exponent $s > 2$, and $a > 0$. Then 
there exist $C > 0$ and $\theta > 0$ such that
\begin{equation}
\nu\left(\max_{0 \leq t \leq 1} |X_t| \geq a \right) 
\; \leq C \; e^{-\theta a^{s+1}}.
\end{equation}
\label{ldphi}
\end{lemma}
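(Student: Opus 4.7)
The plan is to use stationarity of $\nu$ together with the ground-state transform to reduce the computation to tail bounds on $\Psi$ and Brownian deviation estimates, following the strategy of \cite{LM}.

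First, by stationarity and the Markov property, I would write
\[
\nu\!\left(\max_{0\le t\le 1}|X_t|\ge a\right)
= \int_{\R^d}\Psi(x)^2\,\mathbb{P}_x\!\left(\max_{0\le t\le 1}|X_t|\ge a\right) dx,
\]
where $\mathbb{P}_x$ denotes the law of the It\^o diffusion started at $x$. Applying It\^o's formula to $\log\Psi(X_t)$ together with the eigenvalue equation $H\Psi=E\Psi$ yields, on $\F_{[0,1]}$, the ground-state transform identity
\[
\frac{d\mathbb{P}_x}{d\mathbb{P}_x^{\mathrm{BM}}}
= \frac{\Psi(X_1)}{\Psi(x)}\,e^{E-V_{[0,1]}(X)},
\]
with $\mathbb{P}_x^{\mathrm{BM}}$ Wiener measure started at $x$, so that
\[
\nu\!\left(\max_{0\le t\le 1}|X_t|\ge a\right)
= e^{E}\int_{\R^d}\Psi(x)\,\expect_x^{\mathrm{BM}}\!\left[\mathbf{1}_{\{\max_{[0,1]}|X|\ge a\}}\,\Psi(X_1)\,e^{-V_{[0,1]}(X)}\right] dx.
\]

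Next I would split the outer integration at $|x|=a/2$. On $\{|x|>a/2\}$ I drop the indicator, bound $\Psi(X_1)\le\|\Psi\|_\infty$, and invoke the ground-state tail estimate $\int_{|x|>a/2}\Psi(x)\,dx\le C\exp(-\theta_1 a^{s+1})$ that follows from intrinsic ultracontractivity and the polynomial confinement~(\ref{iuccond}). On $\{|x|\le a/2\}$ any Brownian path attaining $\max|X|\ge a$ must enter and spend positive time in $\{|y|\ge a/2\}$, so the Feynman--Kac factor $e^{-V_{[0,1]}(X)}$ contributes a penalty of order $\exp(-c\,a^{s}\sigma)$, with $\sigma$ the sojourn time there; combining this penalty with the reflection-principle estimate for the first passage of Brownian motion to $|y|=a/2$ and optimising over $\sigma$ produces a factor of the form $\exp(-\theta_2 a^{s+1})$. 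Adding the two contributions and setting $\theta=\min\{\theta_1,\theta_2\}$ closes the argument.

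The main difficulty is the region $\{|x|\le a/2\}$: the reflection principle alone yields only Gaussian decay $e^{-c a^2}$, which is too weak since $s+1>3$. The Feynman--Kac weight $e^{-V_{[0,1]}(X)}$ must be genuinely exploited through the confining lower bound $V(y)\ge C_1|y|^s+C_3$, via a Khas'minskii-type estimate on the sojourn time of Brownian motion in the high-potential region, in order to upgrade the decay rate from $a^2$ to $a^{s+1}$.
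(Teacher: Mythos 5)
Your route is genuinely different from the paper's. The paper makes no decomposition over starting points and no dynamical estimate at all: it applies the maximal inequality for reversible Markov processes (Lemma 1.12 of \cite{KV86}), namely $\nu(\max_{0\le t\le 1}|f(X_t)|\ge N)\le (3/N)\sqrt{\E(f,f)+(f,f)}$, to a mollified indicator $f_a$ of $\{|x|\ge a\}$, so that the whole bound reduces to $\|f_a\|^2_{L^2(d\omega)}$ and the Dirichlet form of $f_a$, both supported in $\{|x|\ge a-\eps\}$ and hence controlled purely by the pointwise decay of the ground state $\Psi$. Your far region $\{|x|>a/2\}$ plays the same role and is fine, granting the same ground-state tail that the paper invokes.

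The gap is in your near region $\{|x|\le a/2\}$. Balancing the Feynman--Kac penalty $\exp(-c\,a^s\sigma)$ against the Gaussian cost $\exp(-c'a^2/\sigma)$ of crossing an annulus of width of order $a$ in occupation time $\sigma$ is optimal at $\sigma\sim a^{1-s/2}$ and yields $\exp(-c''a^{1+s/2})$, not $\exp(-\theta_2 a^{s+1})$; since $s>2$ one has $1+s/2<s+1$, so this step does not deliver the claimed rate, and you give no argument for the stronger exponent. Nor can the trade-off be improved: already for $V\asymp|x|^2$ (Ornstein--Uhlenbeck) the maximal probability over $[0,1]$ decays like $e^{-ca^{2}}=e^{-ca^{1+s/2}}$ and not like $e^{-ca^{3}}$, and the Agmon-type scaling $a^{1+s/2}$ persists for $s>2$. (This also shows that the exponent $s+1$ can only come from the ground-state bound $\Psi(x)\le Ce^{-\theta|x|^{s+1}}$ that the paper attributes to Carmona; the standard Carmona/Agmon estimate for $V\asymp|x|^s$ gives $e^{-\theta|x|^{1+s/2}}$, so the exponent deserves scrutiny in your far region too.) To close your argument you must either settle for $1+s/2$ in place of $s+1$ throughout, or drop the pathwise estimate in the near region and use the paper's Dirichlet-form route, in which the process is never required to travel anywhere.
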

\begin{proof}
For the underlying It\^o-process we have the Dirichlet operator 
on $L^2(\R^d,d\nu)$
\begin{equation}
Lf = -\Delta f + 2 (\nabla \log \Psi, \nabla f)
\end{equation}
and Dirichlet form 
\begin{equation}
{\cal E}(f,f) = - \int f \Delta f d\omega + 
2 \int f (\nabla \log \Psi, \nabla f) d\omega,
\end{equation}
with $d\omega = \Psi^2 dx$, as before. By using Varadhan's Lemma 
(see Lemma 1.12, \cite{KV86}), for any $f \in L^2(d\omega)$ and 
every $N > 0$  
\begin{equation}
\nu \left(\max_{0\leq t\leq 1} |f(X_t)| \geq N\right) 
\; \leq \; \frac{3}{N} \sqrt{{\cal E}(f,f) + (f,f)}
\label{kv}
\end{equation}
holds. Choose $f = f_a := 1_{\{x \in \R^d: |x| \geq a\}} * \phi$ 
by picking a mollifier $\phi$ (with $||\phi||_\infty < \infty$) 
so that the above convolution is in the domain of $L$. This can be 
chosen so that the smoothing of the edges of the indicator function 
takes place in a sphere $S(a)$ of radius $a$ centred at the origin, 
i.e., with a suitable $\e > 0$ we take $f_a(x) = 1$ for $x \in \R^d 
\smallsetminus S(a+\e)$, $f_a(x) = 0$ for $x \in S(a-\e)$, and $f_a$ 
is a sufficiently smooth function ${\tilde f}_a$ otherwise. Denote 
these three domains by $D_1$, $D_2$ and $D_3$, respectively. Setting 
$N=1$ in (\ref{kv}) yields
$$
\nu \left(\max_{0 \leq t \leq 1} |X_t| \geq a \right) \leq \; 
3 \sqrt{||f_a||^2_{L^2(d\omega)} + (f_a, L f_a)_{L^2(d\omega)}}.
$$
Moreover, we have
$$
||f_a||^2_{L^2(d\omega)} = \int f_a^2 (x) d\omega(x) = 
\int_{D_1} d\omega(x) + \int_{D_3} {\tilde f}_a^2(x) d\omega(x).
$$
Under the hypothesis on $V$ the standard estimate 
$
\Psi(x) \le C e^{-\theta |x|^{s+1}}
$
holds by Carmona's results \cite{Ca78} for the ground state $\Psi$, 
with some $C,\theta > 0$. This bound further leads to
\begin{equation}
\int_{D_1}d\omega(x) \; \leq \; c e^{-\theta a^{s+1}},
\end{equation}
where $c,\theta > 0$ are independent of $a$. A similarly estimate is 
valid for $D_3$. On the other hand, since ${\tilde f}_a$ is smooth 
enough and $\max_{D_3} \{ |\nabla {\tilde f}_a|,|\Delta {\tilde f}_a|, 
\Delta {\tilde f}_a^2 \} \leq m < \infty$, we get
$$
({\tilde f}_a, L{\tilde f}_a) \leq c' e^{-\theta a^{s+1}},
$$
with suitable $c'> 0$. A similar estimate is obtained also for the 
remaining two domains. 
\end{proof}

\subsection{Mixing properties}
Since $\mu$ is constructed in a way that offers no immediate 
access to computations with this measure, it is important to 
derive further basic information on $\mu$ by using the cluster
expansion. We give here one last result of this paper. 
\begin{theorem}
Let $F,G$ be two bounded functions, the first measurable with respect 
to $\F_I$, the second with respect to $\F_J$, where $I,J$ are distinct 
intervals of the partition considered in the cluster expansion above. 
Then the estimate on the covariance 
\begin{eqnarray*}
&& \cov_\mu (F;G) = \mathbb{E}_\mu[F G] - \mathbb{E}_\mu[F] \; 
\mathbb{E}_\mu[G] \\
&& |\cov_\mu \; (F;G)| \;\leq \; \mbox{\rm{const}} \; 
\frac{\sup{|F|} \sup{|G|}} {|t-s|^\vartheta + 1}
\end{eqnarray*}
holds, where $\vartheta > 0$ and the constant prefactor is 
independent of $F,G$.
\end{theorem}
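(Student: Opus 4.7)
The plan is to exploit the cluster representation of $\mathbb{E}_\mu[\,\cdot\,]$ from Proposition~\ref{prop:infinite-measure} and isolate in the difference $\mathbb{E}_\mu[FG] - \mathbb{E}_\mu[F]\mathbb{E}_\mu[G]$ the contribution of those clusters that \emph{bridge} $I$ and $J$. By the polynomial bound on cluster activities from Proposition~\ref{pclustest}, any such bridging cluster will be penalized by a factor of order $(1+|t-s|)^{-\delta}$ with $\delta>1$, and this is precisely the source of the claimed decay.

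First, I would write out (\ref{eq:limit-eq-cluster}) for the three expectations $\mathbb{E}_\mu[FG]$, $\mathbb{E}_\mu[F]$ and $\mathbb{E}_\mu[G]$, with $S=I\cup J$, $S=I$ and $S=J$ respectively. A crucial observation is that since $F$ and $G$ are supported on distinct intervals of the partition, they are independent under the auxiliary measure $\chi$ of~(\ref{auxi}): the pieces on different $\tau_k$ are conditionally independent once the endpoint positions $x_k$ are sampled, and the $x_k$'s are integrated against the invariant measure $d\omega$. Hence $\mathbb{E}_\chi[FG]=\mathbb{E}_\chi[F]\,\mathbb{E}_\chi[G]$, and more generally $\mathbb{E}_\chi[FG\prod_i\kappa^{\Gamma_i}]=\mathbb{E}_\chi[F\prod_{i\in\mathcal{A}}\kappa^{\Gamma_i}]\,\mathbb{E}_\chi[G\prod_{i\in\mathcal{B}}\kappa^{\Gamma_i}]$ whenever the cluster family splits as $\mathcal{A}\sqcup\mathcal{B}$ with $\mathcal{A}$ not touching $J$ and $\mathcal{B}$ not touching $I$.

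Next I would partition every cluster family appearing in the expansion of $\mathbb{E}_\mu[FG]$ according to whether each cluster meets $I$, $J$, or both. Using the logarithmic representation~(\ref{eq:cluster-corr}) of the correlation functions $f^\Lambda$, one has the inclusion--exclusion identity
\[
\log f^{I\cup J}-\log f^{I}-\log f^{J}= -\sum_{\Gamma\ \text{bridging}\ I,J}(\text{explicit }\phi^T\text{-sum}),
\]
so that rewriting $\mathbb{E}_\mu[F]\mathbb{E}_\mu[G]$ in the same cluster variables as $\mathbb{E}_\mu[FG]$ causes every non-bridging contribution to cancel identically. What survives in $\cov_\mu(F,G)$ is therefore a series indexed by configurations containing at least one cluster $\Gamma$ with $\bar\Gamma\cap I\neq\emptyset$ and $\bar\Gamma\cap J\neq\emptyset$.

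Finally I would estimate this surviving series. Any bridging cluster must contain either a direct contour bond $(\tau_i,\tau_j)$ with one endpoint near $I$ and one near $J$, or a connected sequence of bonds and chains whose total span is at least $|t-s|$. Proposition~\ref{pclustest} furnishes a factor $(1+b|i-j-1|)^{-\delta}$ for each contour bond and $\varepsilon^{|\bar\rho|}$ for each chain. Running the same tree-graph resummation used in the proof of Proposition~\ref{p2}, but with one edge singled out to enforce the bridging condition, yields a bound of the form $c\,(1+|t-s|)^{-\vartheta}$ with $\vartheta=\delta-1>0$; the prefactor $\sup|F|\sup|G|$ comes out of $\mathbb{E}_\chi[\,\cdot\,]$ uniformly in the cluster family, and the bound~(\ref{eq:cluster-corr-estimate}) on $|f^\Lambda|$ controls the normalising ratios. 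The principal obstacle will be the combinatorial matching of the three cluster expansions so that the non-bridging terms cancel exactly: although conceptually transparent through the logarithmic form, the step requires careful tracking of the $\phi^T$ factors and of the disjointness constraints before Proposition~\ref{pclustest} can be invoked to deliver the decay.
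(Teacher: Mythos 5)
Your proposal is correct and follows essentially the same route as the paper: both expand $\mathbb{E}_\mu[FG]$, $\mathbb{E}_\mu[F]$, $\mathbb{E}_\mu[G]$ via (\ref{eq:limit-eq-cluster}), use the factorization of $\mathbb{E}_\chi$ over distinct intervals, cancel all non-bridging contributions, and extract the decay from the polynomial bound on bridging cluster activities together with the estimate $|f^{A_1\cup A_2}-f^{A_1}f^{A_2}|\le \mathrm{const}\,2^{|A_1|+|A_2|}/\dist(A_1,A_2)^{\zeta}$ (which the paper quotes from the general cluster-expansion results and you propose to derive from the logarithmic representation of the correlation functions --- the same mechanism). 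The only difference is presentational: the paper writes out the cancellation as an explicit eight-term algebraic decomposition rather than through the inclusion--exclusion identity for $\log f$.
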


\medskip
\begin{proof}
First recall formula~(\ref{eq:limit-eq-cluster}) which applied to $F$ 
(and similarly to $G$) gives
\begin{equation}
\mathbb{E}_\mu [F] = \mathbb{E}_\nu [F] f^{I} + 
\sum_{\Ga_0: I \cap \Ga_0^* \neq \emptyset} K^{\Ga_0}(F) f^{\Ga_0^*},
\label{F}
\end{equation}
where we let $K_{\Ga_0}(F) = \mathbb{E}_{\chi} [F \kappa_{\Ga_0}] $. 
Furthermore, consider $f^A$ estimated as before like $|f^A| \leq 
2^{|A|}$. For $A_1 \cap A_2 = \emptyset$ we have
\begin{equation}
|f^{A_1 \cup A_2} - f^{A_1} f^{A_2}| \; \leq \; \mbox{const} \; 
\frac{2^{|A_1| + |A_2|}}{\dist(A_1,A_2)^\zeta},
\label{cost}
\end{equation}
with some $\zeta > 0$. This estimate can easily be obtained by the 
general results in \cite{MM}. Now we  write
\begin{equation}
\begin{split}
\mathbb{E}_\mu [F G] 
& =  
\mathbb{E}_\nu [F] \; \mathbb{E}_\nu [G] f^{I \cup J} \\
& \qquad + 
\sum_{\Ga_1 \atop  \Ga^*_1 \cap I \neq \emptyset , \Ga^*_1 \cap J 
= \emptyset} \mathbb{E}_{\chi} [F \kappa_{\Ga_1}] \; 
\mathbb{E}_\chi [G]  f^{J \cup \Ga^*_1}
+ \sum_{\Ga_2 \atop  \Ga^*_2 \cap J \neq \emptyset , \Ga^*_2 \cap I 
= \emptyset} \mathbb{E}_{\chi} [G \kappa_{\Ga_2}] \; 
\mathbb{E}_\chi [F]  f^{I \cup \Ga^*_2}
\\
& \qquad + 
\sum_{\Ga_1,\Ga_2 : \Ga^*_1 \cap \Ga^*_2 = \emptyset \atop  \Ga^*_2 
\cap J \neq \emptyset , \Ga^*_2 \cap I \neq \emptyset}  
\mathbb{E}_\chi [F \kappa_{\Ga_1}] \; 
\mathbb{E}_{\chi} [G \kappa_{\Ga_2}]  f^{\Ga^*_1 \cup \Ga^*_2}
\sum_{\Ga \atop  \Ga^* \cap J \neq \emptyset , \Ga^* \cap I \neq 
\emptyset}  \mathbb{E}_\chi [F G \kappa_{\Ga}] f^{\Ga^*}.
\end{split}
\end{equation}
From here and (\ref{F}) we obtain
\begin{equation}
\begin{split}
\cov_\mu(F; G) 
& = 
\mathbb{E}_\chi [F] \; \mathbb{E}_\chi [G](f^{I\cup J}-f^{I}f^{J}) 
\\ & \qquad + 
\sum_{\Ga_1 \atop  \Ga^*_1 \cap I \neq \emptyset , \Ga^*_1 \cap J 
= \emptyset} \mathbb{E}_{\chi} [F \kappa_{\Ga_1}] \; 
\mathbb{E}_\chi [G]  (f^{J \cup \Ga^*_1}- f^{J} f^{\Ga^*_1})
 \\ & \qquad 
+ \sum_{\Ga_2 \atop  \Ga^*_2 \cap J \neq \emptyset , \Ga^*_2 \cap I 
= \emptyset} \mathbb{E}_{\chi} [G \kappa_{\Ga_2}] \; 
\mathbb{E}_\chi [F]  (f^{I \cup \Ga^*_2}-f^{I}f^{\Ga^*_2})
\\
& \qquad + 
\sum_{\Ga_1,\Ga_2 : \Ga^*_1 \cap \Ga^*_2 = \emptyset \atop  \Ga^*_2 
\cap J \neq \emptyset , \Ga^*_2 \cap I \neq \emptyset}  
\mathbb{E}_\chi [F \kappa_{\Ga_1}] \; \mathbb{E}_{\chi} 
[G \kappa_{\Ga_2}]  (f^{\Ga^*_1 \cup \Ga^*_2}-f^{\Ga^*_1}f^{\Ga^*_2})
 \\ & \qquad
+ 
\sum_{\Ga \atop  \Ga^* \cap J \neq \emptyset , \Ga^* \cap I \neq 
\emptyset}  \mathbb{E}_\chi [F G \kappa_{\Ga}] f^{\Ga^*}
 \\ &\qquad
- 
\sum_{\Ga \atop  \Ga^* \cap J \neq \emptyset , \Ga^* \cap I \neq 
\emptyset}  \mathbb{E}_\chi [F ] \mathbb{E}_\chi [ G \kappa_{\Ga}] 
f^{\Ga^*} f^I f^{\Ga^*}
 \\ & \qquad
- 
\sum_{\Ga \atop  \Ga^* \cap J \neq \emptyset , \Ga^* \cap I \neq 
\emptyset}  \mathbb{E}_\chi [F \kappa_{\Ga}] \mathbb{E}_\chi [ G ] 
f^{\Ga^*} f^J f^{\Ga^*}
 \\ & \qquad 
- \sum_{\Ga_1,\Ga_2 : \Ga^*_1 \cap \Ga^*_2 \neq \emptyset \atop  
\Ga^*_2 \cap J \neq \emptyset , \Ga^*_2 \cap I \neq \emptyset}  
\mathbb{E}_\chi [F \kappa_{\Ga_1}] \; \mathbb{E}_{\chi} 
[G \kappa_{\Ga_2}]  f^{\Ga^*_1}f^{\Ga^*_2}
\end{split}
\end{equation}
For estimating the first four terms at the right hand side above we 
use (\ref{cost}) along with the bound
\begin{equation}
|\mathbb{E}_{\p} [F \kappa_{\Ga}]| \; \leq \; \frac{\sup |F|}
{(\diam \Ga^*)^{\zeta'} + 1} E_{\Ga}(\delta',\eps')
\end{equation}
$i = 1, 2$, where $E_{\Ga}(\delta',\eps')$ is the function 
appearing at the right hand side of estimate~(\ref{clustest}) with 
slightly modified entries ($\delta',\eps'$ instead of $\delta,\eps$; 
$\delta' > 1$) so that (\ref{convv}) still holds. Here $\zeta' = 
\delta - \delta' > 0$, and we used in addition that 
\begin{equation}
\frac{1}{(\diam \Ga^*)^{\zeta'} + 1} \; 
\frac{1}{\dist(I,\Ga^*)^\zeta + 1} \; \leq \; 
\frac{1}{\dist(I,J)^\vartheta + 1}
\end{equation}
whenever $J \cap \Ga^* \neq \emptyset$ (similarly for $I$), and
\begin{equation}
\frac{1}{(\diam \Ga^*_1)^{\zeta'} + 1} \; 
\frac{1}{(\diam \Ga^*_2)^{\zeta'} + 1} \; 
\frac{1}{\dist(\Ga^*_1,\Ga^*_2)^\zeta} \; \leq \; 
\frac{1}{\dist(I,J)^\vartheta + 1}
\end{equation}
for $I \cap \Ga^*_1 \neq \emptyset$, $J \cap \Ga^*_2 \neq \emptyset$, 
and $\vartheta = \min \{\zeta,\zeta'\} > 0$.

Next, in the fifth term above we use that $\diam \Ga^* \geq \dist(I,J)$ 
whenever $I \cap \Ga^* \neq \emptyset$, $J \cap \Ga^* \neq \emptyset$, 
and that
\begin{equation}
|\mathbb{E}_{\chi} [F G \kappa_\Ga]| \; \leq \; \sup|F| \sup|G| 
\frac{E_\Ga(\delta',\eps')} {(\diam \Ga^*)^{\zeta'} + 1}.
\end{equation}
For the remaining three terms in the sum above we apply the same argument. 
Thus for the full sum the corresponding bounds become
\begin{equation}
\mbox{const} \; \frac{\sup|F| \sup|G|}{\dist(I,J)^\vartheta + 1} 
E_{\Ga_1}(\delta',\eps') 
E_{\Ga_2}(\delta',\eps') \; 2^{|\Ga_1^*| + |\Ga^*_2|}, 
\end{equation}
whenever $I \cap \Ga^*_1 \neq \emptyset$, $J \cap \Ga^*_2 \neq \emptyset$, 
respectively 
\begin{equation}
\mbox{const} \; \frac{\sup|F| \sup|G|}{\dist(I,J)^\vartheta + 1} 
E_{\Ga}(\delta',\eps') \; 2^{|\Ga^*|} 
\end{equation}
in the other cases. Then using  Proposition~\ref{p2} we can prove 
boundedness of the sums over  $\Ga_1, \Ga_2$ or $\Ga$, concluding the 
proof.
\end{proof}

\end{document}